\definecolor{TSUYUKUSA}{RGB}{46, 169, 223}
\definecolor{KURENAI}{RGB}{203, 27, 69}
\newcommand {\mytitle} {Hypercontractivity for Quantum Erasure Channels via Variable Multipartite Log-Sobolev Inequality}
\renewcommand{\section}{\@startsection {section}{1}{\z@}%
             {-3.5ex \@plus -1ex \@minus -.2ex}%
             {2.3ex \@plus .2ex}%
             {\normalfont\Large\scshape\bfseries}}
\renewcommand{\subsection}{\@startsection{subsection}{2}{\z@}%
             {-3.25ex\@plus -1ex \@minus -.2ex}%
             {1.5ex \@plus .2ex}%
             {\normalfont\large\scshape\bfseries}}
\renewcommand{\subsubsection}{\@startsection{subsubsection}{2}{\z@}%
             {-3.25ex\@plus -1ex \@minus -.2ex}%
             {1.5ex \@plus .2ex}%
             {\normalfont\normalsize\scshape\bfseries}}
\def\X{\CMcal{X}}
\def\Y{\CMcal{Y}}
\def\Z{\CMcal{Z}}
\theoremstyle{plain}
\newtheorem{theorem}{Theorem}[section]
\newtheorem{lemma}[theorem]{Lemma}
\newtheorem{cor}[theorem]{Corollary}
\theoremstyle{definition}
\newtheorem{definition}[theorem]{Definition}
\newtheorem{claim}[theorem]{Claim}
\newtheorem{remark}[theorem]{Remark}
\newtheorem{fact}[theorem]{Fact}
\newcommand {\minusspace} {\: \! \!}
\newcommand {\Fn} [2] {\ensuremath{ #1 \minusspace \Br{ #2 } }}
\newcommand{\reals}{{\mathbb R}}
\newcommand {\set} [1] {\ensuremath{ \left\lbrace #1 \right\rbrace }}
\newcommand{\normthree}[1]{{\left\vert\kern-0.25ex\left\vert\kern-0.25ex\left\vert #1 \right\vert\kern-0.25ex\right\vert\kern-0.25ex\right\vert}}
\newcommand {\br} [1] {\ensuremath{ \left( #1 \right) }}
\newcommand {\Br} [1] {\ensuremath{ \left[ #1 \right] }}
\newcommand {\norm} [1] {\ensuremath{ \left\| #1 \right\| }}
\newcommand {\normsub} [2] {\ensuremath{ \norm{#1}_{#2} }}
\newcommand {\abs} [1] {\ensuremath{ \left| #1 \right| }}
\newcommand {\bra} [1] {\ensuremath{ \left\langle #1 \right| }}
\newcommand {\ket} [1] {\ensuremath{ \left| #1 \right\rangle }}
\newcommand {\ketbratwo} [2] {\ensuremath{ \left| #1 \middle\rangle \middle\langle #2 \right| }}
\newcommand {\ketbra} [1] {\ketbratwo{#1}{#1}}
\newcommand {\defeq} {\ensuremath{ = }}
\newcommand {\prob} [1] {\Fn{\Pr\,}{#1}}
\DeclareMathOperator*{\bigE}{\mathbb{E}}
\newcommand {\expec} [2] {\Fn{\bigE_{\substack{#1}}}{#2}}
\newcommand {\Tr} {\ensuremath{ \mathrm{Tr} }}
\newcommand {\id} {\ensuremath{\mathds{1}}}
\tikzset{meter/.append style={draw, inner sep=10, rectangle, font=\vphantom{A}, minimum width=30, line width=.8,
 path picture={\draw[black] ([shift={(.1,.3)}]path picture bounding box.south west) to[bend left=50] ([shift={(-.1,.3)}]path picture bounding box.south east);\draw[black,-latex] ([shift={(0,.1)}]path picture bounding box.south) -- ([shift={(.3,-.1)}]path picture bounding box.north);}}}
\newcommand {\Density} {\ensuremath{\mathcal{D}}} 
\newcommand {\PSD} {\ensuremath{\operatorname{Pos}}} 
\newcommand {\Herm} {\ensuremath{\mathcal{H}}} 
\newcommand {\Matrix} {\ensuremath{\mathcal{M}}} 
\newcommand {\Errsym} {\ensuremath{\mathcal{E}}}
\newcommand {\Err} {\ensuremath{\ketbra{\Errsym}}} 
\newcommand {\Enteps} {\ensuremath{\mathrm{Ent}_\ep}} 
\newcommand {\Ent} {\ensuremath{\mathrm{Ent}}}
\newcommand {\Hmin}[1] {\ensuremath{H_{\min}\left(#1\right)}}
\newcommand{\EPRN} {\ensuremath{\Phi^{\otimes n}}}
\newcommand {\wsum}[2] {\ensuremath{\sum\limits_{S\subseteq[#1]} (1-\ep)^{#1-\abs{S}}\ep^{\abs{S}} #2}}
\newcommand {\wsumr}[3] {\ensuremath{\sum\limits_{S\subseteq[#1]} (1-\ep)^{#2-\abs{S}}\ep^{\abs{S}} #3}}
\newcommand {\ps}[2] {\ensuremath{#1_{#2^c}}} 
\newcommand {\goodform} {\cref{app:def:goodform} form}
\newcommand {\Holder} {H\"{o}lder}
\renewcommand {\tr} {\ensuremath{\tau}} 
\def\f{\frac}
\DeclarePairedDelimiter{\@lrp}{(}{)}
\DeclarePairedDelimiter{\@lrb}{[}{]}
\def\p{\@lrp*}
\def\lrb{\@lrb*}
\def\lda{\lambda}
\def\bda{\bm{\lda}}
\def\ep{\varepsilon}
\def\diag{\textbf{diag}}
\def\co{J}
\def\ct{K}
\newcommand {\QEC} {\ensuremath{\mathcal{D}_\ep}} 
\newcommand {\QECN} {\ensuremath{\QEC^{\otimes n}}} 
\newcommand {\QECR} {\ensuremath{\mathcal{D}}} 
\newcommand {\QECRN} {\QECR^{\otimes n}} 
\newcommand {\Piep} {\ensuremath{\Pi_\ep}} 
\newcommand {\PiepN} {\ensuremath{\Pi_\ep^{\otimes n}}} 
\begin{document}

\title{\mytitle\\[2ex]}


\begin{Anonymous}
\author{
    Zongbo Bao\thanks{\scriptsize State Key Laboratory for Novel Software Technology, New Cornerstone Science Laboratory, Nanjing University, Nanjing, China. Email: baozb0407@gmail.com.}
    \and Yangjing Dong\thanks{\scriptsize State Key Laboratory for Novel Software Technology, New Cornerstone Science Laboratory, Nanjing University, Nanjing, China. Email: dongmassimo@gmail.com.}
    \and Fengning Ou\thanks{\scriptsize State Key Laboratory for Novel Software Technology, New Cornerstone Science Laboratory, Nanjing University, Nanjing, China. Email: reverymoon@gmail.com.}
    \and Penghui Yao\thanks{\scriptsize State Key Laboratory for Novel Software Technology, New Cornerstone Science Laboratory, Nanjing University, Nanjing, China. Email: phyao1985@gmail.com.}~\thanks{\scriptsize Hefei National Laboratory, Hefei 230088, China.}
}
\end{Anonymous}

\clearpage\maketitle

\begin{abstract}
We prove an almost optimal hypercontractive inequality for products of quantum erasure channels, generalizing the hypercontractivity for classical binary erasure channels. To our knowledge, this is the first tensorization-type hypercontractivity bound for quantum channels with no fixed states. The traditional inductive arguments for classical hypercontractivity cannot be generalized to the quantum setting due to the nature of the non-commutativity of matrices. To overcome the difficulty, we establish a novel quantum log-Sobolev inequality for Bernoulli entropy, which includes the classical log-Sobolev inequality and the quantum log-Sobolev inequality as one-partite cases. To our knowledge, its classical counterpart is also unknown prior to this work. We establish a connection between our quantum log-Sobolev inequality and the hypercontractivity bound for quantum erasure channels via a refined quantum Gross' lemma, extending the analogous connection between the quantum log-Sobolev inequality and the hypercontractivity for qubit unital channels.  As an application, we prove an almost tight bound (up to a constant factor) on the classical communication complexity of two-party common randomness generation assisted with erased-noisy EPR states, generalizing the tight bound on the same task assisted with erased-noisy random strings due to Guruswami and Radhakrishnan.
\end{abstract}


\section{Introduction}\label{section:Introduction}
The notion of \emph{hypercontractivity} originates from quantum field theory~\cite{nelson1966quartic,cmp/1103840511,nelson1973free,SIMON1972121} and functional analysis~\cite{paley1932remarkable,Bonami1968,kiener1969uber,bonami1970etude}, which has a rich history of research(see~\cite[Notes in Chapter 9]{O14ABF} for more background and history). Hypercontractive inequalities nowadays are also a fundamental tool in the analysis of Boolean functions, which found wide applications in various areas of theoretical computer science~\cite{ KKL,10.1145/174130.174138,mossel2010noise,doi:10.1137/S0097539705447372,mossel2010noise}.

A typical hypercontractive inequality is the Bonami-Beckner inequality~\cite{Bonami1968,a329bd8a-47e2-3ad4-877f-4f9e7e7cccfa}. For a fixed $\rho\in[0,1]$, consider the noise operator on the space of all functions $f:\set{0,1}^n\rightarrow\mathbb{R}$, defined by
\[\br{T_{\rho}(f)}(x)=\expec{y}{f(y)},\]
where the expectation is taken over $y$ obtained from $x$ by flipping each bit independently with probability $(1-\rho)/2$. Intuitively, $T_{\rho}$ ``smooths'' the function $f$. That is, the peaks of $f$ are smoothed out in $T_{\rho}(f)$. Consider the $p$-norm of $f$: $\normsub{f}{p}=\br{\expec{x}{|f(x)|^p}}^{1/p}$, where the expectation is over the uniform distribution on $\set{0,1}^n$. The $p$-norm is monotone non-decreasing with respect to $p$. The hypercontractive inequality captures the smoothing effect of $T_{\rho}$ in terms of $p$-norms. More precisely, we have
\begin{equation}\label{eqn:HC}
  \norm{T_{\rho}(f)}_q\leq\norm{f}_p,
\end{equation}
as long as $1\leq p\leq q$ and $\rho\leq\sqrt{(p-1)/(q-1)}$. It has been further generalized to the non-binary and non-uniform settings~\cite{PawelWolff2007}.

Hypercontractive inequalities in non-commutative settings have received increasing attention since the quantum computing and quantum information technologies emerge \cite{10.1215/S0012-7094-75-04237-4,cmp/1104253198,OLKIEWICZ1999246,4690981,kastoryano2013quantum,king2014hypercontractivity,6a361ba5-4637-3342-86c3-f110678a9132,10.1063/5.0056388}. To put the classical hypercontractivity in the context of quantum computing, let's reformulate a Boolean function $f:\set{0,1}^n\rightarrow\mathbb{R}$ as a $2^n\times 2^n$ diagonal matrix with diagonal entries $f(x)$, i.e., $M_f=\sum_{x\in\set{0,1}^n}f(x)\ketbra{x}$. Then $T_{\rho}$ is equivalent to a product of quantum depolarizing channels acting on
$M_f$, and $\norm{f}_p$ is the normalized Schatten $p$-norm of $M_f$, which will be defined later. Thus it is natural to consider the hypercontractivity bound for the product of depolarizing channels acting on an arbitrary operator.
The main challenge in the non-commutative case is the tensorization property, i.e.,
to generalize the hypercontractive inequality for a product of $n$ quantum channels,
from the hypercontractive inequality for a local single copy quantum channel,
which is easy in the classical commutative case.
Kastoryano and Temme~\cite{kastoryano2013quantum} proved a hypercontractive inequality for the product of depolarizing channels. Furthermore, King~\cite{king2014hypercontractivity} generalized this result to the products of all {\em unital} qubit channels, which map identity matrices to identity matrices. Recently, Beigi~\cite{10.1063/5.0056388} further improved the hypercontractivity bound for the product of depolarizing channels with better parameters. Researchers have also established several variants of non-commutative hypercontractive inequalities, which have found applications in quantum communication complexity~\cite{10.1145/1250790.1250866},
quantum coding~\cite{4690981,arunachalam2023matrix}, quantum Markov semigroups~\cite{Bardet2022-bg} and quantum non-local games~\cite{doi:10.1137/20M134592X,qin_et_al:LIPIcs.ICALP.2023.97}, etc.


Besides the hypercontractivity for product of unital channels studied in~ \cite{king2014hypercontractivity,10.1063/5.0056388},
there are some results concerning the hypercontractivity on 
 non-unital channels in a different point of view 
 \cite{OLKIEWICZ1999246, Temme_2014,MHSFW16nonunital}.
They focused on the hypercontractive inequalities for an evolution under $L_p$ 
 norm with respect to the fixed state of the evolution, a.k.a. the stationary state of the evolution.

In this paper, we focus on the \emph{quantum erasure channel} (QEC). A QEC acts on the input qubit and outputs an error with probability $\ep$, and leaves the input state unchanged with probability $1-\ep$.
Formally, given an input qubit $\rho$, we define the quantum erasure channel $\QEC$ as
\begin{equation}\label{eq:QEC}
  \QEC(\rho)=(1-\ep)\cdot\rho+\ep\cdot\Err.
\end{equation}
In case of the error (with probability $\ep$),
the input state $\rho$ is replaced by the state $\Err$ indicating an erasure error.
See~\cref{fig:ab:bec+qec} for an illustration.
The QEC is one of the most fundamental quantum channels, which have received extensive studies. 
There have been various quantum error correcting codes proposed for the quantum erasure channel \cite{grassl1997codes,DZ13quantumerasure,7541599,DZ20quantumerasure,ZOJ23quantumerasure}.
Besides, they are one of the few quantum channels whose asymptotic capacity for faithful transmission can be computed exactly \cite{bennett1997capacities}.

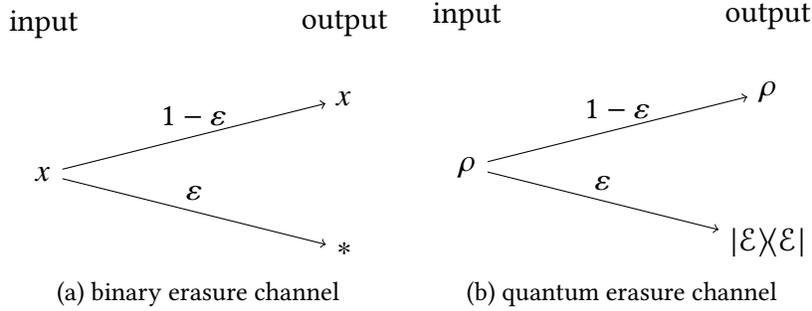
\begin{figure}
\centering
\subfloat[binary erasure channel]{
  \begin{tikzpicture}
    \node[] (li) at (0, 1) {input};
    \node[] (lo) at (4, 1) {output};
    \node[] (ix) at (0, -1) {$x$};
    \node[] (ox) at (4, 0) {$x$};
    \node[] (oe) at (4, -2) {$*$};
    \draw[->] (ix) -- node[above] {$1-\ep$} (ox);
    \draw[->] (ix) -- node[above] {$\ep$} (oe);
  \end{tikzpicture}
}\label{fig:a:bec}
\subfloat[quantum erasure channel]{
  \begin{tikzpicture}
    \node[] (li) at (0, 1) {input};
    \node[] (lo) at (4, 1) {output};
    \node[] (ix) at (0, -1) {$\rho$};
    \node[] (ox) at (4, 0) {$\rho$};
    \node[] (oe) at (4, -2) {$\Err$};
    \draw[->] (ix) -- node[above] {$1-\ep$} (ox);
    \draw[->] (ix) -- node[above] {$\ep$} (oe);
  \end{tikzpicture}
}\label{fig:b:qec}
\caption{Classical and quantum erasure channel. With probability $\ep$, error occurs and the input bit/state is erased.
  In the quantum erasure channel, the state $\Err$ indicating error replaces $\rho$ in case of erasure.}\label{fig:ab:bec+qec}
\end{figure}

In this paper, we establish the first hypercontractivity bound for the tensor product of QECs. 
It is interesting to note that an almost optimal hypercontractivity bound for quantum depolarizing channels
can be obtained from the hypercontractivity for QECs established in this paper\footnote{See \cref{appendix-proof-to-depolarizing-channel} for a proof}.
The hypercontractivity for classical binary erasure channels has been proved by  
Nair and Wang \cite{7541363} recently, which have found applications in common randomness generation \cite{guruswami_et_al:LIPIcs:2016:5845}  and property testing \cite{bogdanov_et_al:LIPIcs.ICALP.2021.33}.
Recently, Eldan, Wigderson and Wu~\cite{10.1145/3564246.3585205} have independently proved a hypercontractive inequality for the binary erasure channel (in a different language), which was used to give a sharp version of the ``it ain't over till it's over'' theorem~\cite{mossel2010noise}.

QECs  are non-unital and do not have fixed points as the input space and output space have different dimensions. Thus they are not covered by existing conclusions from 
\cite{king2014hypercontractivity,10.1063/5.0056388,OLKIEWICZ1999246, Temme_2014,MHSFW16nonunital}. To our knowledge, this is the first quantum channels with no fixed states, whose tensorized hypercontractivity is established.


The most well-known tensorization proofs for classical hypercontractivities are via the inductive argument on the length of the input.
For quantum erasure channels, by combining the inductive argument and the norm compression inequality, we could solve the case $1\le p\le 2\le q$.
However, due to the nature of the norm compression for the case $1\le p\le q\le 2$, this inductive argument fails to reach any reasonable conclusion.
To remedy it, the authors in \cite{OLKIEWICZ1999246,king2014hypercontractivity,10.1063/5.0056388} use a new framework for quantum hypercontractivity, which turns out to be very successful for several quantum channels.  A crucial component in this framework is the equivalence between the {\em log-Sobolev inequality} and hypercontractive inequalities via the {\em Gross' lemma}~\cite{10.1215/S0012-7094-75-04237-4,kastoryano2013quantum}.

Log-Sobolev inequalities are the functional inequalities that upper bound the entropy of a function by their Dirichlet forms.
Classical log-Sobolev inequalities are amongst the most studied functional inequalities for semigroups~\cite{BH99general, FMP13general, IM2014general,doi:10.1142/S0219025715500113,BB21general,FOW2022general}, which turn out to be a versatile tool to analyze mixing time of various Markov processes~\cite{bakry2014analysis,CGM19general, CCN21general}.
Olkiewicz and Zegarlinski~\cite{OLKIEWICZ1999246} introduced the log-Sobolev inequality in a general non-commutative setting,
which is further studied as \emph{quantum log-Sobolev inequalities} on finite dimensional state spaces~\cite{kastoryano2013quantum,Temme_2014}.
In quantum log-Sobolev inequalities, the entropy is defined on all quantum operators,
using a generalized definition of the entropy on functions in the commutative case.
This is in contrast with the standard quantum information theory definition of entropy defined primary for quantum states.
Quantum log-Sobolev inequalities have played a crucial role in the proof of hypercontractivity bounds for unital qubit channels~\cite{king2014hypercontractivity}
and have also found applications in the study of quantum systems~\cite{Carlen2020,PhysRevLett.130.060401}, noisy quantum devices~\cite{StilckFranca2021}, quantum Gibbs samplers~\cite{Kastoryano2016}, quantum Markov semigroups~\cite{Bardet2022-bg}, etc.

To our knowledge, all previous classical or quantum log-Sobolev inequalities are concerned with the systems with a fixed size (i.e., number of bits in the classical system or the number of qubits in the quantum system). However, the size of the system varies when considering BECs because of the extra error dimension  $\ket{\mathcal{E}}$ in the output system. Hence, we establish a {\em quantum log-Sobolev inequality for Bernoulli entropy}, where the size of the system varies subject to erasure noise. It includes Kastoryano and Temme's quantum log-Sobolev inequality~\cite{kastoryano2013quantum} as a single partite case.
Our quantum log-Sobolev inequality for Bernoulli entropy
 is essential to the proof of our QEC hypercontractivity, and is of independent interest as well. To our knowledge, even the classical analog of our log-Sobolev inequality is unknown. Additionally, the quantum Gross' lemma~\cite{10.1215/S0012-7094-75-04237-4,king2014hypercontractivity}, which bridges the hypercontractivity and log-Sobolev inequalities, is replaced by a refined inequality connecting the Schatten $p$ norm of an operator and the Schatten $p$ norm of its reduced operator. This lemma is used to induce a hypercontractivity bound for the QEC from the quantum log-Sobolev inequality for Bernoulli entropy.

As an application of our hypercontractive inequality for the QEC, we prove an almost tight bound (up to a constant factor) on the classical communication complexity of the two-party common randomness generation problem,
where maximally entangled states affected by the QEC are shared among the communicating parties. Common randomness generation, a classic topic in information theory and distributed computing, was first raised by Maurer~\cite{maurer1993secret} and Ahlswede and Csisz\'ar~\cite{ahlswede1998common}, leading to numerous subsequent works.
Devetak and Winter \cite{devetak2004distilling} introduced the problem of entanglement-assisted common randomness generation,
and studied the case of classical-quantum correlations.
Recently, Lami, Regula, Wang and Wilde~\cite{10161613} gave efficiently computable upper bounds on the LOCC assisted distillable randomness of a bipartite quantum state.
Also, an upper bound in terms of the fidelity-based smooth min-relative entropy was given by Nuradha and Wilde~\cite{nuradha2023fidelitybased}.
The idea of these two results are strongly linked to those in~\cite{10005080}. 
Readers may refer to the excellent survey~\cite{8863950} and the references therein.

Most of these works assume that the players share only a limited amount of noisy correlation.
In contrast, in this work we assume the noisy correlations are free and unlimited,
and try to minimize the classical communication.
Guruswami and Radhakrishnan~\cite{guruswami_et_al:LIPIcs:2016:5845} investigated the communication complexity of common randomness generation when the parties share unlimited but non-perfect correlation. They considered both the binary symmetric channel and the binary erasure channel and established tight tradeoffs among the communication complexity, min-entropy of the common randomness and errors for both cases.
Dong and Yao~\cite{DY2023} investigated common randomness generation when the players share unlimited depolarized noisy EPR states, and they also established a tight tradeoff among the classical communication complexity, min-entropy of the common randomness and errors. They also establish a tradeoff when quantum communication is allowed, which is not known to be tight.
However, the communication complexity of common randomness generation when the players sharing maximally entangled states effected by the quantum erasure channel remains unknown.
For the quantum erasure channel, this is due to a lack of a tensorized hypercontractivity bound,
which is a key component of all these previous proofs on the communication complexity of common randomness generation.
In this work, by using our hypercontractivity bound for the product of quantum erasure channel,
we are able to prove a communication lower bound when the players share maximally entangled states affected by the quantum erasure channel. This lower bound matches the classical communication upper bound of ~\cite{guruswami_et_al:LIPIcs:2016:5845}, where the players share classical correlation with the binary erasure channel, and thus is tight up to constant a factor.

\subsection{Our Results}
\paragraph{Hypercontractivity}
Our main result is a hypercontractivity bound for the product of quantum erasure channels.
Before illustrating our main result, we first state the hypercontractivity bound for the classical binary erasure channel proved by Nair and Wang~\cite{7541363}.
A similar inequality has been independently proved by Eldan, Wigderson and Wu~\cite{10.1145/3564246.3585205}.
\begin{theorem}[\cite{7541363}]
  Let $f: \set{-1,1}^{n}\to\mathbb{R}$ be a function,
  $X$ be a random variable uniformly distributed over $\set{-1,1}^{n}$
  and $Y=\operatorname{BEC}_\ep(X)$.
  Define the function $g: \set{-1,1,*}^n\to\mathbb{R}$ as
  \begin{equation}
    g(y) = \expec{}{f(X)\mid Y=y}.
  \end{equation}
  Then for any $\ep\in[0,1], p,q>1$ satisfying $1-\ep\le(p-1)/(q-1)$, it holds that
  \begin{equation}\label{intro:bec:hyper}
    \normsub{g(Y)}{q}\le\normsub{f(X)}{p}
  \end{equation}
  where the $\lambda$-norm is defined as
  \begin{equation}
    \normsub{Z}{\lambda}\defeq\expec{}{\abs{Z}^\lambda}^{\frac{1}{\lambda}}.
  \end{equation}
\end{theorem}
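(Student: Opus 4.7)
The plan is to proceed by induction on $n$, with the real content in the base case $n=1$---the two-point inequality---while the inductive step is a standard tensorization via Minkowski's integral inequality. For the base case, $f:\set{-1,1}\to\reals$ is specified by $u=f(1)$ and $v=f(-1)$, the smoothed function satisfies $g(1)=u$, $g(-1)=v$, $g(*)=(u+v)/2$, and the inequality reduces to
\[
\br{\f{1-\ep}{2}\abs{u}^q+\f{1-\ep}{2}\abs{v}^q+\ep\abs{\f{u+v}{2}}^q}^{1/q}\le\br{\f{\abs{u}^p+\abs{v}^p}{2}}^{1/p}.
\]
Using homogeneity, the $u\leftrightarrow v$ symmetry, and sign flips, I would reduce to the one-parameter families $(u,v)=(1+t,1-t)$ and $(u,v)=(1,-s)$ with $t,s\in[0,1]$. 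In the first family both sides equal $1$ at $t=0$, and Taylor expansion yields $\normsub{f}{p}^p\approx 1+\tfrac{p(p-1)}{2}t^2$ and $\normsub{g}{q}^q\approx 1+\tfrac{(1-\ep)q(q-1)}{2}t^2$; raising the first to the power $q/p$ and comparing coefficients shows that second-order agreement is tight exactly when $1-\ep=(p-1)/(q-1)$, matching the hypothesis. I would upgrade this infinitesimal statement to a global inequality by setting $\Phi(t)=\normsub{f}{p}^q-\normsub{g}{q}^q$, noting $\Phi(0)=\Phi'(0)=0$, and arguing $\Phi\ge 0$ on $[0,1]$ via a careful higher-derivative/convexity analysis; the second family is handled analogously, with easier boundary control at $s=1$.

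For the inductive step, writing $T$ for the single-coordinate BEC operator, I would decompose $T^{\otimes n}=T_1\circ T_{2:n}$ and, for $F(x_1,x_{2:n})$, first apply the induction hypothesis in coordinates $2,\ldots,n$ (for each fixed $x_1$) to obtain $\normsub{H(x_1,\cdot)}{q,y_{2:n}}\le\normsub{F(x_1,\cdot)}{p,x_{2:n}}$ where $H=T_{2:n}F$; then apply the base case in coordinate $1$ (for each fixed $y_{2:n}$) to bound $\normsub{T^{\otimes n}F(\cdot,y_{2:n})}{q,y_1}\le\normsub{H(\cdot,y_{2:n})}{p,x_1}$; and finally swap the order of norms using Minkowski's integral inequality $\normsub{\normsub{\cdot}{p,x_1}}{q,y_{2:n}}\le\normsub{\normsub{\cdot}{q,y_{2:n}}}{p,x_1}$, valid because $p\le q$. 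Combining these three steps closes the induction with no loss in constants.

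The main obstacle is the two-point inequality itself. Its sharpness at the threshold $1-\ep=(p-1)/(q-1)$, combined with the asymmetric role of the erased output value $(u+v)/2$---a projection onto constants rather than a small perturbation of the identity, as one has for the standard noise operator---rules out the classical $L^2$-projection or semigroup/derivative-of-entropy shortcuts available in Bonami-type proofs, so the one-parameter inequalities must be established by direct calculus on the curves.
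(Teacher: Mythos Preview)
The paper does not supply its own proof of this classical BEC hypercontractivity; it cites the result from Nair and Wang and merely summarizes their approach in the proof-overview: an induction on $n$ whose base case $n=1$ is handled by an information-theoretic argument, and whose tensorization step splits $y=y_1y_{2:n}$ and applies the hypothesis to each part. Your inductive/Minkowski tensorization is exactly this classical scheme and is correct as written: apply the single-coordinate bound to pass from $L^q_{y_1}$ to $L^p_{x_1}$, swap the order of $L^p_{x_1}$ and $L^q_{y_{2:n}}$ via Minkowski (valid since $p\le q$), then invoke the induction hypothesis in coordinates $2,\dots,n$.

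Where you diverge from the cited proof is the base case. Nair and Wang establish the two-point inequality through an information-theoretic/entropy argument; you propose a direct calculus attack on the two one-parameter families $(1+t,1-t)$ and $(1,-s)$. Your reduction to those families (via homogeneity, the $u\leftrightarrow v$ symmetry, and the global sign flip) is sound, and your Taylor computation correctly identifies why the threshold $1-\ep=(p-1)/(q-1)$ is sharp. However, the step ``arguing $\Phi\ge 0$ on $[0,1]$ via a careful higher-derivative/convexity analysis'' is the entire content of the base case and is left as a promise rather than a proof. For the family $(1+t,1-t)$ you have $\Phi(0)=\Phi'(0)=0$ with equality at the threshold in the $t^2$ coefficient, so a second-derivative argument is borderline and one must either control $\Phi''$ globally or work with a different auxiliary function; for the mixed-sign family $(1,-s)$ there is no such tangency but you still need a monotonicity or convexity statement that you have not supplied. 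Until those one-variable inequalities are actually proved, the base case---and hence the whole proposal---remains incomplete.
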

To express our hypercontractivity bounds, we need to decompose a QEC into two operations:
\begin{enumerate}
  \item Given a $2\times 2$ matrix $\rho$, we first define an expanding operator $\QECR$ that creates both the ``non-erased'' state and the ``erased'' state, that is
    \begin{equation}
      \QECR(\rho)\defeq \rho+\tr(\rho)\cdot\Err = \begin{pmatrix} \rho & \\&\tr(\rho)\end{pmatrix},
    \end{equation}
  where $\tr(\cdot)$ stands for the normalized trace operation,
    and here $\tr(\rho) = \f{1}{2}\Tr \rho$.
    This is the analog of the function $g(y) = \expec{}{f(X)|Y=y}$.
  \item Then we define the noise matrix $\Piep$ as
    \begin{equation}
    \Pi_\ep\defeq(1-\ep)\cdot\id_2+ 2\ep\cdot\Err=\begin{pmatrix}1-\ep &&\\&1-\ep&\\&&2\ep\end{pmatrix}.
    \end{equation}
    This matrix reassigns the noise, which simulates taking expectation over $Y$ in the classical case.
\end{enumerate}
Recall that in \cref{eq:QEC} we define
$$
    \QEC(\rho)=(1-\ep)\cdot\rho+\ep\cdot\Err.
$$
We observe that for any $2\times 2$ matrix $\rho$,
\begin{equation}
\QEC(\rho) = \Piep\cdot\QECR(\rho).
\end{equation}
We now state our hypercontractivity bound for the quantum erasure channel:
\begin{theorem}[informal version of \cref{hyper:thm:hc}]\label{introduction-hc-informal}
Given any $2^n\times 2^n$ matrix $X$, $1\ge\ep\ge0$, $q\ge p\ge 1$ and $c\ge 1$ satisfying $1-\ep\le\left(\frac{p-1}{q-1}\right)^c$, let $\normthree{\cdot}_{p}$ denote normalized Schatten $p$-norms.
The hypercontractive inequality for the quantum erasure channel
\begin{equation}\label{equation-qec-hc-introduction-d}
\left(2^{-n} \Tr\left[\PiepN\cdot\abs{\QECR^{\otimes n}(X)}^q\right]\right)^{1/q}\le\normthree{X}_p
\end{equation}
holds if one of the following conditions is satisfied:
\begin{itemize}
  \item $c=1$ and $1\le p\le 2\le q$.
  \item $c=2$ and $1\le p\le q\le 2$.
\end{itemize}
\end{theorem}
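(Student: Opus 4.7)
The plan is to split the proof along the two cases stated in the theorem, since the two ranges of $(p,q)$ call for qualitatively different techniques. In both cases, the diagonal restriction $X=\sum_x f(x)\ketbra{x}$ recovers the classical Nair--Wang inequality, so the real work lies in handling the off-diagonal, genuinely non-commutative part. A preliminary observation is that the LHS of \cref{equation-qec-hc-introduction-d} is a $\PiepN$-weighted normalized Schatten $q$-norm of $\QECRN(X)$, and $\QECRN$ has a natural block structure because $\QECR$ augments each qubit from a $2$-dimensional to a $3$-dimensional space in a block-diagonal way.

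For the first case ($c=1$, $1\le p\le 2\le q$), I would proceed by induction on $n$. The single-qubit base case can be verified by passing to the spectral decomposition of Hermitian $X$ and invoking the $n=1$ Nair--Wang inequality on the scalar probability space of eigenvalues, extended to complex-valued $X$ by a standard convexity/polarization argument. For the inductive step, peel off the first coordinate: applying $\QECR$ to the first qubit realizes $\QECR^{(1)}(X)$ as the block-diagonal operator $X \oplus \f{1}{2}\partrace{1}{X}$, and applying $\QECR^{\otimes(n-1)}$ on the remaining coordinates preserves this block structure. A standard Schatten $q$-norm compression inequality valid for $q\ge 2$ then bounds the $q$-norm of the full block-diagonal operator by the $q$-norm of a $2\times 2$ scalar matrix whose entries are the weighted $q$-norms of $\QECR^{\otimes(n-1)}(X)$ and $\QECR^{\otimes(n-1)}\br{\f{1}{2}\partrace{1}{X}}$. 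Applying the inductive hypothesis to $X$ and to $\partrace{1}{X}$ reduces the remaining inequality to the $n=1$ Nair--Wang statement on a two-point probability space.

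For the second case ($c=2$, $1\le p\le q\le 2$), the norm-compression argument breaks because the relevant inequalities are reversed when $q\le 2$. I would instead follow the Olkiewicz--Zegarlinski / Kastoryano--Temme / King framework and derive hypercontractivity as the integrated form of a log-Sobolev inequality. The three steps are: (i) prove a one-qubit quantum log-Sobolev inequality tailored to the Bernoulli weight $\Piep$, in which a Bernoulli entropy of $\QECR(X)$ weighted by $\Piep$ is controlled by a Dirichlet-type form of $X$, with the exact constant needed to produce the exponent $c=2$; (ii) tensorize this single-qubit inequality to $n$ qubits, where the variable-partite nature of the output (dimension $3^n$ rather than $2^n$) makes the tensorization considerably more subtle than in the fixed-dimension setting of Kastoryano--Temme; (iii) invoke a refined quantum Gross' lemma relating the $\PiepN$-weighted Schatten $q$-norm of $\QECRN(X)$ to $\normthree{X}_p$, so that the log-Sobolev statement integrates to the desired hypercontractive inequality via a Gross-type differentiation-in-$q$ argument along the curve $q(t)$ defined by $1-\ep = ((p-1)/(q(t)-1))^2$.

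The hard part is step (i): the one-qubit log-Sobolev inequality for Bernoulli entropy. Even the classical analog of this inequality is new, so there is no off-the-shelf template to lean on, and the constant must be sharp to yield the exponent $c=2$. I expect to prove it by a careful two-point analysis that interpolates between $\ep=0$ (where it should reduce to the Kastoryano--Temme qubit log-Sobolev inequality) and $\ep=1$ (which is trivial), together with an explicit differentiation in $\ep$ and a monotonicity/convexity argument in the remaining scalar variables. A secondary obstacle is the tensorization in step (ii): because the QEC has no fixed state and the output dimension grows with $n$, the standard ``peel-one-coordinate'' tensorization of log-Sobolev constants must be recast so that the Bernoulli weights $\Piep$ combine correctly across coordinates, which is precisely what the paper calls the \emph{variable multipartite log-Sobolev inequality}.
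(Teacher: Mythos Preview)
Your Case 1 sketch has a real gap. When you peel off the first coordinate you correctly obtain the block-diagonal structure $(\text{id}_1\otimes\QECR^{\otimes(n-1)})(X)\oplus \QECR^{\otimes(n-1)}(\tau_1 X)$, but this does \emph{not} set up an induction: the first block still carries qubit~1, so ``applying the inductive hypothesis to $X$'' is not available --- $X$ is an $n$-qubit operator, not an $(n{-}1)$-qubit one. What you call a ``norm compression inequality'' on this block-diagonal operator is just the trivial identity $\|A\oplus B\|_q^q=\|A\|_q^q+\|B\|_q^q$; it does nothing to reduce the first block. The paper's proof instead writes $X=\begin{pmatrix}X_{11}&X_{12}\\X_{21}&X_{22}\end{pmatrix}$ as a $2\times 2$ block matrix in qubit~1 and applies King's genuine norm-compression inequality $\|M\|_q\le\bigl\|\bigl(\begin{smallmatrix}\|X_{11}\|_q&\|X_{12}\|_q\\\|X_{12}\|_q&\|X_{22}\|_q\end{smallmatrix}\bigr)\bigr\|_q$ (valid for $q\ge 2$) to the \emph{non}-block-diagonal upper $2\times 2$ corner. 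This is precisely what lets the induction hypothesis be applied to the $(n{-}1)$-qubit blocks $X_{ij}$, including the off-diagonal ones you never mention; a second application of King's inequality (for $p\le 2$, opposite direction) then reassembles $\|X\|_p$. Your outline misses the off-diagonal blocks entirely, and without them the argument cannot close.

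For Case 2 your high-level plan (log-Sobolev plus a Gross-type differentiation in $t$) matches the paper, but your route to the log-Sobolev inequality is structurally different and you have misidentified the hard step. You propose to (i) prove a one-qubit Bernoulli log-Sobolev and then (ii) tensorize; you call (i) the hard part. The paper does the opposite: its induction is on $m$, the number of \emph{erasable} coordinates, with base case $m=0$ being the already-tensorized $n$-qubit standard log-Sobolev of Kastoryano--Temme, and the inductive step handled by a two-point entropy estimate (\cref{claim:eq}). This sidesteps tensorization altogether by building on a result that is already tensorized. Your step (ii), by contrast, \emph{is} the tensorization problem --- the central obstruction in quantum hypercontractivity --- and there is no known mechanism by which a single-qubit Bernoulli log-Sobolev, however sharp, would tensorize to yield the multipartite inequality with constant $c=2$. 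So the difficulty you flag in (i) is secondary; the real obstacle is (ii), and the paper's contribution is precisely a way around it.
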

\begin{remark}[Equivalent form, hypercontractivity for random partial trace] It is not hard to verify that Eq.~\eqref{equation-qec-hc-introduction-d} is equivalent to the following.
\begin{equation}
    \br{\wsum{n}{\normthree{\tr_SX}_q^q}}^{1/q} \le \normthree{X}_p,
  \end{equation}
where $\tau_SX=\Tr_SX/2^{|S|}$ is the normalized partial trace. Thus, an alternate interpretation of \Cref{introduction-hc-informal} is a hypercontractive inequality for random partial trace operations.
\end{remark}
\begin{remark}\label{rk:cq}
It might seem opaque that \cref{equation-qec-hc-introduction-d} is a direct quantum generalization of \cref{intro:bec:hyper}.
The connection becomes clearer when we notice that the expectation on the left-hand side of \cref{intro:bec:hyper} is taken over the random variable $Y=\operatorname{BEC}_\ep(X)$.
The random variable $Y$ is certainly not uniformly distributed, and we can verify that for each $i\in[n]$,
\begin{equation}\label{intro:bec:expec_distribution}
  Y_i=
  \begin{cases}
    0 &\text{ w.p. } (1-\ep)/2\\
    1 &\text{ w.p. } (1-\ep)/2\\
    * &\text{ w.p. } \ep\\
  \end{cases}
\end{equation}
We can expand the left-hand side of \cref{intro:bec:hyper} according to \cref{intro:bec:expec_distribution} and get
\begin{equation}\label{intro:bec:left_expanded}
\normsub{g(Y)}{q} = \left(\expec{Y}{\abs{g(y)}^q}\right)^{1/q} = \left(\sum_{y\in Y}\mu(y)\abs{g(y)}^q\right)^{1/q}
\end{equation}
where $\mu(y)=\prod_{i=1}^n\mu(y_i)$ and each $\mu(y_i)$ takes the value according to Eq.~$\eqref{intro:bec:expec_distribution}$.
The matrix $\f{1}{2^n}\Piep$ in \cref{equation-qec-hc-introduction-d} acts as the role of $\mu$ in \cref{intro:bec:left_expanded}
as it reassigns the noise of the matrix when tracing out.
\end{remark}
\begin{remark}
For the classical BEC hypercontractivity bound, the parameter $c$ is always $c=1$ and this result is tight when $\ep\le1/2$.
In our work the tightness of the hypercontractivity results for the QEC remains unknown.
\end{remark}
\begin{remark}
The reader may wonder why we consider an inequality of the form in \cref{equation-qec-hc-introduction-d},
as opposed to the more natural hypercontractivity bound for the QEC expressed as
\begin{equation}\label{equation-qec-hc-introduction-naive-d}
\left(2^{-n}\Tr\left[\left|\QECN(X)\right|^q\right]\right)^{1/q}\le \p{2^{-n}\Tr\lrb{\abs{X}^p}}^{1/p}.
\end{equation}
It's easy to check that
$$\left(2^{-n}\Tr\left[\left|\QECN(X)\right|^q\right]\right)^{1/q}\le\left(2^{-n}\Tr\left[\PiepN\cdot\abs{\QECR^{\otimes n}(X)}^q\right]\right)^{1/q}.$$
Hence~\cref{equation-qec-hc-introduction-d} is stronger. Moreover,  \cref{equation-qec-hc-introduction-naive-d} fails to prove a tight bound on the communication complexity of common randomness generation considered in~\cref{section:CRG}.
We posit that the inequality in \cref{equation-qec-hc-introduction-d} will have broader applicability in a majority of scenarios,
as it is exactly the quantum generalization of its classical counterpart~\cite{7541363} as argued in Remark~\ref{rk:cq}. Furthermore, \cref{introduction-hc-informal} also implies an almost optimal hypercontractivity bound for the quantum depolarizing channel.
\end{remark}

\begin{remark}
    For the quantum erasure channel,
    our \cref{introduction-hc-informal} only proves hypercontractivity inequalities for $p\le 2$.
    The $p> 2$ case remains open.
    For quantum channels generated by primitive and reversible Lindblad generators,
    i.e., the depolarizing channel, a hypercontractivity inequality for $p$ and $q$ in $[1,2]$
    can be generalized to all parameter ranges.
    See Proposition 10 and Theorem 11 from \cite{beigi2020quantum}.
    However, for the quantum erasure channel considered in our work,
    \cref{introduction-hc-informal} does not seem to be able to imply hypercontractivity inequalities for the
    quantum erasure channel in the range $q > p > 2$.
    The barrier seems to be fundamental: For fixed $p$,
    we expect (and prove) that the left hand side of
    \cref{equation-qec-hc-introduction-d} is non-increasing over $\varepsilon$.
    However, for $q\ge 2$, numerical experiments show that this fails to be true.
    So our proof framework completely fails.
\end{remark}

\paragraph{Quantum Log-Sobolev Inequality for Bernoulli Entropy}
A key component in the proof of our hypercontractivity bound is a quantum log-Sobolev inequality for Bernoulli entropy. It is an extension of the standard quantum log-Sobolev inequality due to Kastoryano and Temme~\cite{kastoryano2013quantum}.
\begin{theorem}[informal version of \cref{hyper:lemma1}]
    Let $m,n$ be integers such that $m\le n$ and $X$ is a positive semi-definite operator.
    Let $\tr$ denote the normalized trace operator.
    For $q\in[1,2], \ep\in[0, 1]$, it holds that
    \begin{equation}\begin{split}
		\Ent_{\ep,[m],q}\lrb{X} \le
        2\wsum{m}{\sum_{k\in[m]\backslash S}}
        \p{\tr\lrb{\ps{X}{S}^{q}}-\tr\lrb{\ps{X}{\p{S\cup\set{k}}}^{q}}} \\
		+2\wsum{m}{\sum_{k\in[n]\backslash S}\p{\tr\lrb{\ps{X}{S}^q}-\tr\lrb{ {\left(\tr_{k}\lrb{\ps{X}{S}^{\f{q}{2}}}\right)^2}}}}.
    \end{split}\end{equation}
    where $X_T=\Tr_{T^c}X/2^{n-|T|}$ for $T\subseteq[n]$, and $\Ent_{\ep,[m],q}$ is the multipartite Bernoulli entropy defined as
    \begin{multline}
    \Ent_{\ep,[m],q}\lrb{X}=\sum_{S\subseteq[m]}(1-\epsilon)^{m-\abs{S}}\epsilon^{\abs{S}} \tr\lrb{\left(\tr_S X\right)^q\ln\left(\tr_S X\right)^q}\\
    -\left(\sum_{S\subseteq[m]}(1-\epsilon)^{m-\abs{S}}\epsilon^{\abs{S}}\tr\lrb{\left(\tr_S X\right)^q}\right)\ln\left(\sum_{S\subseteq[m]}(1-\epsilon)^{m-\abs{S}}\epsilon^{\abs{S}}\tr\lrb{\left(\tr_S X\right)^q}\right).
    \end{multline}
\end{theorem}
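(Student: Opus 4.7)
The plan is to split $\Ent_{\ep,[m],q}\lrb{X}$ into a purely quantum part and a classical Bernoulli part, and bound each by one of the two sums on the right-hand side. Write $Y_S := \ps{X}{S}$, $a_S := \tr\lrb{Y_S^q}$, $\phi(S) := (1-\ep)^{m-\abs{S}}\ep^{\abs{S}}$, and $Z := \sum_{S\subseteq[m]}\phi(S)\, a_S$. An algebraic rearrangement gives
\[
\Ent_{\ep,[m],q}\lrb{X} = \underbrace{\sum_{S\subseteq[m]} \phi(S)\bigl(\tr\lrb{Y_S^q \ln Y_S^q} - a_S \ln a_S\bigr)}_{=:Q} \;+\; \underbrace{\sum_{S\subseteq[m]} \phi(S)\, a_S \ln a_S - Z\ln Z}_{=:B},
\]
where $Q$ and $B$ are both non-negative: $Q$ is a $\phi$-weighted average of quantum entropies of the normalized operators $Y_S^q/a_S$, and $B$ is the classical Bernoulli entropy of the scalar sequence $(a_S)$ under the product measure $\phi$.

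For $Q$, I would apply the Kastoryano--Temme quantum log-Sobolev inequality~\cite{kastoryano2013quantum} pointwise in $S$ to the positive operator $Y_S$ on the $n-\abs{S}$ qubits indexed by $[n]\setminus S$: for $q\in[1,2]$,
\[
\tr\lrb{Y_S^q \ln Y_S^q} - a_S\ln a_S \;\le\; 2\sum_{k\in[n]\setminus S}\p{a_S - \tr\lrb{\p{\tr_k Y_S^{q/2}}^2}}.
\]
Weighting by $\phi(S)$ and summing over $S$ reproduces exactly the second sum on the right-hand side of the theorem. For $B$, the plan is to induct on $m$ using the chain rule of entropy at coordinate $m$: decomposing the sum over $S\subseteq[m]$ according to whether $m\in S$ and setting $Z_0 = \sum_{S'\subseteq[m-1]}\phi'(S')\, a_{S'}$ and $Z_1 = \sum_{S'}\phi'(S')\, a_{S'\cup\set{m}}$ (with $\phi'$ the Bernoulli$(\ep)$ product measure on $[m-1]$), one obtains
\[
B \;=\; (1-\ep)\, B_{[m-1]}\lrb{X} \,+\, \ep\, B_{[m-1]}\lrb{\tr_m X} \,+\, \Psi_\ep(Z_0, Z_1),
\]
where $\Psi_\ep(Z_0, Z_1)$ denotes the two-point Bernoulli entropy of $(Z_0,Z_1)$ with weights $(1-\ep,\ep)$. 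Since $\tr_{S'}\tr_m X = \tr_{S'\cup\set{m}} X$, the inductive hypothesis applied to the first two summands telescopes to produce precisely the $k\in[m-1]$ slice of the first sum on the right-hand side.

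The main obstacle is controlling the residual Bernoulli entropy $\Psi_\ep(Z_0,Z_1)$ by $2(1-\ep)(Z_0-Z_1)$, which is what is needed to produce the missing $k=m$ slice of the first sum. The monotonicity $Z_0\ge Z_1$ follows from $L^q$-contractivity of the unital normalized partial trace for $q\ge 1$, but the naive pointwise Bernoulli bound $\Psi_\ep(Z_0,Z_1)\le 2(1-\ep)(Z_0-Z_1)$ fails as $\ep\to 1$: already for $Z_0=1,Z_1=0$ one has $\Psi_\ep\sim -(1-\ep)\ln(1-\ep)$, which is super-linear in $1-\ep$. Hence a purely classical Bernoulli log-Sobolev argument cannot close the induction on its own. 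The key idea I would pursue is to borrow the slack left over in $Q$ from the quantum Dirichlet forms at the coordinates $k\in[m]\setminus S$: for $q\in[1,2]$ a short spectral calculation (e.g., on a maximally entangled state) shows that $\tr\lrb{(\tr_k Y_S^{q/2})^2}\le \tr\lrb{(\tr_k Y_S)^q}$, so each such summand in the quantum bound strictly dominates the classical gradient $a_S - a_{S\cup\set{k}}$, providing exactly the excess needed to absorb the super-linear behavior of $\Psi_\ep$ at large $\ep$. I would then isolate this reallocation as a stand-alone refined Bernoulli log-Sobolev lemma, proved either by interpolation in $\ep$ or by an integral representation of $x\ln x$; once in hand, the induction closes cleanly and produces the two sums on the right-hand side of the theorem.
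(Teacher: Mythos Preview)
Your decomposition $\Ent_{\ep,[m],q}[X]=Q+B$ and the pointwise application of the Kastoryano--Temme inequality to bound $Q$ by the second sum are sound, and if anything slightly cleaner than the paper's route, which carries both pieces through a single induction on $m$. Your chain-rule recursion for $B$ also correctly isolates the two-point term $\Psi_\ep(Z_0,Z_1)$ and identifies that the missing $k=m$ slice is exactly $2(1-\ep)(Z_0-Z_1)$.

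The genuine gap is your claim that $\Psi_\ep(Z_0,Z_1)\le 2(1-\ep)(Z_0-Z_1)$ fails. Your counterexample $Z_1=0$ is not reachable here: in the actual setup the ratio $Z_0/Z_1$ is uniformly bounded. The operator inequality $A\preceq 4\,\tr_k[A]\otimes\id_2$ (Zhang) gives, via Weyl monotonicity of eigenvalues, $a_{S'}=\tr[Y_{S'}^q]\le 4^q\,\tr[(\tr_m Y_{S'})^q]\le 16\,a_{S'\cup\{m\}}$ for each $S'\subseteq[m-1]$ and $q\le 2$; summing yields $Z_1\le Z_0\le 16Z_1$. Under the constraint $b\le a\le 16b$ the two-point bound $\Ent_\ep[a,b]\le 2(1-\ep)(a-b)$ \emph{does} hold for all $\ep\in[0,1]$: set $f(\ep)=\Ent_\ep[a,b]-2(1-\ep)(a-b)$, observe $f(1)=0$ and $f''(\ep)\le 0$, and check $f'(1)=3(a-b)-a\ln(a/b)\ge 0$ because $\ln 16<45/16$. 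This elementary lemma is exactly what the paper proves, and with it your induction on $B$ closes immediately. Your proposed reallocation from the quantum Dirichlet form is therefore an unnecessary detour; the pointwise inequality $\tr[(\tr_k Y^{q/2})^2]\le\tr[(\tr_k Y)^q]$ you invoke is indeed true (operator Jensen for the concave map $x\mapsto x^{q/2}$ under the unital CP map $\tr_k$, together with $0\le A\le B\Rightarrow\tr A^2\le\tr B^2$), but you never show how the resulting slack would quantitatively cancel the alleged super-linear deficit, and in fact there is no deficit to cancel.
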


    We note that when $m=0$ and $q=2$, the first term on the right hand side disappears and thus the inequality boils down to the standard quantum log-Sobolev inequality \cref{hyper:theorem:log-sobolev}.

\paragraph{Common Randomness Generation}
We present an application of our hypercontractivity bound for the QEC in the study of common randomness generation, generalizing Guruswami and Radhakrishnan's result~\cite{guruswami_et_al:LIPIcs:2016:5845} to the quantum setting.

\begin{theorem}[informal version of \cref{thm:CRG}]
  Let $\ep\in[0, 1]$ and $\ket{\Phi}=(\ket{00}+\ket{11})/\sqrt{2}$ be an EPR state.
  Suppose Alice and Bob share infinitely many copies of the state $(\id_2\otimes\QEC)(\Phi)$,
  and Alice sends classical messages to Bob.
  Then to produce a common random string of min-entropy at least $k$,
  Alice needs to send Bob at least
  \begin{equation}
    t\ge\frac{\ep k}{2}-o(k)
  \end{equation}
  bits of classical message.
\end{theorem}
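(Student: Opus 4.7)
The plan is to follow the hypercontractivity-based framework for common-randomness-generation lower bounds, extending Guruswami--Radhakrishnan's classical argument to the quantum setting by invoking our QEC hypercontractivity~(\cref{introduction-hc-informal}). As Alice's communication is one-way, the protocol may be put in the canonical form: Alice performs a POVM $\{A_{m,z}\}_{m,z}$ on her $n$ qubits (the halves of the EPR pairs), sends the outcome $m\in[2^t]$ to Bob, and outputs $z$; Bob, upon observing the erasure pattern $S\subseteq[n]$ together with the message $m$, performs a POVM $\{B^{m,S}_{z'}\}_{z'}$ on his $n-|S|$ non-erased qubits (the reduction to an $S$-indexed POVM is WLOG since the output of $\QECN$ is block-diagonal in the erasure basis). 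Using the EPR identity $\Tr_A[(A\otimes I)\Phi^{\otimes n}]=A^{T}/2^n$ together with the block decomposition $\QECN=\Piep^{\otimes n}\QECRN$, whose $S$-block equals $(1-\ep)^{n-|S|}\ep^{|S|}\Tr_S(\cdot)$, I obtain
\[
\Pr[M=m,\,U=V=z]
=\sum_{S\subseteq[n]}(1-\ep)^{n-|S|}\ep^{|S|}\,\tr\!\bigl[B^{m,S}_{z}\,\tr_{S}(A_{m,z}^{T})\bigr],
\]
where $\tr$ denotes the normalized trace on the $(n-|S|)$-qubit system.

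Next, I apply matrix H\"older within each block $S$ and weighted H\"older across $S$ with exponents $(q',q)$, and invoke the hypercontractive inequality in its equivalent random-partial-trace form $\bigl(\sum_{S}(1-\ep)^{n-|S|}\ep^{|S|}\normthree{\tr_S X}_{q}^{q}\bigr)^{1/q}\le\normthree{X}_p$, yielding
\[
\Pr[M=m,\,U=V=z]\le \Bigl(\sum_{S}(1-\ep)^{n-|S|}\ep^{|S|}\tr[(B^{m,S}_{z})^{q'}]\Bigr)^{1/q'}\cdot\normthree{A_{m,z}^{T}}_{p}.
\]
Summing over the output label $z$ via Cauchy--Schwarz, the Bob-side factor is bounded by $1$ for $q\ge 2$ (using $(B^{m,S}_z)^{q'}\le B^{m,S}_z$ for $0\le B\le I$ and $\sum_z B^{m,S}_z=I$). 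The min-entropy hypothesis $\Pr[U=z]\le 2^{-k}$ gives $\tr[A_{m,z}]\le 2^{-k}$, hence $\normthree{A_{m,z}^{T}}_p\le 2^{-k/p}$; combined with $\sum_z \tr[A_{m,z}]=p_m$ and the interpolation $\sum_z\normthree{A_{m,z}^T}_p^{2}\le 2^{-k(2-p)/p}\,p_m$ (valid for $p\le 2$), this yields $\Pr[\text{succ},M=m]\le 2^{-k(2-p)/(2p)}\,p_m^{1/2}$.

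Finally, summing over the at most $2^t$ messages with $\sum_m p_m^{1/2}\le 2^{t/2}$ (Cauchy--Schwarz on $\sum_m p_m=1$) and using the correctness $\Pr[\text{succ}]\ge 1-\delta$, I conclude $t\ge k(2-p)/p - O(1)$. Plugging in the hypercontractive threshold $q=2,\,p=2-\ep$ from the $c=1$ branch of \cref{introduction-hc-informal} yields $t\ge k\ep/(2-\ep)-O(1)\ge \ep k/2 -o(k)$, as desired. The main obstacle I anticipate is orchestrating the weighted matrix H\"older inequality so that its right-hand side matches exactly the left-hand side of the hypercontractive inequality (the weighted $q$-norm over the erasure-pattern block structure of $\Piep^{\otimes n}\QECRN$); a secondary difficulty is justifying the WLOG reduction of Bob's POVM to a block-diagonal form indexed by the erasure pattern $S$.
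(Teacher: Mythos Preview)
Your proposal is correct and follows the same hypercontractivity-based template as the paper (which in turn mirrors Guruswami--Radhakrishnan). The block-diagonal reduction for Bob that you flag as a ``secondary difficulty'' is exactly what the paper proves as \cref{lemma-optimal-Q-good-form-d}.

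There are two genuine, if minor, differences worth recording. First, you invoke the $c=1$ branch of \cref{introduction-hc-informal} with the specific choice $q=2$, $p=2-\ep$, whereas the paper uses the $c=2$ branch with a free $q$ and optimizes at the end; your choice in fact yields the slightly better leading constant $\ep/(2-\ep)\ge\ep/2$ (and a $-O(1)$ rather than $-o(k)$ loss for constant success probability). Second, after applying hypercontractivity to each $A_{m,z}$ you sum over the output $z$ and then over the message $m$ by two separate Cauchy--Schwarz steps, while the paper applies matrix H\"older (\cref{preliminary-Holder-3}) with exponents $(q,q^*)$ over the output label first and then pushes the sum over messages inside via concavity of $x\mapsto x^{1/q}$. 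Both routings are valid; yours is arguably more elementary (it needs only $q=2$ and Cauchy--Schwarz), while the paper's keeps $q$ as a parameter throughout and so produces the full tradeoff of \cref{thm:CRG} with the success-probability parameter $\gamma$.
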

This lower bound is tight up to a constant factor as it suffices for Alice to send $\ep k$ bits~\cite{guruswami_et_al:LIPIcs:2016:5845}.
%
%

\subsection{Proof Overview}

The classical hypercontractiviy bound for BECs is proved in \cite{7541363} by an inductive argument on the length of the input $n$.
The base case is proved via an information-theoretic argument.
Then, for Boolean functions $g: \set{-1,1,*}^n\to\mathbb{R}$ with $n>1$ and 
the input $y\in\set{-1,1,*}^n$ of $g(y)$ is split into $y=y_1y_2$ where $y_1\in\set{-1,1,*}^{n_1}$ and $y_2\in\set{-1,1,*}^{n_2}$ and $n_1+n_2=n$.
Then by fixing $y_1$, we may apply the induction on the function $g(y_1, \cdot):\set{-1,1,*}^{n_2}\to\mathbb{R}$.
To generalize this inductive argument to quantum hypercontractivity, a natural approach is splitting the matrix into block matrices
and applying the induction hypothesis on each block. A key step is to compare
$$\norm{M}_p=\norm{\begin{pmatrix}X & Y\\Y^\dagger &Z \end{pmatrix}}_p \quad\text{ and }\quad m:=\norm{\begin{pmatrix}\normsub{X}{p}&\normsub{Y}{p}\\\normsub{Y}{p}&\normsub{Z}{p}\end{pmatrix}}_p$$
for positive semidefinite matrices. It is apparent that the inductive argument holds for the classical BEC hypercontractivity as $M$ is diagonal and thus $\norm{M}_p=m$ in the classical case. King~\cite{2003CMaPh.242..531K} proved a {\em norm-compression inequality} which states that $\norm{M}_p\geq m$ if $1\leq p\leq 2$ and $\norm{M}_p\leq m$ if $p\geq 2$.
Leveraging this fact, King~\cite{king2014hypercontractivity} generalized the inductive argument to the quantum setting and proved a hypercontractivity bound for the quantum depolarizing channel in the case that $1\le p\le 2\le q$.

The inductive argument also works for the hypercontractivity for the QEC, when $1\le p\le 2\le q$, which, therefore, induces a hypercontractivity bound for the QEC in this region\footnote{See \cref{appendix-proof-HC-1p2q}}. However, the inductive argument fails for the case $1\le p\le q \le 2$ as the direction of the norm compression inequality is opposite to what we expect.
To resolve this difficulty, we extend the framework in~\cite{OLKIEWICZ1999246,king2014hypercontractivity,beigi2020quantum,10.1063/5.0056388}.

We give a brief overview of this proof framework of quantum hypercontractivity. A key observation is the connection between quantum log-Sobolev inequality~\cite{kastoryano2013quantum} and quantum hypercontractivity. Such connections have also been explored in classical settings~\cite{POLYANSKIY2019108280,FOW2022general,kelman_et_al:LIPIcs.ITCS.2021.26}. A fundamental notion in the quantum log-Sobolev inequality is 
 the {\em entropy} function,
\begin{equation}\label{intro:eq:oldent}
	\Ent(X^2) = \tr \lrb{X^2\ln X^2} - \tr \lrb{X^2}\ln\tr \lrb{X^2},
\end{equation}
which captures the overall chaos of the system. It can be obtained by taking the derivative of the left-hand side of the hypercontractive inequality. The entropy is upper bounded via a quantum log-Sobolev inequality~\cite{kastoryano2013quantum} as follows:
\begin{equation}\label{intro:eq:oldlog}
    \Ent(X^2)=\tr \lrb{X^2\ln X^2} - \tr \lrb{X^2}\ln \tr \lrb{X^2} \le 2 \sum_{k=1}^n \p{\tr \lrb{X^2} - \tr \lrb{\p{\tr_k X}^2}}.
\end{equation}
Intuitively, the entropy is upper bounded by the correlation between each coordinate and the remaining system. The quantum hypercontractivity can be derived from the quantum log-Sobolev inequality via the quantum Gross' lemma~\cite{10.1215/S0012-7094-75-04237-4,king2014hypercontractivity}. Nonetheless, it only gives a quantum hypercontractive inequality for positive operators as the quantum log-Sobolev inequality only concerns about positive matrices. To extend it to general matrices, we can further employed Watrous' theorem, which asserts that the $\norm{\cdot}_{q\to p}$ norm of a quantum channel can be achieved by positive matrices~\cite{Wat05notes}. The overall proof is summarized in \cref{fig:overview}.

%

\begin{figure}[hbt!]
	\centering
	\subfloat[Proof framework]{
		\begin{tikzpicture}
			\draw[draw=black] (-3, -3) rectangle ++(6, 8);
			\node[draw, text width=150pt, align=center] (WT) at (0, 4) {Log-Sobolev~\cite{kastoryano2013quantum}};
			\node[draw, text width=150pt, align=center] (LS) at (0, 2) {Quantum Gross' lemma~\cite{10.1215/S0012-7094-75-04237-4}};
			\node[draw, text width=150pt, align=center] (GL) at (0, 0) {Watrous' theorem~\cite{Wat05notes}};
			\node[draw, text width=150pt, align=center] (HC) at (0, -2) {HC for unital channels};
			\draw[->] (WT) -- (LS);
			\draw[->] (LS) -- (GL);
			\draw[->] (GL) -- (HC);
		\end{tikzpicture}
	}\label{fig:overview:king}
	\subfloat[\textbf{Our work}]{
		\begin{tikzpicture}
			\draw[draw=black] (-3, -3) rectangle ++(6, 8);
			\node[draw, text width=150pt, align=center] (WT) at (0, 4) {Log-Sobolev for Bernoulli entropy (\cref{hyper:lemma1})};
			\node[draw, text width=150pt, align=center] (LS) at (0, 2) {Refined quantum Gross' lemma (\cref{hyper:lemma2})};
			\node[draw, text width=150pt, align=center] (GL) at (0, 0) {Watrous' theorem~\cite{Wat05notes}};
			\node[draw, text width=150pt, align=center] (HC) at (0, -2) {HC for QEC (\cref{hyper:thm:hc})};
			\draw[->] (WT) -- (LS);
			\draw[->] (LS) -- (GL);
			\draw[->] (GL) -- (HC);
		\end{tikzpicture}
	}\label{fig:overview:ours}
	\caption{Proof Overview}\label{fig:overview}
\end{figure}
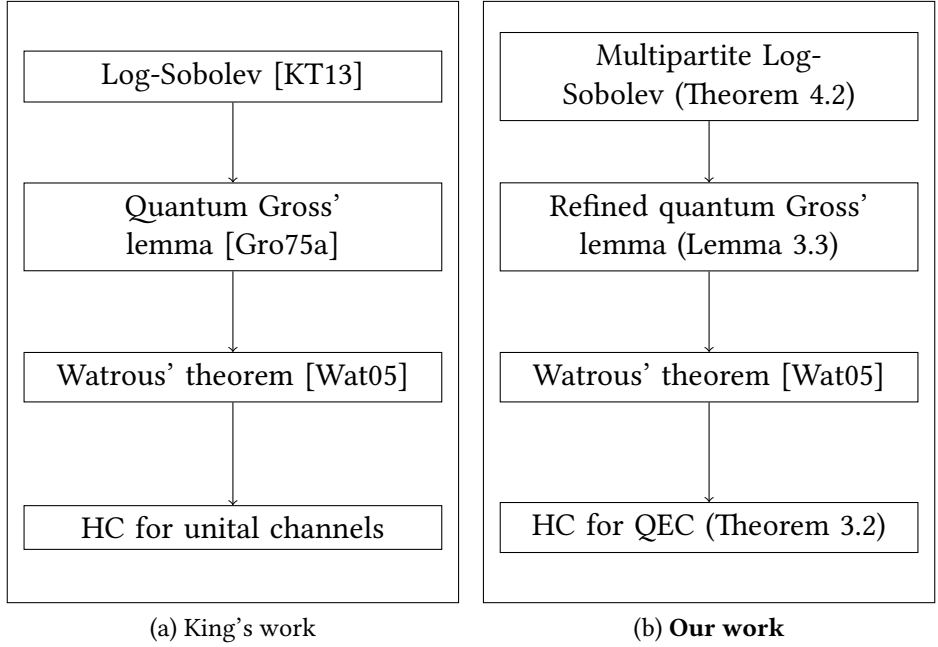

To prove the hypercontractive inequality for the QEC, we adopt the same framework. However, we need to introduce several new ingredient, which is also depicted in \cref{fig:overview}.

The major difficulty is that the quantum log-Sobolev inequality is insufficient for the hypercontractivity of QEC, because the size of the system considered in Eq.~\eqref{intro:eq:oldlog} is unchanged, while the qubits may be erased and the size of the system decreases when considering the QEC. To capture the ``erasing'' feature of the QEC, we introduce the notion of {\em multipartite Bernoulli entropy}. Let $X$ be a $2^n\times 2^n$ positive definite matrix viewed as an operator acting on $n$ qubits, and $m\leq n$.
Recall the {\em multipartite Bernoulli entropy} defined as
\begin{samepage}
\begin{multline}
\Ent_{\ep,[m],q}\lrb{X}=\sum_{S\subseteq[m]}(1-\epsilon)^{m-\abs{S}}\epsilon^{\abs{S}} \tr\lrb{\left(\tr_S X\right)^q\ln\left(\tr_S X\right)^q}\\
-\left(\sum_{S\subseteq[m]}(1-\epsilon)^{m-\abs{S}}\epsilon^{\abs{S}}\tr\lrb{\left(\tr_S X\right)^q}\right)\ln\left(\sum_{S\subseteq[m]}(1-\epsilon)^{m-\abs{S}}\epsilon^{\abs{S}}\tr\lrb{\left(\tr_S X\right)^q}\right).
\end{multline}
\end{samepage}
The multipartite Bernoulli entropy of $X$ captures the {\em expected} entropy of a random system obtained by removing each qubit system in the first $m$ qubits systems with probability $\ep$, independently.  
It is a natural extension of the entropy in \cref{intro:eq:oldent}, which is the case that $m=0$ and $q=2$. We establish a  quantum log-Sobolev inequality for Bernoulli entropy, which enables us to upper bound the multipartite Bernoulli entropy in terms of the the expected correlation between each coordinate and the remaining system, where the expectation is over the random erasure of the system.
\begin{equation*}
\begin{split}
	\Ent_{\ep,[m],q}\lrb{X} \le
    2\wsum{m}{\sum_{k\in[m]\backslash S}}
        \p{\tr\lrb{\ps{X}{S}^{q}}-\tr\lrb{\ps{X}{\p{S\cup\set{k}}}^{q}}} \\
	+2\wsum{m}{\sum_{k\in[n]\backslash S}\p{\tr\lrb{\ps{X}{S}^q}-\tr\lrb{ {\left(\tr_{k}\lrb{\ps{X}{S}^{\f{q}{2}}}\right)^2}}}}.
\end{split}
\end{equation*}
We also note that when $m=0$ and $q=2$ our quantum log-Sobolev inequality degenerates to the quantum log-Sobolev inequality above as \cref{intro:eq:oldlog}.  It it is worth to note that for $q\neq 2$, the inequality resembles, to some extent, the nonlinear log-Sobolev inequality considered in \cite{POLYANSKIY2019108280,9996873}.

The existing quantum Gross' lemma is also insufficient to derive a hypercontractivity for the QEC from the quantum log-Sobolev inequality. To address it, we prove a refined quantum Gross' lemma connecting the Schatten $p$ norm of an operator and the Schatten $p$
norm of its reduced operator via exploiting the majorization argument between a matrix's eigenvalues and its diagonal entries.


Finally, we conclude our result by employing Watrous' theorem~\cite{Wat05notes},
which implies that the above argument for positive semi-definite matrices is sufficient
to prove a QEC hypercontractivity for general matrices.

\subsection{Discussion and open problems}

We prove a hypercontractive inequality for the product of quantum erasure channels.
Our hypercontractivity is a natural extension
of the hypercontractivity bound for the classical binary erasure channel. To prove the hypercontractivity, we establish a quantum log-Sobolev inequality for Bernoulli entropy, which is interesting in its own right. As an application, we prove an almost tight lower bound on the classical communication complexity of common randomness generation assisted by erased-noisy EPR states. This work raises several open problems for future work.
\begin{enumerate}
  \item For parameters $2\le p\le q < \infty$, the hypercontractivity of the quantum erasure channel remains unknown.
        Our technique fails because the derivative $g^\prime(t)$ in the proof of \cref{hyper:thm:hc} fails to be non-increasing for $q\ge 2$.
        More technically, \cref{lem:doublystho} holds only for $q\ge 2$.
        Thus the hypercontractivity for $2\le p\le q < \infty$ remains open.

	\item The tightness of hypercontractivity bound for the QEC is not clear. The hypercontractivity bound for classical BECs~\cite{7541363} indicates that the parameter $c$ in our result could be pushed to $1$. We leave this to future work.
	
	\item
  To our knowledge, hypercontractive inequalities for tensored channels, i.e., the exact tensorization property,
  has been proved only for qubit channels~\cite{montanaro2010quantum,kastoryano2013quantum,king2014hypercontractivity}.
  For qudit channels, so far there are only lower bounds from bounds on the log-Sobolev constant~\cite{Temme_2014,MHSFW16nonunital}
  (See the discussion in \cite{beigi2020quantum}).
  Is it possible to establish the tensorization property for qudit channels?
  The current approach fails even for the easy case $1\leq p\leq 2\leq q$,
  because the norm-compression inequality does not hold for $t\times t$ block matrices whenever $t\geq 3$ \cite{AUDENAERT2006155}.
  Hence, we need new techniques to establish quantum hypercontractive inequalities for the product of qudit channels.
	
	\item There have been several works investigating the hypercontractivity bound for quantum channels with respect to the fixed states of the channels over this space~\cite{Bardet2022-bg,beigi2020quantum}.
    For hypercontractivity bounds with respect to a non-fixed state,
    Bardet and Rouz\'e~\cite{Bardet2022-bg} have shown some no-go results: a strong decoherence-free-log-Sobolev inequality does not hold for a quantum Markov semigroups that is neither primitive nor unital.
    Can we say more about a general quantum hypercontractive inequality in weighted non-commutative $L_p$ spaces with respect to a non-fixed state?

	\item Log-Sobolev inequalities are powerful tools to study the mixing time of a random walk on a fixed graph. Recently, there have been several works investigating random walks on a random graph \cite{10.1214/07-AOP358,10.1214/EJP.v17-1705,rsa.20539}. The entropy in the classical log-Sobolev inequality for Bernoulli entropy captures the average entropy of the random subgraphs of a given graph, where each vertex is deleted with probability $\ep$. Thus, it is interesting to see whether our classical  log-Sobolev inequality for Bernoulli entropy can provide better analysis on the mixing time of a random walk on a random graph?
\end{enumerate}

\subsection{Organization}
In \cref{section:Preliminary}, we introduce some preliminary concepts, basic mathematical foundations, notations, and discuss the erasure channel.
\cref{section:hc} presents the proof of our hypercontractivity inequality. \Cref{sec:ls} presents the proof of quantum log-Sobolev inequality for Bernoulli entropy. Subsequently, in \cref{section:CRG} we demonstrate an application of our results to derive a lower bound for common randomness generation.


\begin{Anonymous}
\subsection*{Acknowledgment}
We thank the anonymous reviewers for their careful reading of our manuscript and their many insightful comments and suggestions.
We thank Mark M. Wilde for pointing out their work~\cite{10005080,10161613,nuradha2023fidelitybased}.
We thank Pei Wu for pointing out their work~\cite{10.1145/3564246.3585205}.
We are grateful to Xinyuan Zhang and Xiaoyu Chen for their discussion about log-Sobolev inequalities and entropy functions.
We thank Jingcheng Liu and Yitong Yin for the helpful discussion.
This research was supported by National Natural Science Foundation of China (Grant No. 62332009, 12347104),  Innovation Program for Quantum Science and Technology (Grant No. 2021ZD0302901), NSFC/RGC Joint Research Scheme (Grant no. 12461160276) and Natural Science Foundation of Jiangsu Province (No. BK20243060) and the New Cornerstone Science Foundation.

\end{Anonymous}

\section{Preliminary}\label{section:Preliminary}
In this paper, we will use  $\ln x$ to express the natural logarithm of
$x$, which is based on the mathematical constant  $e \approx 2.7182818$.
We denote $[n] = \{1,2,\cdots,n\}$ and for the special case $[0]=\emptyset$.
Besides, unless explicitly stated, we assume $\ep \in [0,1]$. For complex number $z$, we use $\overline{z}$ to denote its complex conjugate.

\paragraph{Linear algebra}

Let $\id_n$ be the $n \times n$ size identity matrix and we use $\id$ to represent $\id_2$.
We let $\Matrix_{n\times m}$ be the set of all $n \times m$ size complex-valued matrices.
We use $\Matrix_n$ to denote all $n\times n$ square matrices and $\Herm_{n}$ to denote all $n \times n$ Hermitian matrices.
For $X \in \Matrix_{n}$, we define its trace as $\Tr [X] \defeq \sum_{i = 1}^n X_{ii}$. Additionally, its {\em normalized} trace is defined as $\tr [X] \defeq \frac{1}{n} \Tr [X]$.

$X \in \Matrix_{n}$ is a positive semi-definite (PSD) matrix if $X$ is Hermitian and $\mathbf{x}^{\dagger} X \mathbf{x} \geq 0$ holds for all vectors $\mathbf{x} \in \mathbb{C}^n$, where $\mathbf{x}^{\dagger}$ is the complex conjugate transpose of $\mathbf{x}$. We may write $X \geq 0$ for a PSD matrix $X$. If $X \geq 0$ satisfies $\Tr [X] = 1$, we say $X$ is a density operator. The set of all PSD matrices and density operators is denoted as $\PSD_n$ and $\Density_n$.

For a vector $\mathbf{x} = (x_1,\cdots,x_n)^T \in \mathbb{C}^n$ and $1 \leq p < \infty$, we use $\norm{\mathbf{x}}_p = (\sum_{i=1}^n |x_i|^p)^{1/p}$ to denote its $p$-norm.
For $X \in \Matrix_{n \times m}$, the Schatten $p$-norm is defined to be $\norm{X}_p = (\Tr \lrb{\abs{X}^p})^{1/p}$. We define the {\em normalized} Schatten $p$-norm as $\normthree{X}_p=\br{\tr\Br{\abs{X}^p}}^{1/p}$, where $\abs{X} \defeq \sqrt{X^\dagger X}$ and $X^\dagger$ is the complex conjugate transpose of $X$.

Let $N = 2^n$ for positive integer $n$ and $X \in \Matrix_{N}$. All the rows and columns are indexed by the subsets of $[n]$. The partial trace of $X$ with respect to $S \subseteq [n]$ is defined as a square matrix $\Tr_S [X]$ with indices ranging over the subsets of $S^c$, typically, 
\begin{equation}
  \Tr_S [X] = \sum_{s \in \set{0,1}^S} \p{I_{S^c} \otimes \bra{s}} X \p{I_{S^c} \otimes \ket{s}}.
\end{equation}
Besides, the {\em normalized} partial trace is defined as $\tr_S [X]\defeq 2^{-\abs{S}} \cdot \Tr_S \lrb{X}$. For simplicity, we denote $\tr_S [X]$ by $\ps{X}{S}$ and $\tr_k [X]$ by $\ps{X}{\set{k}}$.

Schatten $p$-norms admit \Holder's inequality (\cite[Eq. 1.174]{watrous2018theory} ):
Let $A,B\in\Matrix_{n \times m}$ and $p, p^*$ be positive real numbers satisfying $1/p+1/p^*=1$, we have
\begin{equation}\label{preliminary-Holder-2}
\normsub{A}{p} = \max\set{\abs{\Tr [C^\dagger A]}: C\in\Matrix_{n \times m}, \normsub{C}{p^*}\le 1}
\end{equation}
which implies
\begin{equation}\label{preliminary-Holder-1}
\qquad \abs{\Tr [B^\dagger A]} \leq \normsub{A}{p} \cdot \normsub{B}{p^*}.
\end{equation}
$p^*$ above is called the \emph{\Holder\ conjugate} of $p$.
\Holder's inequality can also be generalized to two sequences of matrices.
\begin{lemma}[Matrix H\"{o}lder's inequality]\cite[ Theorem 2.6]{shebrawi_albadawi_2013}\label{preliminary-Holder-3}
  Let $A_i, B_i(i=1, 2, \dots, m)$ be two sequences of matrices
and $p, q$ be positive real numbers satisfying $1/p+1/q=1$. We have
$$\abs{\Tr\left[\sum_{i=1}^mA_i^\dagger B_i\right]}\le\left(\Tr\left[\sum_{i=1}^m\abs{A_i}^p\right]\right)^{1/p}\left(\Tr\left[\sum_{i=1}^m\abs{B_i}^q\right]\right)^{1/q}.$$

Specifically, if $A_i$ and $B_i$ are positive semi-definite for all $i$, then
$$\Tr\left[\sum_{i=1}^mA_iB_i\right]\le\left(\Tr\left[\sum_{i=1}^mA_i^p\right]\right)^{1/p}\left(\Tr\left[\sum_{i=1}^mB_i^q\right]\right)^{1/q}.$$
\end{lemma}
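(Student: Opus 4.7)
The plan is to lift the two sequences $(A_i)_{i=1}^m$ and $(B_i)_{i=1}^m$ into a single pair of block-diagonal matrices and then apply the standard (single-matrix) H\"{o}lder inequality already recorded as \cref{preliminary-Holder-1}. Concretely, I would define
\begin{equation*}
\widetilde{A} \defeq \diag(A_1, A_2, \ldots, A_m), \qquad \widetilde{B} \defeq \diag(B_1, B_2, \ldots, B_m),
\end{equation*}
viewed as elements of $\Matrix_{mn \times mn'}$ (with whatever sizes are consistent). The two observations that drive the proof are: first, since the product is block-diagonal, $\Tr[\widetilde{A}^\dagger \widetilde{B}] = \sum_{i=1}^m \Tr[A_i^\dagger B_i] = \Tr\bigl[\sum_{i=1}^m A_i^\dagger B_i\bigr]$; second, because $\widetilde{A}^\dagger \widetilde{A}$ is also block-diagonal with blocks $A_i^\dagger A_i = |A_i|^2$, functional calculus gives $|\widetilde{A}|^p = \diag(|A_1|^p, \ldots, |A_m|^p)$, hence $\Tr[\,|\widetilde{A}|^p\,] = \Tr\bigl[\sum_{i=1}^m |A_i|^p\bigr]$, and analogously for $\widetilde{B}$ with exponent $q$.

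With these two identities in hand, the proof is essentially a one-line application of scalar H\"{o}lder at the level of $\widetilde{A},\widetilde{B}$: invoking \cref{preliminary-Holder-1} yields
\begin{equation*}
\abs{\Tr[\widetilde{A}^\dagger \widetilde{B}]} \le \normsub{\widetilde{A}}{p} \cdot \normsub{\widetilde{B}}{q} = \bigl(\Tr[\,|\widetilde{A}|^p\,]\bigr)^{1/p} \bigl(\Tr[\,|\widetilde{B}|^q\,]\bigr)^{1/q},
\end{equation*}
which after substituting the two identities above is exactly the first asserted inequality. The only nontrivial point to check is that $|\widetilde{A}|^p$ is really the block-diagonal matrix with blocks $|A_i|^p$; this follows by diagonalising each $|A_i|$ separately and noting that the spectral decomposition of a block-diagonal matrix is obtained by concatenating the spectral decompositions of its blocks, so any continuous functional calculus preserves the block-diagonal form.

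For the PSD specialisation, I would argue that when each $A_i, B_i$ is positive semi-definite, $A_i^\dagger = A_i$ and $|A_i| = A_i$ (and similarly for $B_i$), so the right-hand side reduces as claimed. To dispose of the absolute value on the left, note that for PSD $A_i, B_i$ we have $\Tr[A_i B_i] = \Tr[A_i^{1/2} B_i A_i^{1/2}] \ge 0$, because $A_i^{1/2} B_i A_i^{1/2}$ is itself positive semi-definite. Hence $\Tr\bigl[\sum_{i} A_i B_i\bigr]$ is a sum of non-negative reals and equals its own absolute value, giving the second inequality.

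The main (and essentially only) obstacle is the verification that $|\widetilde{A}|^p$ inherits the block-diagonal structure, i.e., that the Schatten norm of the block-diagonal matrix really decomposes as $\sum_i \Tr[|A_i|^p]$; once that is in place, the entire argument is just a bookkeeping reduction to scalar H\"{o}lder, and no further analytic input beyond \cref{preliminary-Holder-1} is required.
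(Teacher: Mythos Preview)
Your proposal is correct and is exactly the approach the paper takes: it defines $A=\bigoplus_{i=1}^m A_i$, $B=\bigoplus_{i=1}^m B_i$ and applies the single-matrix H\"{o}lder inequality \cref{preliminary-Holder-1} in one line. Your write-up is in fact more detailed than the paper's (you spell out why $|\widetilde{A}|^p$ stays block-diagonal and why the absolute value drops in the PSD case), but the underlying argument is identical.
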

\begin{proof}
  Let $A=\bigoplus_{i=1}^mA_i$ and $B=\bigoplus_{i=1}^mB_i$.
  By the H\"{o}lder's inequality for Schatten $p$-norm
$$\abs{\Tr\left[\sum_{i=1}^mA_i^\dagger B_i\right]}=\abs{\Tr\left[A^\dagger B\right]}\le\normsub{A}{p}\normsub{B}{q}=\left(\Tr\left[\sum_{i=1}^m\abs{A_i}^p\right]\right)^{1/p}\left(\Tr\left[\sum_{i=1}^m\abs{B_i}^q\right]\right)^{1/q}.$$
\end{proof}

\paragraph{Majorization}

A doubly stochastic matrix $X \in \mathbb{R}^{n \times n}$ is a non-negative matrix with all row sums and column sums equal to $1$. For $\mathbf{x} \in \mathbb{R}^n$, we let $\mathbf{x}^{\downarrow}$ be the vector by rearranging the values of $\mathbf{x}$ in decreasing order.
\begin{definition}[Majorization]
  We say $\mathbf{x}$ majorizes $\mathbf{y}$ (denoted as $\mathbf{x} \succeq \mathbf{y}$) if
  $$\sum_{i=1}^n \mathbf{x}_i = \sum_{i=1}^n  \mathbf{y}_i\text{ and }\sum_{i=1}^k \mathbf{x}^{\downarrow}_{i} \geq \sum_{i=1}^k \mathbf{y}^{\downarrow}_{i}$$
  for all $k \in [n]$. Two key properties of majorization are listed below:
\end{definition}

\begin{theorem}[Hardy-Littlewood-P\'{o}lya Theorem]\cite[page 33]{marshall1979inequalities}\label{pre:HLPthm} 
Let $\mathbf{x}, \mathbf{y} \in \mathbb{R}^n$, then $\mathbf{x} \preceq \mathbf{y}$ if and only if there exists a doubly stochastic matrix $D$ such that $\mathbf{x} = D\mathbf{y}$.
\end{theorem}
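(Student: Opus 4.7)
The plan is to establish both directions of the equivalence using the standard toolkit of doubly stochastic matrices, specifically Birkhoff--von Neumann decomposition for the easy direction and T-transforms for the harder direction.

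\emph{Forward direction} ($\mathbf{x} = D\mathbf{y} \Rightarrow \mathbf{x} \preceq \mathbf{y}$). The equality of total sums $\sum_i x_i = \sum_i y_i$ is immediate from $\sum_i D_{ij} = 1$. For the partial-sum inequalities, I would fix $k \in [n]$ and choose a set $T \subseteq [n]$ with $|T|=k$ that realizes $\sum_{i=1}^k x_i^\downarrow = \sum_{i \in T} x_i$. Then
$$\sum_{i \in T} x_i = \sum_{i \in T} \sum_j D_{ij}\, y_j = \sum_j c_j\, y_j, \qquad c_j := \sum_{i \in T} D_{ij}.$$
The coefficients satisfy $0 \le c_j \le 1$ (column sums of $D$ equal $1$) and $\sum_j c_j = k$. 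Maximizing $\sum_j c_j y_j$ over such $(c_j)$ is a linear program whose optimum is attained at a vertex, i.e.\ by assigning $c_j = 1$ to the $k$ largest $y_j$'s, giving $\sum_j c_j y_j \le \sum_{i=1}^k y_i^\downarrow$ and proving $\mathbf{x} \preceq \mathbf{y}$.

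\emph{Reverse direction} ($\mathbf{x} \preceq \mathbf{y} \Rightarrow \mathbf{x} = D\mathbf{y}$ for some doubly stochastic $D$). Without loss of generality I sort both $\mathbf{x}$ and $\mathbf{y}$ in decreasing order, and I proceed by iteratively applying T-transforms. If $\mathbf{x} = \mathbf{y}$ we are done with $D = \id$. Otherwise, let $j$ be the largest index with $x_j \ne y_j$; because $\sum_i x_i = \sum_i y_i$ and the partial sums of $\mathbf{y}^\downarrow$ dominate those of $\mathbf{x}^\downarrow$, one shows $y_j < x_j$. Let $i < j$ be the largest index with $y_i > x_i$. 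Apply the T-transform $T = (1-t)\,\id + t\, P_{(ij)}$ with
$$t = \min\!\left\{\frac{y_i - x_i}{y_i - y_j},\ \frac{x_j - y_j}{y_i - y_j}\right\} \in (0,1],$$
which is doubly stochastic. By construction, the new vector $\mathbf{y}' = T\mathbf{y}$ agrees with $\mathbf{x}$ in at least one more coordinate (either coordinate $i$ or coordinate $j$), while $\mathbf{x} \preceq \mathbf{y}' \preceq \mathbf{y}$ is preserved. Iterating at most $n-1$ times yields $\mathbf{x} = (T_N \cdots T_1)\mathbf{y}$, and the product $D = T_N \cdots T_1$ is doubly stochastic since doubly stochastic matrices are closed under multiplication.

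\emph{Main obstacle.} The delicate point is verifying the two invariants preserved by each T-transform: (i) that $T\mathbf{y}$ remains majorized below $\mathbf{y}$ and above $\mathbf{x}$, and (ii) that the number of coordinates where the current iterate differs from $\mathbf{x}$ strictly decreases. Both reduce to elementary inequalities that follow from the specific choice of $t$ and from $x_i < y_i \le y_j < x_j$ at the selected pair, but formulating a clean monovariant (for instance, the number of disagreement indices or the $\ell^1$-distance $\|\mathbf{y}^{(m)} - \mathbf{x}\|_1$) and verifying it carefully across the iteration is where the real work lies; once this is in hand, both directions of the theorem follow directly.
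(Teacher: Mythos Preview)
The paper does not supply its own proof of this statement: it is quoted as a classical result with a citation to Marshall and Olkin, \emph{Inequalities: Theory of Majorization and Its Applications}, page~33. So there is no paper proof to compare against; your proposal is being measured against the textbook argument.

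Your outline is the standard one and is essentially correct. The forward direction via the linear-programming/coefficient bound is clean and complete as written. The reverse direction via successive T-transforms is exactly the classical Hardy--Littlewood--P\'olya construction reproduced in Marshall--Olkin, so in that sense your route coincides with the cited source.

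One small slip: in the last paragraph you write the chain ``$x_i < y_i \le y_j < x_j$''. Since both vectors are sorted in decreasing order and $i<j$, the correct chain is $y_j < x_j \le x_i < y_i$ (in particular $y_i \ge y_j$, not $y_i \le y_j$). This is what you actually use when checking $t\in(0,1]$, and it is also what guarantees $y_i\neq y_j$ so that $t$ is well defined; just correct the displayed inequalities. With that fixed, the invariants you flag as the ``main obstacle'' are routine: taking $t$ to be the minimum of the two candidate values ensures one coordinate is matched exactly while the other stays on the correct side of its target, the resulting $T\mathbf{y}$ remains sorted between positions $i$ and $j$, and $\mathbf{x}\preceq T\mathbf{y}$ follows because a T-transform only redistributes mass between two coordinates without overshooting either target. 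The monovariant ``number of indices where the current iterate differs from $\mathbf{x}$'' strictly decreases, so the process terminates in at most $n-1$ steps.
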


\begin{theorem}[Schur-Horn Theorem]  \cite[page 300]{marshall1979inequalities}\label{pre:shthm} 
For $X \in \Herm_{n}$, let $\mathbf{diag} = (X_{11}, \cdots, X_{nn})^T$ be the diagonal entries , $\bda = (\lambda_1(X), \cdots, \lambda_n(X))^T$ where $\lda_i(X)$ denotes the $i$-th biggest eigenvalue of $X$. Then $\mathbf{diag} \preceq \bda$.
\end{theorem}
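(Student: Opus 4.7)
The plan is to exhibit an explicit doubly stochastic matrix $D$ satisfying $\mathbf{diag} = D \bda$ and then invoke the Hardy-Littlewood-P\'{o}lya characterization of majorization provided by \cref{pre:HLPthm}. The natural candidate comes from the spectral decomposition of the Hermitian matrix $X$: writing $X = U \Lambda U^\dagger$ with $U$ unitary and $\Lambda = \operatorname{diag}(\lambda_1(X), \ldots, \lambda_n(X))$, a direct expansion of the $(i,i)$ entry gives $X_{ii} = \sum_{j=1}^{n} \abs{U_{ij}}^2 \lambda_j(X)$. Thus, defining $D$ entrywise by $D_{ij} = \abs{U_{ij}}^2$ yields the linear relation $\mathbf{diag} = D \bda$ by construction.

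The next step is to verify that the matrix $D$ so defined is doubly stochastic. Nonnegativity is immediate from the modulus squared. The row and column sum identities then come from unitarity of $U$: for each $i$, $\sum_{j} \abs{U_{ij}}^2 = (UU^\dagger)_{ii} = 1$, and for each $j$, $\sum_{i} \abs{U_{ij}}^2 = (U^\dagger U)_{jj} = 1$. Once $D$ is identified as doubly stochastic with $\mathbf{diag} = D \bda$, \cref{pre:HLPthm} directly yields $\mathbf{diag} \preceq \bda$, which is the desired conclusion.

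I do not foresee a genuine obstacle here. The only subtlety worth flagging is that taking the modulus squared of the unitary entries is exactly what is needed to guarantee nonnegativity when $U$ is genuinely complex-valued, and the same squaring is what lines up with the Pythagorean row/column identities derived from $UU^\dagger = U^\dagger U = \id_n$. Beyond the spectral theorem and \cref{pre:HLPthm}, no further structural input is required.
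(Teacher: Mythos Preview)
Your argument is correct. The paper does not provide its own proof of \cref{pre:shthm}; it simply cites it from the reference. That said, your construction is precisely the content of the paper's \cref{pre:shthm-lemma}, which computes the same doubly stochastic matrix $D_{ij} = \abs{U_{ij}}^2$ from the spectral decomposition and verifies $\mathbf{diag} = D\bda$. In effect you have supplied the standard proof by combining \cref{pre:shthm-lemma} with \cref{pre:HLPthm}, which is exactly the route the paper implicitly relies on.
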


According to \cref{pre:HLPthm}, the majorization relation in \cref{pre:shthm} relates to a doubly stochastic matrix $D$. The following lemma implicates that $D$ has an intriguing property.

\begin{lemma}\label{pre:shthm-lemma}
Under the setting of \cref{pre:shthm}, let $X=U\Sigma U^\dagger$  be a spectral decomposition of $X$ with diagonal matrix $\Sigma$ satisfying $\Sigma_{i,i}=\lambda_i(X)$. Then it holds that $\mathbf{diag} = D \bda$, where $D_{ij} = |U_{ij}|^2$ is a doubly stochastic matrix.
\end{lemma}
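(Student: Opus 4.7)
The plan is to verify the claimed identity by direct computation from the spectral decomposition, then check the two defining properties of doubly stochasticity (nonnegativity and unit row/column sums) using unitarity of $U$.

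First I would expand the diagonal entries of $X = U\Sigma U^\dagger$ componentwise. For each index $i$,
\begin{equation*}
X_{ii} = \sum_{j,k} U_{ij}\,\Sigma_{jk}\,\overline{U_{ik}} = \sum_{j} U_{ij}\,\lambda_j(X)\,\overline{U_{ij}} = \sum_j |U_{ij}|^2 \,\lambda_j(X),
\end{equation*}
where the middle equality uses the diagonality of $\Sigma$ together with $\Sigma_{jj}=\lambda_j(X)$. Interpreting the right-hand side as the $i$-th coordinate of a matrix-vector product immediately yields $\mathbf{diag} = D\bda$ with $D_{ij} = |U_{ij}|^2$.

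Next I would verify that $D$ is doubly stochastic. Nonnegativity of $D$ is obvious from $D_{ij}=|U_{ij}|^2\ge 0$. For the row sums, unitarity of $U$ gives $UU^\dagger = \id_n$, so for each $i$,
\begin{equation*}
\sum_j D_{ij} = \sum_j |U_{ij}|^2 = \sum_j U_{ij}\,\overline{U_{ij}} = (UU^\dagger)_{ii} = 1.
\end{equation*}
Analogously, $U^\dagger U = \id_n$ yields $\sum_i D_{ij} = \sum_i |U_{ij}|^2 = (U^\dagger U)_{jj} = 1$, so the column sums are also $1$. Hence $D$ is doubly stochastic, completing the proof.

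There is no real obstacle here: the statement is essentially the standard constructive witness behind the Schur-Horn theorem, and both steps reduce to elementary entrywise manipulations with the unitary $U$. The only mild subtlety worth flagging is that, since the spectral decomposition of $X$ is not unique when eigenvalues repeat, the matrix $D$ depends on the chosen $U$; the lemma should be read as asserting the existence of such a $D$ for any fixed spectral decomposition, which the argument above provides.
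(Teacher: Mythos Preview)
Your proof is correct and follows essentially the same approach as the paper: both expand $X_{ii}$ via the spectral decomposition to obtain $X_{ii}=\sum_j |U_{ij}|^2\lambda_j(X)$. You are in fact slightly more thorough, since you explicitly verify the doubly stochastic property of $D$ from the unitarity of $U$, whereas the paper's proof omits this check.
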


\begin{proof}
    For $\forall i\in [n]$, $X_{i,i}=\sum_{j=1}^n U_{ij}\Sigma_{jj}\overline{U_{ij}}=\sum_{j=1}^n \abs{U_{ij}}^2\Sigma_{jj}=\sum_{j=1}^n D_{ij}\bda_{j}$.	
\end{proof}

\paragraph{Entropy}

In order to define our entropy notations used in our proof of hypercontractivity inequality, we start with a weighted relative entropy, which unifies the three definitions of relative entropies below.
Note that in this work the entropy is usually defined on quantum operators,
as a quantum analog of the entropy for functions in the context of log-Sobolev inequalities.
This is in contrast to the standard quantum information theory definition of entropy primary used for quantum states.
\begin{definition}[Weighted Relative Entropy]
Given an index set $I$, with a sequence of non-negative numbers $(w_i)_{i \in I}$ satisfying $\sum_{i \in I} w_i = 1$ and PSD matrices $(X_i)_{i \in I}$, we define the \emph{weighted relative entropy} as
$$\text{Entropy}\lrb{(w_i,X_i)_{i\in I}} \defeq \sum_{i \in I} w_i \cdot \tr \lrb{X_i \ln X_i} - \p{\tr \lrb{\sum_{i \in I} w_i \cdot X_i }} \ln \p{\tr\lrb{ \sum_{i \in I} w_i \cdot X_i}}.$$
\end{definition}

Generally, one can easily check $\text{Entropy}\lrb{(w_i,X_i)_{i \in I}} \geq 0$ with Jensen's inequality. In this paper, we consider three special forms of the weighted relative entropy.


\begin{definition}[Two-point Relative Entropy]\label{def:twopointrel}
\noindent For any $a,b>0,\ep\in[0,1]$,
\begin{align*}
\Ent_{\ep}\lrb{a, b} &\defeq \text{Entropy}\left[(1-\ep,a),(\ep,b)\right]\\
&=(1-\ep)a\ln a+\ep b\ln b-\br{\br{1-\ep}a+\ep b}\ln \br{\br{1-\ep}a+\ep b}. \nonumber
\end{align*}
\end{definition}

This definition is used fruitfully when exploring isoperimetry properties over general probability distributions. See \cite{BARTHE2000419, bob98entropy} for more details.
\begin{definition}[One-Partite Relative Entropy]
\noindent For $X \geq 0$ and $q \geq 1$, we define the entropy as:
\begin{align*}\Ent_{q}\lrb{X} &\defeq \text{Entropy}\left[\p{1,X^q}\right]\\
&=\tr \lrb{X^q\ln X^q}-\tr \lrb{X^q}\ln\tr \lrb{X^q}.\nonumber
\end{align*}
\end{definition}

This definition is related to the quantum relative entropy. Assume $X=2^n \phi$ where $\phi$ is a density operator. Then $\Ent_1[X]= D\p{\phi \parallel \f{\id_{2^n}}{2^n}}$, where $D\p{\phi \parallel \f{\id_{2^n}}{2^n}}$ is the quantum relative entropy between $\phi$ and the maximally mixed state.

This particular quantity can be traced back to the work presented in \cite{3f89ed44-5628-3112-93f5-de6a2b2ab43a}. Besides, this term of relative entropy will inevitably arise when we employ the similar derivative techniques used in \cite{10.1063/5.0056388,beigi2020quantum} to prove the quantum hypercontractive inequality.

Here we denote $\Ent_1[X^q]$ by $\Ent_q[X]$ to keep it consistent with the definition below.

\begin{definition}[Bernoulli Multi-Partite Relative Entropy]\label{pre:mpentdef}
\noindent For $X \in \PSD_{2^n}$, $S \subseteq [n]$ and $q \geq 1$, we define the entropy as:
\begin{eqnarray}
&&\Ent_{\ep, S, q}\lrb{X} \defeq \text{Entropy}
\left[
\p{(1-\ep)^{|S|-|T|}\ep^{|T|},\ps{X}{T}^q}_{T \subseteq S}
\right]\label{ent}\\
&=&\sum_{T\subseteq S}(1-\ep)^{|S|-|T|}\ep^{|T|}\tau\Br{\ps{X}{T}^q\ln\br{\ps{X}{T}^q}}-\nonumber\\
&&\br{\tau\Br{\sum_{T\subseteq S}(1-\ep)^{|S|-|T|}\ep^{|T|}\ps{X}{T}^q}}\ln\br{\tau\Br{\sum_{T\subseteq S}(1-\ep)^{|S|-|T|}\ep^{|T|}\ps{X}{T}^q}}.\nonumber
\end{eqnarray}
\end{definition}

This definition is equivalent to \cref{intro:entdef}. The term \emph{Bernoulli} is derived from the internal utilization of the Bernoulli distribution. We identify it as \emph{multi-partite} due to the inclusion of all partial traces of $X$ within the expression. This quantity reflects, to some degree, the relative entropy of each constituent party of $X$. It's important to note that this denotes a natural extension of the one-partite relative entropy.

\paragraph{Quantum Erasure Channel}

A quantum channel $\Phi$ is a linear map in $L(\Matrix_{n},\Matrix_{n'})$ with the following properties: (i) completely positive : $\forall \text{ PSD matrix } X \in \Matrix_{n n'}$, we have $\p{\Phi_n \otimes \id_{n'}}(X) \geq 0$; (ii) trace preserving : $\forall X \in \Matrix_{n}$, $\Tr\lrb{\Phi(X)} = \Tr \lrb{X}$. A useful equation is \cite[Eq. 2.70]{watrous2018theory}: for  $\Phi: \Matrix_n\to\Matrix_m$, $X\in\Matrix_n$ and $Y\in\Matrix_m$, $\Tr(\Phi(X)\cdot Y) = \Tr(X \cdot \Phi^\dagger(Y))$.

To introduce quantum erasure channels, we first define an expanding operator $\QECR$ in $L(\Matrix_{2}, \Matrix_{3})$:
$$\QECR(X) \defeq \begin{pmatrix} X &  \\ & \tr(X) \end{pmatrix}.$$
The single-qubit quantum erasure channel (QEC) $\QECR_{\ep}$ is defined as
$$\QECR_{\ep}(X) \defeq \Pi_\ep  \cdot \QECR(X)\text{, where } \Pi_\ep \defeq \begin{pmatrix} 1-\ep & & \\ & 1-\ep& \\ &&2\ep \end{pmatrix}.$$
Here $\Pi_{\ep}$ can be viewed as an erasure noise matrix. The definition above can be extended to the $n$-qubit case as follows.
Recall that Pauli matrices
\[\set{\sigma_0=\begin{pmatrix}1 & 0\\0 & 1\end{pmatrix}, \sigma_1=\begin{pmatrix}0 & 1\\1 & 0\end{pmatrix}, \sigma_2=\begin{pmatrix}0 & -i\\i & 0\end{pmatrix}, \sigma_3=\begin{pmatrix}1 & 0\\0 & -1\end{pmatrix}}\]
form a base of $\Matrix_{2}$ and thus $\set{\sigma_0, \sigma_1, \sigma_2, \sigma_3}^{\otimes n}$ form a base of $\Matrix_{2^n}$.
For any $X = \sum_{x \in \mathbb{Z}_4^n} \left(\widehat{X}_x \bigotimes_{i=1}^n \sigma_{x_i}\right)$,
we define $\QECR^{\otimes n}$ in $L(\Matrix_{2^n}, \Matrix_{3^n})$ as
$$\QECR^{\otimes n}(X) \defeq \sum_{x \in \mathbb{Z}_4^n} \widehat{X}_x \left(\bigotimes_{i=1}^n \QECR\left(\sigma_{x_i}\right)\right)$$
and the $n$-qubit quantum erasure channel is defined as $\QECR_\ep^{\otimes n}(X)=\Pi_\ep^{\otimes n} \cdot \QECR^{\otimes n}(X)$. When the context is clear, we omit the symbols $\otimes n$ and use $\QECR$, $\Pi_\ep$ and $\QECR_\ep$ to denote the $n$-qubit case. An $n$-qubit QEC induces a norm, which we refer to as the $(\ep, q)$-norm in $\Matrix_{2^n}$.
\begin{definition}\label{def:epqnorm}
	For any integer $n>0,\ep\in[0,1]$ and $q\geq 1$, the $(\ep,q)$-norm of $X\in\Matrix_{2^n}$ is
	\[\normthree{X}_{\ep, q} \defeq \p{2^{-n} \cdot \Tr \lrb{ \Pi_\ep \cdot |\QECR(X)|^q}}^{\f{1}{q}}.\]
\end{definition}
To see that it is a norm, it suffices to verify that  $\normthree{X}_{\ep,q}=\normthree{ (\Pi_\ep^{1/2q})^{\otimes n}|\QECR^{\otimes n}(X)|(\Pi_\ep^{1/2q})^{\otimes n}}_q$.

For a quantum channel with Kraus representation $\Phi(\rho)=\sum_i X_i\rho X_i^\dagger$,
we define the conjugate map
$$\Phi^\dagger(\rho) = \sum_i X_i^\dagger\rho X_i.$$
In particular, the conjugate map of the quantum erasure channel is $\QEC^\dagger: \Matrix_3\to\Matrix_2$:
\begin{equation}
  \QEC^\dagger(\rho) = (1-\ep)\cdot\begin{pmatrix}1 & 0 & 0 \\0 & 1 & 0 \end{pmatrix}  \rho
   \begin{pmatrix} 1 & 0 \\ 0 & 1 \\ 0 & 0 \end{pmatrix}+ \ep\bra{\Errsym}\rho\ket{\Errsym}\cdot\id_2.
\end{equation}

Note that the set of all possible outputs of a quantum erasure channel forms a subspace of $\Matrix_{3^n}$, where all elements in the subspaces commute with $\PiepN$.
Throughout this paper, we will only encounter matrices in this subspace of $\Matrix_{3^n}$.
Such matrices have the form of
\begin{equation}\label{app:def:goodform}
  Y = \sum_{S\subseteq[n]}\ps{Y}{S}\otimes\Err^{\otimes S}
\end{equation}
where each $\ps{Y}{S}$ lies in the subspace $\Matrix_{2^{n-\abs{S}}}$. The following two facts can be easily verified by the definition.
\begin{fact}
  For any operator $X\in\Matrix_{2^n}$, $\QECRN(X)$ is of \goodform.
\end{fact}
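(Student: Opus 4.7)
The plan is to expand $X$ in the Pauli basis and compute $\QECRN(X)$ term by term, then recognize the resulting blocks as normalized partial traces $\ps{X}{S}$. The starting observation is the single-qubit identity
\[
\QECR(\sigma) \;=\; \sigma + \tr(\sigma)\cdot\Err,
\]
valid for every $\sigma\in\Matrix_{2}$, once we identify $\Matrix_{2}$ with the top-left $2\times 2$ block of $\Matrix_{3}$ and interpret $\Err = \ketbra{\Errsym}$ as the rank-one projector onto the error coordinate. This identity is immediate from the block-diagonal definition of $\QECR$ given earlier in the preliminaries.

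Next I would write $X = \sum_{x\in\mathbb{Z}_4^n}\widehat{X}_x \bigotimes_{i=1}^n \sigma_{x_i}$ and apply the tensor-power definition of $\QECRN$. Each factor becomes $\sigma_{x_i} + \tr(\sigma_{x_i})\Err$, and expanding the tensor product by multilinearity produces a sum over subsets $S\subseteq[n]$ recording which positions pick the ``error'' summand. After reordering tensor factors to group the $\Err$ slots on the right, this yields
\[
\QECRN(X) \;=\; \sum_{S\subseteq[n]} \Bigl(\sum_{x\in\mathbb{Z}_4^n}\widehat{X}_x \prod_{i\in S}\tr(\sigma_{x_i}) \bigotimes_{i\in S^c}\sigma_{x_i}\Bigr)\otimes\Err^{\otimes S}.
\]

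The final step is to identify the bracketed inner sum with $\ps{X}{S} = \tr_S X$. Since $\tr(\sigma_j) = \delta_{j,0}$ on the Pauli basis, the product $\prod_{i\in S}\tr(\sigma_{x_i})$ vanishes unless $x_i = 0$ for every $i\in S$, and for the surviving terms $\bigotimes_{i\in S^c}\sigma_{x_i}$ is exactly what the normalized partial trace $\tr_S$ returns from $\bigotimes_{i=1}^n\sigma_{x_i}$. Linearity in $x$ then identifies the bracket with $\ps{X}{S}\in\Matrix_{2^{n-|S|}}$, giving
\[
\QECRN(X) \;=\; \sum_{S\subseteq[n]} \ps{X}{S}\otimes\Err^{\otimes S},
\]
which is the \goodform. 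The only mild subtlety is the bookkeeping of tensor orderings when collecting the $\Err$ slots; the rest reduces to the Pauli-trace vanishing identity together with the definition of $\tr_S$, so I do not foresee any serious obstacle.
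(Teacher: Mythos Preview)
Your argument is correct: expanding in the Pauli basis, using $\tr(\sigma_j)=\delta_{j,0}$, and regrouping by the subset $S$ of error slots gives exactly $\QECRN(X)=\sum_{S\subseteq[n]}(\tr_S X)\otimes\Err^{\otimes S}$, which is the \goodform. The paper does not spell out a proof---it simply notes the fact follows from the definition---and your computation is precisely the natural verification the paper has in mind (indeed, the identity $\normthree{X}_{\ep,q}^q=\wsum{n}{\normthree{\tr_S X}_q^q}$ used at the start of \cref{section:hc} presupposes exactly this block decomposition).
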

\begin{fact}
Suppose $Y\in\Matrix_{3^n}$ is of \goodform.
Then $Y$ commutes with $\PiepN$ for all $0\le\ep\le 1$.
\end{fact}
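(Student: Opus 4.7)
The plan is to exploit a block decomposition of $\Matrix_{3^n}$ indexed by error-pattern subsets of $[n]$, with respect to which both $\PiepN$ and $Y$ are simultaneously block-diagonal, and moreover $\PiepN$ acts as a scalar on each block.

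I would first decompose the single-site matrix as $\Pi_\ep = (1-\ep) P + 2\ep\,\Err$, where $P = \mathrm{diag}(1,1,0)$ is the orthogonal projection of $\Matrix_3$ onto the non-error subspace and $\Err = \mathrm{diag}(0,0,1)$ is the orthogonal projection onto the error state. Since $P\cdot\Err = \Err\cdot P = 0$ and $P + \Err = \id_3$, expanding the $n$-fold tensor product gives
\begin{equation*}
\PiepN \;=\; \sum_{T\subseteq[n]} (1-\ep)^{n-|T|}(2\ep)^{|T|}\, \Pi^{T},
\qquad \Pi^{T} \;:=\; P^{\otimes T^c}\otimes\Err^{\otimes T},
\end{equation*}
and the projections $\{\Pi^T\}_{T\subseteq[n]}$ are mutually orthogonal and sum to $\id_{3^n}$.

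The main step is then to observe that in the good-form expression $Y = \sum_{S\subseteq[n]} \ps{Y}{S}\otimes\Err^{\otimes S}$, each summand $\ps{Y}{S}\otimes\Err^{\otimes S}$ is supported precisely in the range of $\Pi^{S}$. This holds because the assumption $\ps{Y}{S}\in\Matrix_{2^{n-|S|}}$ means $\ps{Y}{S}$ is embedded on the coordinates in $S^c$ via the non-error block, so $P^{\otimes S^c}\cdot\ps{Y}{S}\cdot P^{\otimes S^c} = \ps{Y}{S}$, while $\Err\cdot\Err = \Err$ takes care of the error coordinates. Combined with the orthogonality relation $P\Err = 0$, a short check yields
\begin{equation*}
\Pi^{T}\cdot\bigl(\ps{Y}{S}\otimes\Err^{\otimes S}\bigr) \;=\; \bigl(\ps{Y}{S}\otimes\Err^{\otimes S}\bigr)\cdot\Pi^{T} \;=\; \delta_{T,S}\,\ps{Y}{S}\otimes\Err^{\otimes S}.
\end{equation*}
Summing over $T$, both $\PiepN\cdot Y$ and $Y\cdot\PiepN$ evaluate to $\sum_{S} (1-\ep)^{n-|S|}(2\ep)^{|S|}\,\ps{Y}{S}\otimes\Err^{\otimes S}$, which gives the commutation.

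I do not anticipate any real obstacle, since the statement is essentially linear-algebraic bookkeeping. The only subtlety worth highlighting is the convention that the ``block'' $\ps{Y}{S}$ appearing in the good-form decomposition lives in the non-error subspace on the coordinates in $S^c$, so that it is annihilated from both sides by the $\Err$ factors supported on those same coordinates, thereby enforcing the Kronecker delta above.
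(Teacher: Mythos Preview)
Your proposal is correct and is exactly the kind of block-diagonal verification the paper has in mind; the paper itself omits the argument entirely, stating only that the fact ``can be easily verified by the definition.'' Your decomposition $\Pi_\ep=(1-\ep)P+2\ep\,\Err$ and the resulting expansion of $\PiepN$ into the mutually orthogonal projections $\Pi^T$ make the verification completely explicit, and the Kronecker-delta relation you derive is the right way to see that each good-form summand is an eigenvector of $\PiepN$ from both sides with the same scalar.
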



\section{Hypercontractivity}\label{section:hc}
In this section we prove the hypercontractivity for the product of quantum erasure channels. Recall that for $X\in \Matrix_{2^n}$,
$$
    \normthree{X}_{\ep, q} \defeq \p{2^{-n} \cdot \Tr \lrb{ \PiepN \cdot |\QECRN(X)|^q}}^{\f{1}{q}} = \p{\wsum{n}{\normthree{\tr_S X}_q^q}}^{\f{1}{q}}.
$$


The following lemma states that it suffices to restrict our argument to positive semi-definite matrices,
as the supremum of $\normthree{X}_{\ep,q}/\normthree{X}_p$ is achieved by a positive semi-definite matrix.
This is implicitly implied by Watrous' paper~\cite{Wat05notes} by noting that $\normthree{X}_{\ep,q} = \normthree{\p{\PiepN}^{1/q}\cdot\QECRN(X)}_q$
and that the map $X\mapsto\p{\PiepN}^{1/q}\cdot\QECRN(X)$ is completely positive. For the sake of completeness, the proof of the following lemma is provided in \cref{app:psd}.

\begin{restatable}[Watrous' theorem~\cite{Wat05notes}]{lemma}{hyperlemmapsd}\label{hyper:lemma:psd}
  For $1\le p\le q,\ep\in [0,1]$, we have
  $$
   \sup_{X\in \Matrix_{2^n},X\neq 0} \f{\normthree{X}_{\ep,q}}{\normthree{X}_p} = \sup_{X\in \PSD_{2^n},X\neq 0} \f{\normthree{X}_{\ep,q}}{\normthree{X}_{p}}.
  $$
\end{restatable}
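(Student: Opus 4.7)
The plan is to express $\normthree{X}_{\ep,q}$ as the Schatten $q$-norm of a completely positive map applied to $X$, and then invoke Watrous' theorem, which asserts that for any CP map $\Phi$ and $1 \le p \le q$, the ratio $\norm{\Phi(X)}_q / \norm{X}_p$ is maximized by PSD inputs.

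\emph{Step 1: A CP map encoding the $(\ep,q)$-norm.} I would define $\Psi : \Matrix_{2^n} \to \Matrix_{3^n}$ by
\[
    \Psi(X) \defeq \p{\Pi_\ep^{1/(2q)}}^{\otimes n} \cdot \QECRN(X) \cdot \p{\Pi_\ep^{1/(2q)}}^{\otimes n}.
\]
The expanding operator $\QECR$ is completely positive (immediate from an explicit Kraus representation), so $\QECRN$ is CP, and since sandwiching by a fixed positive operator preserves complete positivity, $\Psi$ is CP.

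\emph{Step 2: Relating $\norm{\Psi(X)}_q$ to $\normthree{X}_{\ep,q}$.} The key fact recorded in the preliminaries is that $\QECRN(X)$ always lies in the \goodform\ subspace and therefore commutes with $\PiepN$, regardless of whether $X$ is Hermitian. Taking Hermitian conjugates, $\QECRN(X)^\dagger$ also commutes with $\PiepN$, and hence so does $|\QECRN(X)| = (\QECRN(X)^\dagger\QECRN(X))^{1/2}$. Using these commutations, a direct computation shows that $\Psi(X)^\dagger\Psi(X) = \p{\Pi_\ep^{2/q}}^{\otimes n} \cdot |\QECRN(X)|^2$, so $|\Psi(X)| = \p{\Pi_\ep^{1/q}}^{\otimes n} |\QECRN(X)|$ and consequently $|\Psi(X)|^q = \PiepN \cdot |\QECRN(X)|^q$. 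Taking the trace yields
\[
    \norm{\Psi(X)}_q^q = \Tr\Br{\PiepN \cdot |\QECRN(X)|^q} = 2^n \normthree{X}_{\ep,q}^q.
\]

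\emph{Step 3: Watrous' theorem.} Combining the previous step with $\norm{X}_p = 2^{n/p}\normthree{X}_p$ gives
\[
    \f{\normthree{X}_{\ep,q}}{\normthree{X}_p} \;=\; 2^{n\p{1/p - 1/q}} \cdot \f{\norm{\Psi(X)}_q}{\norm{X}_p},
\]
so the supremum of the left-hand side over the nonzero matrices in $\Matrix_{2^n}$ (respectively $\PSD_{2^n}$) equals, up to the same scalar, the $p\to q$ Schatten norm of $\Psi$ (respectively its restriction to PSD inputs). Watrous' theorem~\cite{Wat05notes}, applied to the CP map $\Psi$ with $1\le p\le q$, asserts that these two suprema coincide, which completes the proof.

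The main subtlety is Step 2: the commutativity of $\QECRN(X)$ with $\PiepN$ for every (not necessarily Hermitian) $X$ is essential for the trace identity, and once this is established the remainder is routine and the main claim is an application of Watrous' result.
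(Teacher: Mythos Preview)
Your argument is correct, and indeed the paper itself remarks (just before stating the lemma) that the result follows exactly along your lines: one observes that $\normthree{X}_{\ep,q}$ equals the normalized Schatten $q$-norm of a completely positive map applied to $X$ and then invokes Watrous' theorem as a black box. Your Steps~1--3 make this precise, and the commutativity of $\QECRN(X)$ (and hence of its adjoint and absolute value) with $\PiepN$ is exactly the point needed to make Step~2 go through.

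The paper's actual proof in the appendix takes a different, self-contained route: rather than citing Watrous' theorem, it reproduces the underlying argument directly for this norm. Given an arbitrary $X$ with $\normthree{X}_p=1$, it takes the singular value decomposition $X=\sum_i l_i\ketbra{u_i}{v_i}$, forms the two PSD matrices $X_L=\sum_i l_i\ketbra{u_i}$ and $X_R=\sum_i l_i\ketbra{v_i}$, and then, by expanding $\normthree{X}_{\ep,q}$ via H\"older duality and applying Cauchy--Schwarz termwise, shows that $\normthree{X}_{\ep,q}\le\sqrt{\normthree{X_L}_{\ep,q}\,\normthree{X_R}_{\ep,q}}$. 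Your approach is cleaner and more conceptual; the paper's approach has the advantage of being entirely self-contained and not requiring the reader to look up the precise hypotheses of Watrous' result.
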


The main theorem of this section is the following.
\begin{theorem}[Hypercontractivity for the product of QECs]\label{hyper:thm:hc}
	For any $X\in \Matrix_{2^n}$, $ q\ge p\ge 1, \ep\in[0, 1],c\ge 1$ satisfying $1-\ep \leq \p{\f{p-1}{q-1}}^c$, it holds that
	\begin{equation}\label{hyper:eq:thm:main}
		\normthree{X}_{\ep,q} \le \normthree{X}_p,
	\end{equation}
	whenever either of the following two cases holds.
	
	\begin{itemize}
		\item \textbf{Case 1}: $c=1$ and $1\le p\le 2\le q$.
		\item \textbf{Case 2}: $c=2$ and $1\le p\le q\le 2$.
	\end{itemize}	
\end{theorem}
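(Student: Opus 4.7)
The plan is to apply the ODE-differentiation framework in the spirit of Gross, King, and Beigi, using the variable multipartite log-Sobolev inequality (\cref{hyper:lemma1}) and the refined quantum Gross' lemma (\cref{hyper:lemma2}) as the key ingredients.

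First, by \cref{hyper:lemma:psd} it suffices to establish \cref{hyper:eq:thm:main} for $X\succeq 0$, in which case $F(\ep,q) := \normthree{X}_{\ep,q}^{q} = \wsumr{n}{n}{\tr\lrb{\ps{X}{S}^{q}}}$. I may also assume the extremal constraint is tight, $1-\ep = \br{\f{p-1}{q-1}}^{c}$, by the monotonicity of $\normthree{X}_{\ep,q}$ in $\ep$ (which follows from the contractivity of the unital partial trace on Schatten $q$-norm for $q\ge 1$). I then introduce the smooth path $q(t)-1=(p-1)e^{2t}$, $1-\ep(t)=e^{-2ct}$ for $t\in[0,T]$, which saturates the constraint at every $t$, starts at $(q(0),\ep(0))=(p,0)$, and reaches $(q(T),\ep(T))=(q,\ep)$, with $q'(t)=2(q(t)-1)$ and $\ep'(t)=2c(1-\ep(t))$. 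The goal becomes $\f{d}{dt}\br{\f{1}{q(t)}\ln F(\ep(t),q(t))}\le 0$ on $[0,T]$.

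Differentiating $F$ splits into a $q$-part producing the trace entropy $\wsum{n}{\tr\lrb{\ps{X}{S}^{q}\ln \ps{X}{S}^{q}}}$ and an $\ep$-part which I identify combinatorially with $\f{1}{1-\ep}\wsum{n}{\sum_{k\in[n]\setminus S}\br{\tr\lrb{\ps{X}{S\cup\{k\}}^{q}}-\tr\lrb{\ps{X}{S}^{q}}}}$. Using the parameter identity $\f{q\ep'}{(1-\ep)q'}=\f{cq}{q-1}$, the condition $f'(t)\le 0$ rearranges to
\[
    \Ent_{\ep,[n],q}\lrb{X} \;\le\; \f{cq}{q-1}\,\wsum{n}{\sum_{k\in[n]\setminus S}\lrb{\tr\lrb{\ps{X}{S}^{q}}-\tr\lrb{\ps{X}{S\cup\{k\}}^{q}}}}.
\]
Applying \cref{hyper:lemma1} with $m=n$ bounds the left-hand side above by twice a sum of (i) erasure differences $\tr[\ps{X}{S}^{q}]-\tr[\ps{X}{S\cup\{k\}}^{q}]$ that directly match the right-hand side, and (ii) Dirichlet-type terms $\tr[\ps{X}{S}^{q}]-\tr[(\tr_{k}\ps{X}{S}^{q/2})^{2}]$. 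The refined Gross' lemma (\cref{hyper:lemma2}) dominates each Dirichlet term by a multiple of the corresponding erasure difference, with a constant depending on whether $q\ge 2$ or $q\le 2$; a careful tally then shows the combined constant fits under $cq/(q-1)$ with $c=1$ in Case 1 and $c=2$ in Case 2.

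The main obstacle lies in this last step. The classical quantum Gross' lemma converts an $L_{2}$ Dirichlet form into the $L_{q}$ log-derivative via the factor $q^{2}/(2(q-1))$, but here one must compare the Dirichlet term against an \emph{erasure gap}, which is a discrete partial-trace quantity rather than a single-site $L_{q}$ gradient. Bridging the two via \cref{hyper:lemma2} relies on the Schur--Horn majorization between eigenvalues and diagonals of PSD block matrices, and the one-sided nature of that majorization for $q\le 2$ is precisely what forces the doubling $c=1\to c=2$ in Case 2. Verifying that the combined constants from the log-Sobolev inequality and the refined Gross' lemma respect the prefactor $cq/(q-1)$ uniformly along the entire path $t\in[0,T]$ is the decisive quantitative step.
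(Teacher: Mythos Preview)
Your treatment of \textbf{Case 2} is essentially the paper's proof: reduce to $X\succeq 0$ via \cref{hyper:lemma:psd}, differentiate $\ln\normthree{X}_{\ep(t),q(t)}$ along a path saturating $1-\ep=\bigl(\tfrac{p-1}{q-1}\bigr)^{c}$, reduce nonpositivity of the derivative to
\[
\Ent_{\ep,[n],q}\lrb{X}\;\le\;\frac{cq}{q-1}\,\wsum{n}{\sum_{k\in[n]\setminus S}\bigl(\tr\lrb{\ps{X}{S}^{q}}-\tr\lrb{\ps{X}{S\cup\{k\}}^{q}}\bigr)},
\]
and then combine \cref{hyper:lemma1} (with $m=n$) and \cref{hyper:lemma2}. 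Your time parametrization differs from the paper's by a harmless rescaling, and the resulting constant $2+\tfrac{2}{q-1}=\tfrac{2q}{q-1}$ matches $cq/(q-1)$ exactly when $c=2$.

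There is, however, a genuine gap in your handling of \textbf{Case 1}. You propose to run the same differentiation argument and assert that ``a careful tally'' yields the constant $cq/(q-1)$ with $c=1$. This does not work, for two independent reasons. First, both \cref{hyper:lemma1} and \cref{hyper:lemma2} are stated and proved only for $q\in[1,2]$; along your path in Case~1 one has $q(t)\ge 2$ for the relevant portion, so neither lemma is available. (The paper explicitly notes that the monotonicity underlying this framework fails numerically for $q\ge 2$.) Second, even if one could extend those lemmas formally, the combination always produces the factor $2+\tfrac{2}{q-1}=\tfrac{2q}{q-1}$, i.e.\ $c=2$; to reach $c=1$ you would need the Dirichlet-to-erasure conversion factor in \cref{hyper:lemma2} to satisfy $2+2\alpha\le \tfrac{q}{q-1}$, which forces $\alpha<0$ for all $q\ge 2$. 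So the ``careful tally'' cannot close.

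The paper instead proves Case~1 by an entirely different route: induction on $n$ using King's norm-compression inequality for $2\times 2$ block PSD matrices (\cref{appendix-89fh89aedf}), which goes in the favorable direction precisely because $p\le 2\le q$. Your proposal needs to replace the Case~1 argument with this (or an equivalent) inductive scheme.
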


\begin{proof}
	\textbf{Case 1} has a simple inductive proof by using the norm-compression inequality~\cite{2003CMaPh.242..531K}.
	For the sake of completeness, we provide a proof in \cref{appendix-proof-HC-1p2q}.
\textbf{Case} 2 is much more challenging. The rest of this section is devoted to this case.
	
    By \cref{hyper:lemma:psd}, we can assume $X$ to be positive semi-definite. We note that if $q=1$, our conclusion holds trivially. Assume $q>1$ for the following arguments. Fix $X$ and $p\in[1,2]$. Recall that
    \[\normthree{X}_{\ep,q}=\p{2^{-n} \cdot \Tr \lrb{ \PiepN \cdot |\QECRN(X)|^q}}^{\f{1}{q}}~\mbox{and}~\normthree{X}_p=\p{2^{-n}\cdot \Tr\lrb{\abs{X}^p}}^{\f{1}{p}}.\]
    Introduce a parameter $t$ and let $\f{p-1}{q-1}=e^{-t/c}$ and $\ep=1-e^{-t}$.
    It is easy to see that \cref{hyper:eq:thm:main} holds with equality at $t=0$, since in this case $q=p$ and $\ep=0$.
	
	Let $g(t)=\ln\,\normthree{X}_{\ep,q}$, calculating the derivative of $g(t)$ over $t$ yields
	\begin{align*}
		g'(t) &= \f{q-1}{c\cdot q^2}\cdot \f{1}{\normthree{X}_{\ep,q}^q}\cdot \\
		&\qquad\qquad\lrb{\Ent_{\ep,[n],q}\lrb{X} - \f{c\cdot q}{q-1}\wsum{n}{\sum_{k\in [n]\backslash S}\p{\tr \lrb{ \ps{X}{S}^q} - \tr\lrb{ \ps{X}{\p{S\cup \{k\}}}^q }}}}
	\end{align*}
	and recall that $\Ent_{\ep,[n],q}\lrb{X}$ is defined as
	$$
	    \wsum{n}{\tr\lrb{\ps{X}{S}^q\ln\ps{X}{S}^q}} - \p{\wsum{n}{\tr\lrb{\ps{X}{S}^q}}}\ln\p{\wsum{n}{\tr\lrb{\ps{X}{S}^q}}}.
	$$
	To show $g'(t)\le 0$, we need to prove
	\begin{equation}\label{hyper:eq:after_derivative}
	    \Ent_{\ep,[n],q}\lrb{X} \le  \f{c\cdot q}{q-1}\wsum{n}{\sum_{k\in [n]\backslash S}\p{\tr \lrb{ \ps{X}{S}^q} - \tr\lrb{ \ps{X}{\p{S\cup \{k\}}}^q }}}.
	\end{equation}
    Combining \cref{hyper:lemma1} with $m=n$ and \cref{hyper:lemma2} for $c^\prime=4/q^2$, we have
    \begin{align*}
      \Ent_{\ep,[n],q}\lrb{X} &\le \p{2+\f{2}{q-1}} \wsum{n}{\sum_{k\in [n]\backslash S}\p{\tr \lrb{\ps{X}{S}^q} - \tr \lrb{\ps{X}{\p{S\cup \{k\}}}^q}}}.
    \end{align*}
    Note that
    $$2+\f{2}{q-1} = \f{2\cdot q}{q-1} = \f{c\cdot q}{q-1}.$$
    This completes the proof.

\end{proof}

The following lemma, to some extent, resembles quantum Gross' lemma~\cite{10.1215/S0012-7094-75-04237-4,king2014hypercontractivity},
and the quantum Stroock-Varopoulos inequality~\cite{10.1063/1.4933219,beigi2020quantum,bardet2017estimating}.
Notice that, these inequalities are essential in the proof of the hypercontractivity of unital qubit channels~\cite{10.1215/S0012-7094-75-04237-4,king2014hypercontractivity,10.1063/5.0056388}.

\begin{restatable}[Refined Gross' lemma]{lemma}{hyperlemmaii}\label{hyper:lemma2}
	For $X\in\PSD_{2^n}$ and  $q\in (1,2]$,
	\begin{align}
		\tr \lrb{X^q} - \tr\lrb{ \p{\tr_n \lrb{X^{\f{q}{2}}}}^2} &\le \f{1}{q-1}\p{\tr \lrb{X^q} - \tr \lrb{\p{\tr_n X}^q}}\label{eqn2}.
	\end{align}
\end{restatable}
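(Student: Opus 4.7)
The inequality, after multiplying both sides by $(q-1)>0$ and regrouping, is equivalent to the convex-combination bound
\begin{equation*}
\tr[(\tr_n X)^q] \;\le\; (q-1)\,\tr\bigl[(\tr_n X^{q/2})^2\bigr] + (2-q)\,\tr[X^q],
\end{equation*}
whose two coefficients are non-negative and sum to $1$ for $q\in(1,2]$. Both sides coincide at $q=1$ (each equals $\tr X$) and at $q=2$ (each equals $\tr[(\tr_n X)^2]$), so the inequality is tight at these endpoints and my task is to prove it in the interior.

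My plan is to spectrally decompose $X^{q/2}=V\Sigma V^\dagger$ with $V$ unitary and $\Sigma=\diag(\sigma_1,\ldots,\sigma_{2^n})$, so that simultaneously $X=V\Sigma^{2/q}V^\dagger$ and $X^q=V\Sigma^2 V^\dagger$ are diagonalized by the same $V$. Writing the normalized partial trace in the computational basis as $(\tr_n Y)_{ii}=\tfrac{1}{2}\sum_{a\in\{0,1\}}Y_{(i,a),(i,a)}$, the diagonal entries of $\tr_n X$ and $\tr_n X^{q/2}$ become expectations of $\sigma_k^{2/q}$ and $\sigma_k$ respectively under the \emph{same} probability weights $p_i(k):=\tfrac{1}{2}\sum_a|V_{(i,a),k}|^2$, and $\{p_i(k)\}$ forms a doubly stochastic array in the sense of \cref{pre:shthm-lemma}. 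Combining this joint-diagonalization data with operator convexity of $t\mapsto t^q$ on $[1,2]$ and with the Schur--Horn-type majorization of \cref{pre:shthm-lemma}, the matrix inequality reduces to the scalar two-point inequality
\begin{equation*}
\left(\frac{a^{2/q}+b^{2/q}}{2}\right)^{q} \;\le\; (q-1)\left(\frac{a+b}{2}\right)^2 + (2-q)\,\frac{a^2+b^2}{2},\qquad a,b\ge 0,\;q\in(1,2].
\end{equation*}

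The scalar inequality is then proved by elementary calculus. By degree-$2$ homogeneity, set $a+b=2$ and $a=1+x$, $b=1-x$ with $x\in[0,1]$; the inequality becomes
\begin{equation*}
\left(\frac{(1+x)^{2/q}+(1-x)^{2/q}}{2}\right)^q \;\le\; 1+(2-q)\,x^2.
\end{equation*}
Both sides equal $1$ at $x=0$, and a direct Taylor expansion shows that the difference is $\tfrac{(2-q)(q-1)}{q}\,x^2+O(x^4)$, non-negative to leading order on $q\in(1,2]$; non-negativity on the whole of $[0,1]$ is then verified by analyzing the sign of the second derivative of the difference.

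The main obstacle is the matrix-to-scalar reduction. \Cref{pre:shthm-lemma} naturally yields the `wrong'-direction inequality, namely that the sum of $q$-th powers of the diagonal entries \emph{lower}-bounds $\tr[M^q]$, whereas we need an upper bound on $\tr[(\tr_n X)^q]$. The right direction must be extracted by combining operator convexity of $t\mapsto t^q$ (which upper-bounds $(\tr_n X)^q$ by a convex average of eigenvalue-based expressions) with the fact that the same unitary $V$ diagonalizes $X$, $X^{q/2}$, and $X^q$ simultaneously. A further subtlety is that the scalar inequality is tight at both $q=1$ and $q=2$ (and at $a=b$), so any proof relying purely on a one-sided Taylor expansion at a single endpoint cannot close the argument.
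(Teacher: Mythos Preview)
Your rearrangement into the convex-combination form is correct, and so is the observation that the same unitary $V$ diagonalizes $X$, $X^{q/2}$ and $X^q$ simultaneously, so that the diagonals of $\tr_n X$, $\tr_n X^{q/2}$ and $\tr_n X^q$ are all governed by the \emph{same} weights $p_i(k)=\tfrac12\sum_a |V_{(i,a),k}|^2$. But the central step---``the matrix inequality reduces to the scalar two-point inequality''---is a genuine gap. For each fixed $i$, the weights $\{p_i(k)\}_k$ form a probability distribution over \emph{all} $2^n$ eigenvalue indices $k$, not just two. Hence each diagonal entry $(\tr_n X)_{ii}=\sum_k p_i(k)\sigma_k^{2/q}$ is a general convex combination, and the scalar inequality you actually need is the multi-point version
\[
\Bigl(\sum_k \alpha_k \lambda_k\Bigr)^{q} \;\le\; (q-1)\Bigl(\sum_k \alpha_k \lambda_k^{q/2}\Bigr)^{2} + (2-q)\sum_k \alpha_k \lambda_k^{q}
\]
for an arbitrary probability vector $(\alpha_k)$. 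Your two-point inequality is the special case $\alpha=(\tfrac12,\tfrac12)$, i.e.\ essentially the $n=1$ instance of the lemma; it does not imply the general case. The paper's route to the multi-point inequality is H\"older with exponents $\tfrac{q}{2-q}$ and $\tfrac{q}{2(q-1)}$, followed by $a^s b^{1-s}\le sa+(1-s)b$ and convexity of $t\mapsto t^q$---a short argument, but one that genuinely uses more than two points.

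There is a second, independent gap in the reduction. You only control the \emph{diagonals} of $\tr_n X$ and $\tr_n X^{q/2}$, but the left-hand side involves $\tr[(\tr_n X)^q]$, which depends on the full spectrum of $\tr_n X$. As you correctly note, Schur--Horn plus convexity of $t^q$ give only $\tr[(\tr_n X)^q]\ge\|\diag(\tr_n X)\|_q^q$, the wrong direction. Invoking operator convexity of $t\mapsto t^q$ does not fix this: the Jensen-type bound $(\tr_n X)^q\le\tr_n(X^q)$ yields only the trivial $\tr[(\tr_n X)^q]\le\tr[X^q]$, not an upper bound in terms of the diagonal weights $p_i(k)$. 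The paper handles this by first conjugating $X$ by $V\otimes\id$ so that $\tr_n X$ is genuinely diagonal (and hence its eigenvalues \emph{are} its diagonal entries, giving an equality there), and then lower-bounds $\tr[(\tr_n X^{q/2})^2]$ by $\|\diag(\tr_n X^{q/2})\|_2^2$---the one direction that Schur--Horn does give. Only after both of these moves does one land on the per-row multi-point scalar inequality above, which is the content of \cref{lem:doublystho}.
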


\begin{remark}
This lemma differs to the quantum Stroock-Varopoulos inequality from \cite{beigi2020quantum} 
 in that the right hand side is not a Dirichlet form, thus is new as far as we can tell.
\end{remark}

The proof of \cref{hyper:lemma2} is built on the following Lemma.

\begin{lemma}\label{lem:doublystho}
	Let $q\in (1,2], \bda\in\reals_{\geq 0}^{2^n}$ and $D$ be a $2^n\times 2^n$ doubly stochastic matrix. It holds that
	\begin{equation}\label{ineq:new_goal}
		\norm{\bda}_q^q - \norm{D \bda^{\f{q}{2}}}_2^2 \le \f{1}{q-1}\p{\norm{\bda}_q^q -\norm{D \bda}_q^q}.
	\end{equation}
\end{lemma}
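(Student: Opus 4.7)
The plan is to reduce Lemma \ref{lem:doublystho} to a ``per-row'' inequality---a one-variable statement about moments of $\lambda$ under a probability measure---using the double-stochasticity of $D$, and then prove the per-row inequality from the log-convexity of moments (Lyapunov's inequality) combined with weighted AM--GM. For the reduction, multiplying the target by $q-1>0$ and rearranging gives the equivalent form
\[
\norm{D\bda}_q^q \le (2-q)\norm{\bda}_q^q + (q-1)\norm{D\bda^{q/2}}_2^2.
\]
I would expand the right-hand norms as sums over $i$, and use column-stochasticity $\sum_i D_{ij}=1$ to rewrite $\norm{\bda}_q^q = \sum_{i,j} D_{ij}\lambda_j^q$. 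The inequality then decomposes as a sum over rows, so it suffices to prove that for each $i$, with $p_j := D_{ij}$ defining a probability distribution on $[2^n]$,
\[
\p{\textstyle\sum_j p_j\lambda_j}^q \;\le\; (2-q)\textstyle\sum_j p_j\lambda_j^q + (q-1)\p{\textstyle\sum_j p_j\lambda_j^{q/2}}^2.
\]

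For this per-row inequality, let $\mathbb{E}$ denote expectation under $p$. Since $t\mapsto \log\mathbb{E}[\lambda^t]$ is convex (Lyapunov, equivalently H\"older), and the interpolation $1 = \tfrac{2(q-1)}{q}\cdot \tfrac{q}{2} + \tfrac{2-q}{q}\cdot q$ has non-negative coefficients for $q\in(1,2]$, Lyapunov's inequality gives
\[
\mathbb{E}[\lambda] \;\le\; \p{\mathbb{E}[\lambda^{q/2}]}^{2(q-1)/q}\cdot\p{\mathbb{E}[\lambda^q]}^{(2-q)/q}.
\]
Raising both sides to the $q$-th power and then applying weighted AM--GM with weights $q-1$ and $2-q$ (non-negative, summing to $1$) yields
\[
(\mathbb{E}[\lambda])^q \;\le\; \p{(\mathbb{E}[\lambda^{q/2}])^2}^{q-1}(\mathbb{E}[\lambda^q])^{2-q} \;\le\; (q-1)(\mathbb{E}[\lambda^{q/2}])^2 + (2-q)\mathbb{E}[\lambda^q],
\]
which is exactly the per-row inequality, completing the proof.

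I do not anticipate a substantive obstacle beyond identifying the correct reduction: the three exponents $1$, $q/2$, and $q$ sit on the log-moment curve in precisely the convex-interpolation configuration that makes the composition of Lyapunov plus weighted AM--GM tight enough to recover the claimed inequality. The only bookkeeping required is verifying that the interpolation weights $\tfrac{2(q-1)}{q}$, $\tfrac{2-q}{q}$ and the AM--GM weights $q-1$, $2-q$ are all non-negative and sum to $1$ for $q\in(1,2]$, which is immediate, and that equality holds at both endpoints $q=1$ and $q=2$, providing a useful sanity check.
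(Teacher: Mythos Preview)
Your proposal is correct and follows essentially the same approach as the paper: reduce to a per-row inequality via column-stochasticity, then use H\"older/Lyapunov on the exponents $1,\ q/2,\ q$ followed by weighted AM--GM. The only cosmetic difference is in the order of the final two steps: the paper applies AM--GM to $\normthree{\lambda}_q^{2-q}\normthree{\lambda}_{q/2}^{q-1}$ and then invokes convexity of $x\mapsto x^q$, whereas you raise to the $q$-th power first and apply AM--GM directly to $\bigl((\mathbb{E}\lambda^{q/2})^2\bigr)^{q-1}(\mathbb{E}\lambda^q)^{2-q}$, which is slightly more streamlined but equivalent.
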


\begin{proof}[Proof of \cref{hyper:lemma2}]
   We will reduce \eqref{eqn2} to \eqref{ineq:new_goal}.
%

    Without loss of generality, we assume $\tr_n X$ is diagonal from now on. Otherwise we could replace $X$ by $\p{V\otimes \id}^\dagger  X\p{V\otimes \id}$ where $V$ is a unitary to diagonalize $\tr_n X$ without changing any term in the equation above.

    Let $\diag(X)$ be the column vector of $X$'s diagonal entries and $\bda(X)$ be the column vector of $X$'s eigenvalues. Both vectors are sorted in non-increasing order. Then there exists a unitary $U$ s.t. $X=U\Sigma U^\dagger$, where $\diag(\Sigma)=\bda(X)$. According to \cref{pre:shthm} and \cref{pre:shthm-lemma}, we have $\bda(X) \succeq\diag(X)$ and $\diag(X)=D\bda(X)$ for some doubly stochastic matrix $D$, where $D$ depends only on $U$ and is independent of $\Sigma$. This property is crucial since we can deduce $\diag\p{X^{\f{q}{2}}}=D \bda\p{X^{\f{q}{2}}}=D \bda(X)^{\f{q}{2}}$ with the same $D$, where we use $v^q$ to represent the vector $(v_1^q,\ldots,v_n^q)^T$ for column vector $v=(v_1,\ldots,v_n)^T$.

    Besides, since $\tr_n X$ is diagonal, we know $\bda(\tr_n X \otimes \id)=\diag(\tr_n X \otimes \id)=M \diag(X) = MD \bda(X)$ and $\diag\p{\tr_n \lrb{X^{\f{q}{2}}} \otimes \id}=M \diag\p{X^{\f{q}{2}}} = MD \bda(X)^{\f{q}{2}}$, where $M=\f{1}{2} \cdot \id_{2^{n-1}}\otimes\lrb{\begin{matrix}1\ 1\\1\ 1\end{matrix}}$. As an intuition, $M\diag(X)$ calculates average values of each two corresponding entries of $\diag(X)$. We use $D'$ to denote $M\cdot D$. Note that $D'$ is also doubly stochastic.

    With all the definitions above, we see that
    \begin{align}
    	2^n\tr \lrb{X^q} &= \norm{\bda(X)}_q^q \\
    	2^n\tr \lrb{  \p{\tr_nX}^q} = 2^n\tr \lrb{\p{\tr_n X}^q\otimes \id_2} &= \norm{D' \bda(X)}_q^q \\
        2^n \tr \lrb{\p{\tr_n \lrb{X^{\f{q}{2}}}}^2} = \norm{\tr_n \lrb{X^{\f{q}{2}}} \otimes \id}_2^2 &\ge \norm{\diag\p{\tr_n \lrb{X^{\f{q}{2}}} \otimes \id}}_2^2 = \norm{D' \bda(X)^{\f{q}{2}}}_2^2
    \end{align}
    
    Therefore,
    \begin{align*}
    	2^n\p{\tr \lrb{X^q} - \tr\lrb{ \p{\tr_n \lrb{X^{\f{q}{2}}}}^2}} &\le \norm{\bda(X)}_q^q - \norm{D' \bda(X)^{\f{q}{2}}}_2^2 \\
    	2^n\p{\tr \lrb{X^q} - \tr \lrb{\p{\tr_n X}^q}} &= \norm{\bda(X)}_q^q -\norm{D' \bda(X)}_q^q.
    \end{align*}
    According to \cref{lem:doublystho}, we have
    $$
		\norm{\bda(X)}_q^q - \norm{D' \bda(X)^{\f{q}{2}}}_2^2 \le \f{1}{q-1}\p{\norm{\bda(X)}_q^q -\norm{D' \bda(X)}_q^q}
    $$
    since $D'$ is a doubly stochastic matrix, thus we conclude that
    $$
        \tr \lrb{X^q} - \tr\lrb{ \p{\tr_n \lrb{X^{\f{q}{2}}}}^2} \le \f{1}{q-1}\p{\tr \lrb{X^q} - \tr \lrb{\p{\tr_n X}^q}}.
    $$

    \end{proof}

    \begin{proof}[Proof of \cref{lem:doublystho}]
%
%
%
    First notice that
    \begin{equation*}
    \sum_{ij}D_{ij}\lda_j^q = \sum_j\p{\sum_i D_{ij}}\lda_j^q = \sum_j\lda_j^q = \normsub{\bda}{q}^q.
    \end{equation*}
    Thus \cref{ineq:new_goal} is equivalent to
    \begin{equation*}
    \sum_i\sum_jD_{ij}\lda_j^q + \sum_i\p{\sum_jD_{ij}\lda_j^\frac{q}{2}}^2\le\f{1}{q-1}\p{\sum_i\sum_jD_{ij}\lda_j^q-\sum_i\p{\sum_jD_{ij}\lda_j}^q}.
    \end{equation*}
    It suffices to prove for every $i$,
    \begin{equation*}
    \sum_jD_{ij}\lda_j^q + \p{\sum_jD_{ij}\lda_j^\frac{q}{2}}^2\le\f{1}{q-1}\p{\sum_jD_{ij}\lda_j^q-\p{\sum_jD_{ij}\lda_j}^q}
    \end{equation*}
    holds, which again suffices to prove
    \begin{equation*}
        \sum_{i=1}^n\alpha_i\lda_i^q - \p{\sum_{i=1}^n \alpha_i\lda_i^{\f{q}{2}}}^2 \le \f{1}{q-1}\p{\sum_{i=1}^n \alpha_i\lda_i^q - \p{\sum_{i=1}^n\alpha_i\lda_i}^q}
    \end{equation*}
    for an arbitrary non-negative vector $\{\alpha_i\}_{i=1}^n$ satisfying $\sum_{i=1}^n\alpha_i=1$ and an arbitrary non-negative vector $\{\lda_i\}_{i=1}^n$ and $q\in (1,2]$.

    Treating $\set{\alpha_i}_{i=1}^n$ as a distribution over $[n]$ and considering the normalized $p$-norms
    \begin{equation}
      \normthree{\lda}_p = \p{\sum_{i}\alpha_i\abs{\lda_i}^p}^{1/p} = \p{\expec{i}{\abs{\lda_i}^p}}^{1/p}.
    \end{equation}
    Our goal can be tidied into
    \begin{equation}
      \normthree{\lda}_{1}^q \le (2-q)\normthree{\lda}_{q}^q + (q-1)\normthree{\lda}_{q/2}^q.
    \end{equation}
    We first use the \Holder's inequality for probability spaces.
    Let $p=q/(2-q)\ge 1$ and $p^*=q/(2(q-1))$ be $p$'s \Holder\ conjugate, we obtain
    \begin{multline}
      \normthree{\lda}_1 = \expec{}{\lda_i} = \expec{i}{\lda_i^{2-q}\cdot \lda_i^{q-1}} \\
      \le\normthree{\lda^{2-q}}_p\cdot\normthree{\lda^{q-1}}_{p^*} =\p{\expec{i}{\lda_i^q}}^{(2-q)/q}\cdot\p{\expec{i}{\lda_i^{q/2}}}^{2(q-1)/q}
      = \normthree{\lda}_q^{2-q}\cdot\normthree{\lda}_{q/2}^{q-1}.
    \end{multline}
    We now use the following inequality:
    Let $0\le s \le 1$ and $a, b\ge 0$, then
    \begin{equation}\label{equation:mult-to-sum}
      a^sb^{1-s} \le sa + (1-s)b.
    \end{equation}
    This follows simply by the concavity of the $\log$ function:
    $$\log a^sb^{1-s} = s \log a + (1-s)\log b \le \log\p{sa + (1-s)b}.$$
    Then we apply \cref{equation:mult-to-sum} and get
    \begin{equation}
      \normthree{\lda}_q^{2-q}\cdot\normthree{\lda}_{q/2}^{q-1} \le (2-q)\normthree{\lda}_q + (q-1)\normthree{\lda}_{q/2}.
    \end{equation}
    Finally, by convexity of the function $x\mapsto x^q$,
    \begin{equation}
      \normthree{\lda}_1^q \le \p{(2-q)\normthree{\lda}_q + (q-1)\normthree{\lda}_{q/2}}^q\le(2-q)\normthree{\lda}_q^q + (q-1)\normthree{\lda}_{q/2}^q.
    \end{equation}

\end{proof}

\section{Variable Multipartite Log-Sobolev Inequality}\label{sec:ls}
In this section, we will prove a quantum log-Sobolev inequality for multipartite Bernoulli entropy, which plays a crucial role in the proof of the hypercontractivity for quantum erasure channels. The standard quantum log-Sobolev inequality is due to Kastoryano and Temme~\cite{kastoryano2013quantum}.

\begin{theorem}[Quantum log-Sobolev inequality]\cite{kastoryano2013quantum}\label{hyper:theorem:log-sobolev}
	For $A\in \PSD_{2^n}$, it holds that
	$$
	\Ent_2 \lrb{A} = \tr \lrb{A^2\ln A^2} - \tr \lrb{A^2}\ln \tr \lrb{A^2} \le 2 \sum_{k=1}^n \p{\tr \lrb{A^2} - \tr \lrb{\p{\tr_k \lrb{ A}}^2}}.
	$$
\end{theorem}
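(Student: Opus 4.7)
The plan is to prove \cref{hyper:theorem:log-sobolev} in two stages, following the classical template of Gross: first I would establish a sharp single-qubit (base-case) inequality, and then promote it to $n$ qubits via a quantum tensorization / subadditivity argument. Since $\Ent_2[A]$ depends quadratically on $A$, it suffices by continuity and scaling to treat $A$ positive definite.

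For the single-qubit case, the key observation is that $\Ent_2[A] = \tr[A^2 \ln A^2] - \tr[A^2]\ln \tr[A^2]$ and $\tr[A^2] - (\tr A)^2$ are both unitarily invariant, depending only on the eigenvalues of $A$. Writing $A = U\,\mathrm{diag}(a,b)\,U^{\dagger}$ with $a,b\geq 0$, the desired inequality reduces to the classical two-point Gross inequality
\begin{equation*}
\tfrac{1}{2}\!\left(a^{2}\ln a^{2} + b^{2}\ln b^{2}\right) - \tfrac{a^{2}+b^{2}}{2}\ln\tfrac{a^{2}+b^{2}}{2} \;\leq\; \tfrac{(a-b)^{2}}{2},
\end{equation*}
which is the log-Sobolev inequality for the uniform measure on $\{0,1\}$ with the sharp constant, applied to $f(0)=a$, $f(1)=b$. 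This is classical and its proof is an elementary one-variable calculus check after normalizing $a+b$.

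For the $n$-qubit case, I would use a subadditivity / tensorization argument for the entropy functional $\Ent_2[\cdot]$. The cleanest path is a semigroup-integration approach à la Gross: consider, for each qubit $k$, the partial depolarizing map $\mathcal{D}_k(A) = (\tr_k A)\otimes \id_k$ whose Dirichlet form with respect to the normalized Hilbert-Schmidt inner product is precisely $\tr[A^{2}] - \tr[(\tr_k A)^{2}]$. Let $\mathcal{L} = \sum_{k=1}^{n}(\mathrm{id} - \mathcal{D}_k)$ generate the tensored depolarizing semigroup $P_t = e^{-t\mathcal{L}}$, whose equilibrium is the maximally mixed state. Differentiating the entropy $\Ent_{2}[P_t(A)]$ in $t$ and integrating from $0$ to $\infty$ expresses $\Ent_2[A]$ as $\int_0^\infty \mathrm{(instantaneous\ entropy\ dissipation)}\,dt$. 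The instantaneous dissipation decomposes as a sum over qubits, and the one-qubit log-Sobolev inequality proven above applied ``fiberwise'' in the eigenbasis of the local $\mathcal{D}_k$-invariant state bounds each summand by $2\bigl(\tr[A^{2}] - \tr[(\tr_k A)^{2}]\bigr)$. An alternative, algebraic route is a direct subadditivity decomposition: telescope $\Ent_{2}[A]$ along the chain of partial depolarizations $A \mapsto \mathcal{D}_1 A \mapsto \mathcal{D}_1\mathcal{D}_2 A \mapsto \cdots$ and apply the single-qubit bound to each telescoping difference, using joint convexity of the operator entropy to handle the conditioning.

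The main obstacle I expect is the tensorization step, because in the non-commutative setting the naive ``sum of local entropies'' bound for $\Ent_2$ is not a literal analog of the classical statement: the logarithm in $A^{2}\ln A^{2}$ does not split across tensor factors when $A$ fails to be a tensor product. One must therefore either invoke a quantum subadditivity / joint-convexity result (such as Lieb's concavity theorem, or the Araki-Lieb-style subadditivity of the weighted $L^{2}$ entropy) to push the single-qubit bound through, or else carry out the semigroup differentiation carefully and verify that all cross-terms vanish under the trace. Once that decomposition is in place, the base case immediately yields the claimed factor of $2$ and completes the proof.
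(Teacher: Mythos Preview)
The paper does not prove \cref{hyper:theorem:log-sobolev}; it is quoted verbatim from Kastoryano--Temme~\cite{kastoryano2013quantum} and used only as the base case $m=0$ of the variable multipartite inequality (\cref{hyper:lemma1}). So there is no ``paper's own proof'' to compare against here.

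That said, your sketch correctly isolates the single-qubit case as the classical two-point Gross inequality, which is exactly how the proof in~\cite{kastoryano2013quantum} proceeds. Your honest identification of the obstacle is also on target: the genuine content of the theorem is precisely the tensorization step, and in the quantum setting neither the semigroup-integration argument nor the telescoping subadditivity argument you propose goes through without substantial additional input. The quantity $\Ent_2[P_t A]$ differentiated in $t$ does \emph{not} automatically split into a sum of local Dirichlet forms controlled by local entropies, because the operator $\ln (P_t A)^2$ does not factor across sites; one cannot simply apply the one-qubit bound ``fiberwise'' without first proving something like a quantum subadditivity-of-entropy or a regularity condition (in the language of~\cite{kastoryano2013quantum}, that the $L_2$ log-Sobolev constant coincides with the $L_1$ one, which they verify explicitly for the depolarizing generator). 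Your proposal names the right tools (Lieb concavity, joint convexity) but stops short of saying which one actually closes the gap, and that is exactly the step that makes this a theorem rather than an exercise. If you want a self-contained argument, the cleanest route is the one in~\cite{kastoryano2013quantum}: prove $L_p$-regularity of the single-qubit depolarizing generator directly, which then allows the Gross integration to tensorize.
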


In \Cref{hyper:theorem:log-sobolev}, the dimension of the system is fixed. We now state our main result of the variable multipartite quantum log-Sobolev inequality, where the dimension of the system varies subject to a random partial trace.
Recall the definition
\begin{multline*}
\Ent_{\ep,[m],q}\lrb{X}=\sum_{S\subseteq[m]}(1-\epsilon)^{m-\abs{S}}\epsilon^{\abs{S}} \tr\lrb{\left(\tr_S X\right)^q\ln\left(\tr_S X\right)^q}\\
-\left(\sum_{S\subseteq[m]}(1-\epsilon)^{m-\abs{S}}\epsilon^{\abs{S}}\tr\lrb{\left(\tr_S X\right)^q}\right)\ln\left(\sum_{S\subseteq[m]}(1-\epsilon)^{m-\abs{S}}\epsilon^{\abs{S}}\tr\lrb{\left(\tr_S X\right)^q}\right).
\end{multline*}

\begin{theorem}[Variable Quantum log-Sobolev inequality]\label{hyper:lemma1}
		Let $m,n$ be integers such that $m\le n$ and $X\in\PSD_{2^n}$.
		For $q\in[1,2], \ep\in[0, 1]$, it holds that
		\begin{equation}\label{hyper:eq:induction}\begin{split}
		\Ent_{\ep,[m],q}\lrb{X} \le
        2\wsum{m}{\sum_{k\in[m]\backslash S}}
        \p{\tr\lrb{\ps{X}{S}^{q}}-\tr\lrb{\ps{X}{\p{S\cup\set{k}}}^{q}}} \\
		+2\wsum{m}{\sum_{k\in[n]\backslash S}\p{\tr\lrb{\ps{X}{S}^q}-\tr\lrb{ {\left(\tr_{k}\lrb{\ps{X}{S}^{\f{q}{2}}}\right)^2}}}}.
		\end{split}\end{equation}
    where $X_T=\Tr_{T^c}X/2^{n-|T|}$ for $T\subseteq[n]$, and $\Ent_{\ep,[m],q}$ is the multipartite Bernoulli entropy defined as
    \begin{multline}\label{intro:entdef}
    \Ent_{\ep,[m],q}\lrb{X}=\sum_{S\subseteq[m]}(1-\epsilon)^{m-\abs{S}}\epsilon^{\abs{S}} \tr\lrb{\left(\tr_S X\right)^q\ln\left(\tr_S X\right)^q}\\
    -\left(\sum_{S\subseteq[m]}(1-\epsilon)^{m-\abs{S}}\epsilon^{\abs{S}}\tr\lrb{\left(\tr_S X\right)^q}\right)\ln\left(\sum_{S\subseteq[m]}(1-\epsilon)^{m-\abs{S}}\epsilon^{\abs{S}}\tr\lrb{\left(\tr_S X\right)^q}\right).
    \end{multline}
\end{theorem}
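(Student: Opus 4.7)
My plan is to prove \cref{hyper:lemma1} by induction on $m$. For the base case $m=0$, the set $[m]=\emptyset$ forces $S=\emptyset$, so the multipartite Bernoulli entropy collapses to the one-partite entropy $\Ent_q(X)$, the first sum on the right-hand side vanishes, and the second sum reduces to $2\sum_{k=1}^n\left(\tr(X^q)-\tr((\tr_k X^{q/2})^2)\right)$. Setting $A=X^{q/2}$ identifies $\Ent_2(A)=\Ent_q(X)$, and the claim then follows immediately from the Kastoryano--Temme quantum log-Sobolev inequality (\cref{hyper:theorem:log-sobolev}) applied to $A$.

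For the inductive step from $m-1$ to $m$, I would split every sum according to whether the index $m$ lies in $S$. Writing $w'_{S'}=(1-\ep)^{m-1-|S'|}\ep^{|S'|}$ for $S'\subseteq[m-1]$ and applying the general chain rule for the Entropy functional with the two-cell partition $\{S:m\notin S\}\cup\{S:m\in S\}$, one obtains the decomposition
\[
\Ent_{\ep,[m],q}(X) \;=\; (1-\ep)\,\Ent_{\ep,[m-1],q}(X)\;+\;\ep\,\Ent_{\ep,[m-1],q}(\tr_m X)\;+\;E_m,
\]
where $E_m$ is a residual two-point Entropy term between the averaged operators $\bar\rho_0=\sum_{S'}w'_{S'}(\tr_{S'}X)^q$ and $\bar\rho_1=\sum_{S'}w'_{S'}(\tr_{S'\cup\{m\}}X)^q$ carrying weights $(1-\ep,\ep)$.

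An analogous partition of the right-hand side of the lemma shows that the second sum decomposes without residual as $(1-\ep)$ times the inductive-hypothesis second sum applied to $X$, plus $\ep$ times the inductive-hypothesis second sum applied to $\tr_m X$. The first sum decomposes the same way but leaves behind the extra contribution $2(1-\ep)\sum_{S'}w'_{S'}\left(\tr((\tr_{S'}X)^q)-\tr((\tr_{S'\cup\{m\}}X)^q)\right)$ coming from the $k=m$ terms in the $m\notin S$ pieces. Invoking the inductive hypothesis on both $X$ and $\tr_m X$ (each with parameter $m-1$) reduces the goal to the single residual estimate
\[
E_m \;\le\; 2(1-\ep)\sum_{S'\subseteq[m-1]}w'_{S'}\left(\tr((\tr_{S'}X)^q)-\tr((\tr_{S'\cup\{m\}}X)^q)\right).
\]

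The main obstacle is establishing this bound on $E_m$. A purely scalar reduction via operator convexity of $x\ln x$ together with the two-point comparison $\Ent_\ep(\tr\bar\rho_0,\tr\bar\rho_1)\le 2(1-\ep)(\tr\bar\rho_0-\tr\bar\rho_1)$ is too weak: already for highly entangled PSD inputs such as a maximally entangled state, the ratio $\tr((\tr_k Y)^q)/\tr(Y^q)$ can drop well below $2^{1-q}$, and one can exhibit small examples where this naive scalar inequality fails. To overcome this, I would combine operator convexity of $x\ln x$ with the refined Gross-type identity in \cref{hyper:lemma2}, using the ``$k=m$'' monotone-decrease contribution from the first sum in concert with unused slack from the Gross-type Dirichlet terms in the second sum, so that both inputs jointly control the residual $E_m$. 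This combined estimate is the technically delicate step, and its need is precisely why the paper emphasizes the refined Gross' lemma as the central new ingredient beyond the standard log-Sobolev machinery.
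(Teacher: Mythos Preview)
Your inductive skeleton is exactly the one the paper uses: base case $m=0$ via the Kastoryano--Temme inequality with $A=X^{q/2}$, the chain-rule split
\[
\Ent_{\ep,[m],q}(X)=(1-\ep)\Ent_{\ep,[m-1],q}(X)+\ep\,\Ent_{\ep,[m-1],q}(\tr_m X)+E_m,
\]
the matching decomposition of the right-hand side, and the reduction to the residual bound
\[
E_m\le 2(1-\ep)\sum_{S'\subseteq[m-1]}w'_{S'}\bigl(\tr[(\tr_{S'}X)^q]-\tr[(\tr_{S'\cup\{m\}}X)^q]\bigr).
\]
Up to this point you and the paper agree completely.

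The gap is in how you handle $E_m$. You claim the scalar two-point comparison $\Ent_\ep(a_0,a_1)\le 2(1-\ep)(a_0-a_1)$ is ``too weak'' and propose bringing in the refined Gross' lemma (\cref{hyper:lemma2}). This is a misdiagnosis. The residual $E_m$ is precisely the \emph{scalar} quantity $\Ent_\ep(a_0,a_1)$ with $a_i=\sum_{S'}w'_{S'}\tr[(\tr_{S'\cup i\{m\}}X)^q]$, and the paper bounds it purely by scalar analysis (\cref{claim:eq}). The two ingredients are: (i) the elementary inequality $\Ent_\ep(a,b)\le 2(1-\ep)(a-b)$ valid whenever $b\le a\le 16b$ (\cref{app:lemma:techlemma}); and (ii) the termwise ratio control $\tr[(\tr_{S'}X)^q]\le 4^q\,\tr[(\tr_{S'\cup\{m\}}X)^q]\le 16\,\tr[(\tr_{S'\cup\{m\}}X)^q]$ for $q\le 2$, which follows from Zhang's operator inequality $A\preceq 2\,\Tr_1[A]\otimes\id_2$ (\cref{app:lemma:partialtracenormd}). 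Your maximally-entangled counterexample gives a ratio $4^{1-q}\ge 1/4$, comfortably above the $1/16$ threshold actually required; the threshold $2^{1-q}$ you worry about is not the relevant one.

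Moreover, the refined Gross' lemma plays no role in the proof of \cref{hyper:lemma1}; it enters only later, when \cref{hyper:lemma1} is combined with \cref{hyper:lemma2} to derive the hypercontractivity bound (\cref{hyper:thm:hc}). So your plan to absorb $E_m$ using \cref{hyper:lemma2} and ``unused slack from the Dirichlet terms'' is both unnecessary and misdirected: once you invoke the induction hypothesis, there is no slack left in the $K$-terms, and the $J$-residual alone suffices to cover $E_m$ via the scalar argument above.
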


To the best of our knowledge, even the classical counterpart of~\cref{hyper:lemma1} is also unknown. We write it down for readers who are interested in classical log-Sobolev inequalities and their applications.

\begin{cor}[Variable Classical log-Sobolev inequality]
	For $f:\set{0,1}^n\rightarrow\reals_{\geq 0}$, integers $m\leq n$, $T\subseteq[n]$,we define $f_T:\set{0,1}^{|T|}\rightarrow\reals_{\geq 0}$ to be $f_T(x)=\expec{y\sim\set{0,1}^{|T^c|}}{f(x,y)}$, where $x$ and $y$ are placed on $T$ and $T^c$, respectively.
For $\ep\in[0, 1], q\in[1, 2]$, define
	\begin{eqnarray*}
		\Ent_{\ep,[m],q}\lrb{f}=&&\sum_{S\subseteq[m]}\br{1-\ep}^{m-|S|}\ep^{|S|}\expec{}{f_{S^c}^q\ln\br{f_{S^c}^q}}\\
		&-&\br{\sum_{S\subseteq[m]}\br{1-\ep}^{m-|S|}\ep^{|S|}\expec{}{f_{S^c}^q}\ln\br{\sum_{S\subseteq[m]}\br{1-\ep}^{m-|S|}\ep^{|S|}\expec{}{f_{S^c}^q}}}.
	\end{eqnarray*}
	It holds that 
	\begin{eqnarray*}
		\Ent_{\ep,[m],q}\lrb{f}&\leq&2\sum_{S\subseteq[m]}\br{1-\ep}^{m-|S|}\ep^{|S|}\sum_{k\in[m]\setminus S}\br{\expec{}{f^q_{S^c}}-\expec{}{f^q_{\br{S\cup\set{k}}^c}}}\\
		&+&2\sum_{S\subseteq[m]}\br{1-\ep}^{m-|S|}\ep^{|S|}\sum_{k\in[n]\setminus S}\br{\expec{}{f^q_{S^c}}-\expec{}{\br{f_{S^c}^{q/2}}_{\br{S\cup\set{k}}^c}^2}}.
	\end{eqnarray*}
	
\end{cor}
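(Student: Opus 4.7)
The plan is to derive this corollary as a direct specialization of the quantum variable log-Sobolev inequality~\cref{hyper:lemma1}. Given $f:\set{0,1}^n\to\reals_{\geq 0}$, I would form the diagonal PSD matrix
$$
X \defeq \sum_{x\in\set{0,1}^n} f(x)\,\ketbra{x} \in \PSD_{2^n},
$$
and apply \cref{hyper:lemma1} to $X$ with the same parameters $m,\ep,q$. Since $X$ is diagonal with non-negative entries, every operation appearing in \cref{hyper:eq:induction}---normalized partial trace, scalar power, logarithm, squaring, and matrix multiplication---preserves diagonality, so each quantum quantity collapses to its classical commutative counterpart.

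The verification then reduces to a term-by-term dictionary between the two inequalities. The key identity is that, for every $S\subseteq[n]$,
$$
\tr_S X \;=\; \sum_{y\in\set{0,1}^{|S^c|}} f_{S^c}(y)\,\ketbra{y},
$$
since the normalized partial trace over the qubits indexed by $S$ is exactly the classical marginalization that defines $f_{S^c}$. Consequently $(\tr_S X)^q$ is diagonal with entries $f_{S^c}(y)^q$, giving $\tr\lrb{(\tr_S X)^q}=\expec{}{f_{S^c}^q}$ and $\tr\lrb{(\tr_S X)^q\ln(\tr_S X)^q}=\expec{}{f_{S^c}^q\ln f_{S^c}^q}$; summing over $S$ with the Bernoulli weights shows that the quantum multipartite entropy $\Ent_{\ep,[m],q}\lrb{X}$ coincides with the classical $\Ent_{\ep,[m],q}\lrb{f}$.

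For the right-hand side of \cref{hyper:eq:induction}, the first sum translates immediately via $\tr\lrb{(\tr_{S\cup\set{k}} X)^q}=\expec{}{f_{(S\cup\set{k})^c}^q}$. For the second sum, $(\tr_S X)^{q/2}$ is diagonal with entries $f_{S^c}(y)^{q/2}$; its partial trace over a coordinate $k\in[n]\setminus S$ remains diagonal with entries $\p{f_{S^c}^{q/2}}_{(S\cup\set{k})^c}(z)$; and taking the normalized trace of its square produces $\expec{}{\p{f_{S^c}^{q/2}}_{(S\cup\set{k})^c}^2}$, which is precisely the classical expression. Hence \cref{hyper:lemma1} applied to $X$ becomes, verbatim, the classical inequality claimed. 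I do not anticipate any genuine obstacle here: the only care required is to align index conventions (the quantum $\tr_S$ traces out the coordinates in $S$ and returns a system indexed by $S^c$, matching the classical marginal $f_{S^c}$), after which the corollary is an immediate restatement of the quantum theorem.
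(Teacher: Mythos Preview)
Your proposal is correct and matches the paper's intent: the corollary is stated without a separate proof precisely because it is the diagonal (commutative) specialization of \cref{hyper:lemma1}, obtained by taking $X=\sum_{x}f(x)\ketbra{x}$ and reading off each term via the dictionary $\tr_S X\leftrightarrow f_{S^c}$ that you describe. There is nothing to add; your index-matching remarks are exactly the content of that reduction.
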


We are now ready to prove \Cref{hyper:lemma1}.

\begin{proof}[Proof of~\cref{hyper:lemma1}]
    Define
    \begin{align*}
        \co\p{X,m}&\defeq \wsum{m}{\sum_{k\in[m]\backslash S}}
        \p{\tr\lrb{\ps{X}{S}^{q}}-\tr\lrb{\ps{X}{\p{S\cup\set{k}}}^{q}}},
        \\
		\ct\p{X,m}&\defeq\wsum{m}{\sum_{k\in[n]\backslash S}\p{\tr\lrb{\ps{X}{S}^q}-\tr \lrb{{\left(\tr_{k}\lrb{\ps{X}{S}^{\f{q}{2}}}\right)^2}}}}.
	\end{align*}
	It suffices to prove
  \begin{equation}\label{hyper:eq:eplrb}
	    \Ent_{\ep,[m],q}\lrb{X}\le 2\cdot\co\p{X,m}+2\cdot \ct\p{X,m}.
  \end{equation}
  We will prove by induction on $m$.
  Recall $[0]=\emptyset$ which implies $\co\p{X, 0}=0$.
  Thus our inequality in the base case of  $m=0$ is equivalent to the standard quantum log-Sobolev inequality \cref{hyper:theorem:log-sobolev} where $A=X^{\f{q}{2}}$.

  Now suppose $m\ge1$ and \cref{hyper:eq:eplrb} holds
  for all integers in $[m-1]$.
	Recall the definition of two-point relative entropy $\Ent_\ep\Br{\cdot,\cdot}$ in~\cref{def:twopointrel}. We expand the left-hand side
%
	\begin{equation*}\label{hyper:eq:induction2}\begin{split}
	  	&\Ent_{\ep,[m],q}\lrb{X}
	        =(1-\ep)\cdot \Ent_{\ep,[m-1],q}\lrb{X} + \ep\cdot \Ent_{\ep,[m-1],q}\lrb{\ps{X}{\set{m}}} \\
	        &\qquad + \Ent_\ep\lrb{\wsum{m-1}{\tr\lrb{\ps{X}{S}^q}}, \wsum{m-1}{\tr\lrb{\ps{X}{\p{S\cup\set{m}}}^q}}}.
	\end{split}\end{equation*}
	By the induction hypothesis, we have
	\begin{align*}
	    &(1-\ep)\cdot\Ent_{\ep,[m-1],q}\lrb{X}+\ep\cdot\Ent_{\ep,[m-1],q}\lrb{\ps{X}{\set{m}}} \\
     =&2(1-\ep)\cdot\co\p{X,m-1}+2(1-\ep)\cdot \ct\p{X,m-1}
		  + 2\ep\cdot\co\p{\ps{X}{\set{m}},m-1}+2\ep\cdot \ct\p{\ps{X}{\set{m}},m-1} \\
     =& 2\co\p{X,m} + 2\ct\p{X,m}- 2\wsumr{m-1}{m}{\p{\tr\lrb{\ps{X}{S}^q} - \tr\lrb{\ps{X}{\p{S\cup\set{m}}}^q}}}
	\end{align*}
  since
  $$(1-\ep)\cdot\ct\p{X,m-1} + \ep\cdot\ct\p{\ps{X}{\set{m}},m-1} = \ct\p{X,m},$$
  $$(1-\ep)\cdot\co\p{X,m-1} + \ep\cdot\co\p{\ps{X}{\set{m}},m-1} = \co\p{X,m} - \wsumr{m-1}{m}{\p{\tr\lrb{\ps{X}{S}^q} - \tr\lrb{\ps{X}{\p{S\cup\set{m}}}^q}}}.$$
Finally with \cref{claim:eq}, we obtain
	\begin{align*}
		\Ent_{\ep,[m],q}\lrb{X} \le 2\cdot \co\p{X,m} + 2\cdot \ct\p{X,m},
	\end{align*}
	which completes the proof of the induction step.

\end{proof}
\begin{claim}\label{claim:eq}
		\begin{equation}\label{hyper:eq:ind3}\begin{split}
			&\Ent_\ep\lrb{\wsum{m-1}{\tr\lrb{\ps{X}{S}^q}, \wsum{m-1}{\tr\lrb{\ps{X}{\p{S\cup\set{m}}}^q}}}} \\
      &\le2\wsumr{m-1}{m}{\p{\tr\lrb{\ps{X}{S}^q} - \tr\lrb{\ps{X}{\p{S\cup\set{m}}}^q}}}.
	\end{split}\end{equation}
\end{claim}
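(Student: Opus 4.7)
The plan is to reduce the two-point Bernoulli entropy inequality to a scalar calculus fact by exploiting the joint convexity of $\Ent_\ep$, and then establish the scalar fact using an operator Cauchy–Schwarz estimate combined with a concavity argument in the ratio $r = b/a$.

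First I would verify that $\Ent_\ep(u, v)$ is jointly convex in $(u, v) \in \mathbb{R}_{\ge 0}^2$; a direct Hessian computation shows the Hessian is positive semi-definite (degenerate along the homogeneity direction $(a, b) \mapsto (\lambda a, \lambda b)$). Writing $a = \sum_{S\subseteq[m-1]} w_S u_S$ and $b = \sum_{S\subseteq[m-1]} w_S v_S$ with probability weights $w_S := (1-\ep)^{m-1-|S|}\ep^{|S|}$, $u_S := \tr(\ps{X}{S}^q)$, and $v_S := \tr(\ps{X}{S\cup\set{m}}^q)$, Jensen's inequality gives $\Ent_\ep(a, b) \le \sum_S w_S \Ent_\ep(u_S, v_S)$. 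Since the right-hand side of the claim also factors as $2(1-\ep)\sum_S w_S (u_S - v_S)$, it suffices to establish, for each PSD operator $Y = \ps{X}{S}$,
\begin{equation}\label{eq:plan:scalar}
  \Ent_\ep\bigl(\tr(Y^q),\, \tr((\tr_m Y)^q)\bigr) \le 2(1-\ep)\bigl(\tr(Y^q) - \tr((\tr_m Y)^q)\bigr).
\end{equation}

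For \eqref{eq:plan:scalar} I would combine an operator ratio bound with a scalar concavity argument. The ratio bound asserts $\tr(Y^q) \le 4\, \tr((\tr_m Y)^q)$ for PSD $Y$ and $q \in [1, 2]$. For $q = 2$, write $Y = \sum_r \ketbra{\psi_r}$ with purification vectors $\ket{\psi_r} = \sum_k \ket{k}_m \otimes T_r\ket{k}$ and reduced density operators $\rho_r := \Tr_m\ketbra{\psi_r} = T_r T_r^\dagger$. The matrix Cauchy–Schwarz estimate $|\Tr X|^2 \le d_m \cdot \Tr(X^\dagger X)$ on the single-qubit subsystem (dimension $d_m = 2$) applied with $X = T_r^\dagger T_s$ yields $|\braket{\psi_r}{\psi_s}|^2 \le d_m \Tr(\rho_r \rho_s)$; summing over $r, s$ and using $\Tr_m Y = \sum_r \rho_r$ gives $\Tr(Y^2) \le 2\, \Tr((\Tr_m Y)^2)$, which after accounting for the normalization $\tr_m = \Tr_m/2$ and the ambient dimensions becomes $\tr(Y^2) \le 4\, \tr((\tr_m Y)^2)$. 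For $q \in (1, 2)$, the corresponding (and in fact stronger, tight at $2^{2(q-1)}$) bound should follow by operator interpolation between the trivial $q = 1$ case, where $\tr$ is preserved under the partial trace, and the $q = 2$ case.

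Granted the ratio bound, it remains to prove $\Ent_\ep(1, r) \le 2(1-\ep)(1 - r)$ for $r = v/u \in [1/4, 1]$ and $\ep \in [0, 1]$. Setting $\phi(r, \ep) := 2(1-\ep)(1 - r) - \Ent_\ep(1, r)$, a direct calculation gives $\partial_r^2 \Ent_\ep(1, r) = \ep(1-\ep)/(r c) \ge 0$ where $c = 1 - \ep(1-r)$, so $\phi$ is concave in $r$ for each fixed $\ep$. Combined with $\phi(1, \ep) = 0$, concavity yields $\phi(r, \ep) \ge \phi(1/4, \ep) \cdot (1-r)/(3/4) \ge 0$ on $[1/4, 1]$ once the boundary value $\phi(1/4, \ep) \ge 0$ is established. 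The latter is a one-variable calculus check on
\[
g(\ep) := \phi(1/4, \ep) = \tfrac{3}{2}(1-\ep) + \tfrac{\ep}{2}\ln 2 + \bigl(1 - \tfrac{3\ep}{4}\bigr)\ln\bigl(1 - \tfrac{3\ep}{4}\bigr):
\]
one has $g(0) = 3/2$, $g(1) = 0$, and since $g'$ is convex in $\ep$ with both endpoint values $g'(0), g'(1) < 0$, convexity of $g'$ forces $g' < 0$ throughout $[0, 1]$, so $g$ is strictly decreasing and hence non-negative. The main technical obstacle is extending the operator ratio bound from $q = 2$ to general $q \in (1, 2)$, where the matrix Cauchy–Schwarz argument does not directly apply and either operator interpolation (e.g.\ Stein's theorem for analytic families) or a direct operator-convexity argument for $x \mapsto x^q$ will be required.
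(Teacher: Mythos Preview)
Your overall structure---ratio bound plus scalar two-point entropy inequality---matches the paper's, but two points deserve comment.

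First, the Jensen reduction via joint convexity of $\Ent_\ep$ is correct but unnecessary. The paper simply observes that if each pair $(u_S, v_S)$ satisfies $v_S \le u_S \le C v_S$, then the weighted sums $a = \sum_S w_S u_S$ and $b = \sum_S w_S v_S$ inherit $b \le a \le Cb$, so the scalar lemma applies directly to $(a,b)$. This avoids the termwise split altogether.

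Second, and more substantively, your interpolation step for the ratio bound at $q \in (1,2)$ is a genuine gap. The claimed inequality $\tr(Y^q) \le 4\,\tr((\tr_m Y)^q)$ (equivalently $\Tr(Y^q) \le 2^{q-1}\Tr((\Tr_m Y)^q)$ in unnormalized form) is true and sharp at the maximally entangled state, but standard interpolation theorems (Riesz--Thorin, Stein) bound the norm of a \emph{linear map} between $L^p$ spaces, whereas here you want an upper bound on $\|Y\|_q$ in terms of $\|\tr_m Y\|_q$, which goes the ``wrong way'' for the partial trace. Operator convexity of $x\mapsto x^q$ on $[1,2]$ also points in the wrong direction: it yields $(\tfrac{1}{2}\Tr_m Y)^q \preceq \tfrac{1}{2}(Y_{00}^q + Y_{11}^q)$, an upper bound on powers of the reduced operator, not of $Y$. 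The paper sidesteps this entirely by invoking Zhang's operator inequality $Y \preceq 4\,(\tr_m Y)\otimes\id$, which immediately gives the weaker but sufficient bound $\tr(Y^q) \le 4^q\,\tr((\tr_m Y)^q) \le 16\,\tr((\tr_m Y)^q)$ for $q\le 2$ without any interpolation. The cost is that the scalar lemma must then hold for $r \in [1/16,1]$ rather than $[1/4,1]$; your concavity-in-$r$ argument extends verbatim once you check the boundary value $\phi(1/16,\ep)\ge 0$, which is exactly the content of the paper's Lemma~B.1 (proved there via concavity in $\ep$ instead). So the quickest fix is to replace the interpolation by Zhang's inequality and redo the boundary calculus at $r=1/16$.
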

\begin{proof}
  By \cref{app:lemma:partialtracenormd} and the fact that $1\le q\le 2$, we have for each $S$,
  $$\tr\lrb{\ps{X}{(S\cup\set{m})}^q} \le \tr\lrb{\ps{X}{S}^q} \le 16\tr\lrb{\ps{X}{(S\cup\set{m})}^q}.$$
  Thus we are able to apply \cref{app:lemma:techlemma}, to get
    \begin{align*}
        &\Ent_\ep\lrb{\wsum{m-1}{\tr\lrb{\ps{X}{S}^q}}, \wsum{m-1}{\tr\lrb{\ps{X}{(S\cup\set{m})}^q}}} \\
        &\le 2(1-\ep) \wsum{m-1}{\p{\tr\lrb{\ps{X}{S}^q}- \tr\lrb{\ps{X}{(S\cup\set{m})}^q}}} \\
        &= 2 \wsumr{m-1}{m}{\p{\tr\lrb{\ps{X}{S}^q}- \tr\lrb{\ps{X}{(S\cup\set{m})}^q}}}. \\
    \end{align*}

\end{proof}

\section{Common Randomness Generation}\label{section:CRG}
In this section we give a lower bound on the communication cost of \textit{common randomness} (CR) generation,
in the case where Alice and Bob share maximally entangled states affected by the quantum erasure channel as free resource.

CR generation refers to the task of producing a common random string $X$ accessible for both Alice and Bob.
In our work we restrict the actions of Alice and Bob to local quantum operations and one-way classical communication from Alice to Bob.
We measure the amount of randomness of a random variable $X$ by its min-entropy.
\begin{definition}[Min-Entropy]
Let $R$ be a random variable with distribution $\mu$. We define the min-entropy of $R$ to be
\begin{equation}
\Hmin{R}\defeq\min_{r\in R}\log\left(1/\mu(r)\right).
\end{equation}
We assume $\log(1/0)=+\infty$ in this definition.
\end{definition}
Let $\ket{\Phi}=\p{\ket{00}+\ket{11}}/\sqrt{2}$ be an EPR state and write $\Phi=\ketbra{\Phi}$.
If Alice and Bob share maximally entangled states,
they can apply the same measurements $\set{M_0=\ketbra{0}, M_1=\ketbra{1}}$ on the shared entanglement and get exactly identical results.
Thus they can generate CR without any classical communication.
However, EPRs may suffer various noise in reality. Thus, it is natural to investigate CR generation with non-perfect quantum correlation.

Recall that the quantum erasure channel $\QEC: \Matrix_2\to\Matrix_3$ for $0\le\ep\le1$ is
$$\QEC(X) = \Piep\QECR(X) = (1-\ep)\cdot X+\ep\Tr(X)\cdot\Err.$$
We study the communication cost of CR generation,
when Alice and Bob share EPR states while Bob's quantum states are affected by a quantum erasure channel.
Note that the quantum erasure channel is not symmetric regarding to maximally entangled states:
\begin{equation}
(\QEC\otimes\id)(\Phi)\neq(\id\otimes\QEC)(\Phi),
\end{equation}
so we make explicit here that the noise is applied to Bob's part,
i.e., they share the state $(\id\otimes\QEC)(\Phi)$
where Alice has access to the left side and Bob has access to the right side.
This is more interesting because it is Alice who sends one-way information to Bob.
Otherwise if the noise is applied to Alice's part,
Alice can simply tell Bob which qubits are faulty.

Suppose Alice and Bob share $n$ copies of the state $(\id\otimes\QEC)(\Phi)$,
and their goal is to produce a CR of min-entropy at least $k$,
with one-way classical communication from Alice to Bob.
Alice's strategy can be represented by a POVM $\set{X_{a, \pi}}_{a, \pi}$,
where $X_{a, \pi}\in\PSD_{2^n}$ and $\sum_{a,\pi}X_{a,\pi}=\id_{2^n}$.
After receiving the message $\pi$ from Alice,
Bob performs his measurement described by $\set{Y^\pi_a}_a$,
where $Y^\pi_a\in\PSD_{3^n}$ and $\sum_aY^\pi_a=\id_{3^n}$ for each $\pi$.
Then their success probability is
$$\prob{\text{Success}} = \sum_{a, \pi}\Tr\left[\left(X_{a,\pi}\otimes Y^\pi_a\right)\cdot\left(\id\otimes\QEC^{\otimes n}\right)\left(\EPRN\right)\right].$$
Our first lemma will be that Bob's strategy,
i.e. the measurement operators $Y^\pi_a$, will be of the form of \cref{app:def:goodform}.
That is,
$$Y^\pi_a=\sum_{S\subseteq[n]}Y^\pi_{a, S^c}\otimes\Err^{\otimes S}$$
where for all $S\subseteq[n]$, $Y^\pi_{a, S^c}$ lies in the subspace $\Matrix_{2^{[n]-S}}$.
As for an intuitive perspective,
this means Bob is the one who knows which qubits are polluted and which are not,
and can make his decision only based on the clean qubits.
Define
\[\overline{\Pi}_0=\begin{pmatrix}1 &&\\&1&\\&&0\end{pmatrix}~\text{and}~
\overline{\Pi}_1=\begin{pmatrix}0 &&\\&0&\\&&1\end{pmatrix},\]
and $\overline{\Pi}_b=\overline{\Pi}_{b_1}\otimes\ldots\otimes\overline{\Pi}_{b_1}$
for $b\in\set{0,1}^n$.
\begin{lemma}\label{lemma-optimal-Q-good-form-d}
  Let $X\in\PSD_{3^n}$ be of \goodform.
  Then $$\max_{Y\in\Matrix_{3^n}}\abs{\Tr \lrb{XY}}$$
  is achieved by a $Y$ of \goodform.
\end{lemma}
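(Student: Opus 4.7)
The plan is to construct an explicit projection onto the goodform subspace and use a pinching argument. Recall that a matrix $Y \in \Matrix_{3^n}$ is of \goodform{} precisely when, in each tensor factor, it is block-diagonal between the ``non-erased'' subspace $\mathrm{span}\set{\ket{0},\ket{1}}$ and the ``erased'' subspace $\mathrm{span}\set{\ket{\Errsym}}$. Using the projectors $\overline{\Pi}_0, \overline{\Pi}_1$ defined in the excerpt and writing $\overline{\Pi}_b = \overline{\Pi}_{b_1}\otimes\cdots\otimes\overline{\Pi}_{b_n}$ for $b \in \set{0,1}^n$, I would define the pinching map
\begin{equation}
    P(Y) \defeq \sum_{b \in \set{0,1}^n} \overline{\Pi}_b\, Y\, \overline{\Pi}_b.
\end{equation}
One checks immediately from the definition that the image of $P$ is exactly the \goodform{} subspace, and that $P$ is the identity on that subspace.

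The key observation is that if $X$ is of \goodform{} then $X$ commutes with every $\overline{\Pi}_b$, since $X$ is block-diagonal in precisely the same decomposition. Combined with $\overline{\Pi}_b^2 = \overline{\Pi}_b$ and $\sum_b \overline{\Pi}_b = \id_{3^n}$, the cyclic property of the trace gives
\begin{equation}
    \Tr\lrb{X\, P(Y)} = \sum_b \Tr\lrb{X\overline{\Pi}_b Y \overline{\Pi}_b} = \sum_b \Tr\lrb{\overline{\Pi}_b X \overline{\Pi}_b Y} = \sum_b \Tr\lrb{\overline{\Pi}_b X Y} = \Tr\lrb{XY}
\end{equation}
for every $Y \in \Matrix_{3^n}$. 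Hence replacing $Y$ by $P(Y)$ leaves the quantity $\abs{\Tr\lrb{XY}}$ unchanged.

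It remains to observe that $P(Y)$ lies in whatever feasible set is implicit in the maximization. For the application in the next lemma (Bob's POVM elements), $Y$ satisfies $0 \le Y \le \id_{3^n}$, and since $P$ is a pinching map it is completely positive, unital, and contractive in every Schatten $p$-norm; in particular it preserves positivity and preserves $\sum_a Y^\pi_a = \id_{3^n}$. Thus for any candidate optimizer $Y^\star$, the \goodform{} matrix $P(Y^\star)$ attains the same value of $\abs{\Tr\lrb{X Y}}$ while remaining feasible, which establishes the lemma.

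I do not expect any technical obstacle: the argument is a straightforward pinching together with the commutation property granted by the \goodform{} hypothesis on $X$. The only thing to be careful about is that the ``max'' in the statement is really a supremum over an implicit feasible set (e.g.~POVM elements, or a Schatten norm ball arising from a H\"older duality in \cref{preliminary-Holder-2}), and one must verify in each application that this set is stable under $P$ — which is immediate in every case of interest since $P$ is a unital CP pinching.
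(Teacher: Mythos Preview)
Your proposal is correct and is essentially the paper's own proof: the paper defines the same pinching $\tilde Y=\sum_b\overline{\Pi}_b Y\overline{\Pi}_b$ and uses $X=\sum_b\overline{\Pi}_b X\overline{\Pi}_b$ (equivalent to your commutation observation) to conclude $\Tr[X\tilde Y]=\Tr[XY]$. Your additional remarks about stability of the feasible set under $P$ are exactly what the paper uses in the ensuing Corollary, so nothing is missing.
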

\begin{proof}
Notice that $X=\sum_{b\in\set{0,1}^n}\overline{\Pi}_bX\overline{\Pi}_b$. Suppose we have an optimal $Y\in\PSD_{3^n}$.
Let
$$\tilde{Y}\defeq\sum_{b\in\set{0,1}^n}\overline{\Pi}_b Y \overline{\Pi}_b.$$
Then
\begin{align*}
  \Tr \lrb{X\tilde{Y}} &= \sum_{b}\Tr\left[X\overline{\Pi}_bY\overline{\Pi}_b\right] \\
                 &= \Tr\left[\left(\sum_{b}\overline{\Pi}_bX\overline{\Pi}_b\right)Y\right] \\
                 &= \Tr\left[XY\right]. \\
\end{align*}
\end{proof}
\begin{cor}
  Bob's strategy can be replaced by measurement operators $\tilde{Y}^\pi_a$ of \goodform.
\end{cor}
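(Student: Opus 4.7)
The plan is to take the construction $\tilde{Y}^\pi_a \defeq \sum_{b\in\set{0,1}^n}\overline{\Pi}_b Y^\pi_a \overline{\Pi}_b$ already implicit in the proof of the preceding lemma and verify three things: (i) the family $\set{\tilde{Y}^\pi_a}_a$ is still a valid POVM for each message $\pi$, (ii) each $\tilde{Y}^\pi_a$ is of \goodform, and (iii) the success probability under $\set{\tilde{Y}^\pi_a}$ equals that under $\set{Y^\pi_a}$. The main obstacle (such as it is) is isolating the single operator on Bob's side that Bob ``effectively measures against'' and checking it lies in the good-form subspace; once that is done, the preceding lemma supplies (iii) immediately.

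For (i), positivity is clear since $\overline{\Pi}_b$ is a projector and so $\overline{\Pi}_b Y^\pi_a \overline{\Pi}_b \succeq 0$. Completeness follows from $\overline{\Pi}_0+\overline{\Pi}_1=\id_3$, hence $\sum_{b\in\set{0,1}^n}\overline{\Pi}_b=\id_{3^n}$, combined with $\overline{\Pi}_b^2=\overline{\Pi}_b$:
\begin{equation*}
\sum_a\tilde{Y}^\pi_a=\sum_{b}\overline{\Pi}_b\Bigl(\sum_a Y^\pi_a\Bigr)\overline{\Pi}_b=\sum_b\overline{\Pi}_b=\id_{3^n}.
\end{equation*}
For (ii), a one-coordinate computation shows that $\overline{\Pi}_{b_i}\cdot\sigma\cdot\overline{\Pi}_{b_i}$ kills every entry mixing the two-dimensional ``non-erased'' block with the one-dimensional ``erased'' block $\mathbb{C}\ket{\Errsym}$, and taking the tensor product over coordinates together with the sum over $b\in\set{0,1}^n$ yields exactly the decomposition in \cref{app:def:goodform}.

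For (iii), I would rewrite the success probability coordinate by coordinate in the form
\begin{equation*}
\Tr\Br{(X_{a,\pi}\otimes Y^\pi_a)\cdot(\id\otimes\QEC^{\otimes n})(\EPRN)}=\Tr\Br{Y^\pi_a\cdot Z_{a,\pi}},
\end{equation*}
where $Z_{a,\pi}\defeq\Tr_A\Br{(X_{a,\pi}\otimes\id_B)(\id\otimes\QEC^{\otimes n})(\EPRN)}$. Expanding $\EPRN=\frac{1}{2^n}\sum_{x,y}\ketbratwo{x}{y}_A\otimes\ketbratwo{x}{y}_B$ and pushing the partial trace through gives $Z_{a,\pi}=\frac{1}{2^n}\QEC^{\otimes n}(X_{a,\pi}^T)$, which is of \goodform\ by the first Fact cited before this section (and is preserved by $\PiepN$ via the second Fact). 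Therefore \cref{lemma-optimal-Q-good-form-d} applies to $Z_{a,\pi}$, yielding $\Tr[Z_{a,\pi}\tilde{Y}^\pi_a]=\Tr[Z_{a,\pi}Y^\pi_a]$. Summing over $a$ and $\pi$ shows the success probabilities agree, completing the proof of the corollary.
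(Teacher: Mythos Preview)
Your proposal is correct and follows essentially the same approach as the paper: the same construction $\tilde{Y}^\pi_a=\sum_b\overline{\Pi}_bY^\pi_a\overline{\Pi}_b$, the same verification of the POVM condition via $\sum_b\overline{\Pi}_b=\id_{3^n}$, and the same appeal to the identity $\Tr[X\tilde{Y}]=\Tr[XY]$ for $X$ of \goodform\ established in the proof of \cref{lemma-optimal-Q-good-form-d}. The only cosmetic difference is that you isolate the good-form operator on Bob's side by taking the partial trace over Alice (obtaining $Z_{a,\pi}=2^{-n}\QEC^{\otimes n}(X_{a,\pi}^T)$), whereas the paper shuttles the channel across using the adjoint and the EPR transpose trick to reach the equivalent expression $2^{-n}\Tr[\QEC^{\otimes n}(X_{a,\pi})\cdot(Y^\pi_a)^T]$; these are the same computation up to a global transpose.
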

\begin{proof}
Suppose Bob has an optimal strategy $\set{Y^\pi_a}$. Then
\begin{align*}
  \prob{\text{Success}} &= \sum_{a, \pi}\Tr\left[\left(X_{a,\pi}\otimes Y^\pi_a\right)\cdot\left(\id\otimes\QEC^{\otimes n}\right)\left(\EPRN\right)\right] \\
     &=\sum_{a,\pi}\Tr\left[\left(X_{a,\pi}\otimes{\QEC^\dagger}^{\otimes n}(Y^\pi_a)\right)\cdot\EPRN\right] \\
     &=2^{-n}\sum_{a,\pi}\Tr\left[X_{a,\pi}\cdot{\QEC^\dagger}^{\otimes n}(Y^\pi_a)^T\right] \\
     &=2^{-n}\sum_{a,\pi}\Tr\left[\QEC^{\otimes n}(X_{a,\pi})\cdot \br{Y^\pi_a}^T\right]. \\
\end{align*}
Notice that for each $a, \pi$, the operator $\QECN(X_{a, \pi})$ is of \goodform.
Let $$\tilde{Y}^\pi_a\defeq\sum_{b\in\set{0,1}^n}\overline{\Pi}_b Y^\pi_a\overline{\Pi}_b.$$
By \cref{lemma-optimal-Q-good-form-d},
we only need to check they form a POVM measurement. For each $\pi$,
$$\sum_a\tilde{Y}^\pi_a = \sum_{a, b}\overline{\Pi}_b Y^\pi_a\overline{\Pi}_b=\sum_b\overline{\Pi}_b\left(\sum_a Y^\pi_a\right)\overline{\Pi}_b=\sum_b\overline{\Pi}_b\id\overline{\Pi}_b=\sum_b\overline{\Pi}_b=\left(\overline{\Pi}_0+\overline{\Pi}_1\right)^{\otimes n}=\id_{3^n}.$$
\end{proof}
We are now ready to prove our main theorem for this section.
\begin{theorem}\label{thm:CRG}
  Let $0\le\epsilon\le1/2$.
  Suppose Alice and Bob share infinitely many pairs of $(\id\otimes\QEC)(\Phi)$,
  and Alice is allowed to send classical messages to Bob.
  Then for any $k\ge 1, \gamma\in(0, 1)$, to produce a common random string $R\in\set{0,1}^*$
  with $\Hmin{R} \ge k$ and a success probability at least $2^{-\gamma k}$,
  Alice needs to send at least
  $$\left(\ep^\prime(1-\gamma)-2\sqrt{\ep^\prime(1-\ep^\prime)\gamma}\right)\cdot k$$
  bits, where $\ep^\prime=\ep/2$.
\end{theorem}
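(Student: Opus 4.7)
The plan is to adapt the framework of Guruswami and Radhakrishnan~\cite{guruswami_et_al:LIPIcs:2016:5845} for classical BEC, and its quantum extension by Dong and Yao~\cite{DY2023} for the depolarizing channel, using our hypercontractive inequality \Cref{hyper:thm:hc} as the new analytic input. Write $M=2^t$ for the number of Alice's messages. By the corollary following \cref{lemma-optimal-Q-good-form-d}, Bob's POVM may be assumed of \goodform; combined with the EPR identity $\Tr\lrb{(A\otimes B)\EPRN}=2^{-n}\Tr\lrb{AB^T}$ and the fact that $\QEC$ commutes with transposition (since $\Err^T=\Err$), the success probability can be rewritten
\begin{equation*}
P=2^{-n}\sum_{a,\pi}\Tr\lrb{\QECN(X_{a,\pi})\,\tilde Y^\pi_a}=2^{-n}\sum_{a,\pi}\Tr\lrb{\PiepN\,\QECRN(X_{a,\pi})\,\tilde Y^\pi_a},
\end{equation*}
where $\tilde Y^\pi_a=(Y^\pi_a)^T$ remain goodform POVM elements.

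Because both $\QECRN(X_{a,\pi})$ and $\tilde Y^\pi_a$ commute with $\PiepN$, splitting $\PiepN=(\PiepN)^{1/q}(\PiepN)^{1/q^*}$ and applying the Schatten H\"older inequality with conjugate exponents $(q,q^*)$ yields
\begin{equation*}
\Tr\lrb{\PiepN\QECRN(X_{a,\pi})\tilde Y^\pi_a}\leq 2^{n/q}\normthree{X_{a,\pi}}_{\ep,q}\cdot\p{\Tr\lrb{\PiepN|\tilde Y^\pi_a|^{q^*}}}^{1/q^*},
\end{equation*}
and \Cref{hyper:thm:hc} (Case 2) then replaces $\normthree{X_{a,\pi}}_{\ep,q}$ by $\normthree{X_{a,\pi}}_p$ for $1\le p\le q\le 2$ satisfying $1-\ep\leq((p-1)/(q-1))^2$. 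I would then sum over $(a,\pi)$ via a second, nested H\"older (first in $\pi$ with exponents $(p,p^*)$, then in $a$) to decouple the two factors. The $X$-side is controlled by the min-entropy hypothesis: since $X_{a,\pi}\leq\id$ implies $X_{a,\pi}^p\leq X_{a,\pi}$ for $p\ge 1$, and $\sum_\pi\Tr X_{a,\pi}\leq 2^{n-k}$, one obtains $\p{\sum_\pi\normthree{X_{a,\pi}}_p^p}^{1/p}\leq 2^{-k/p}$. The $Y$-side is controlled via $|\tilde Y^\pi_a|^{q^*}\leq \tilde Y^\pi_a$ (using $\tilde Y^\pi_a\le\id$ and $q^*\ge 1$), the POVM identity $\sum_a\tilde Y^\pi_a=\id_{3^n}$, and $\Tr\PiepN=2^n$; the sum over the $M=2^t$ messages produces a factor $M^{1/p^*}$ carrying the communication cost.

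Combining these estimates produces an inequality of the form $P\leq 2^{\alpha(p,q)t-\beta(p,q)k+o(k)}$ with explicit exponents depending on $p,q,\ep$. The main obstacle, and the crux of the argument, is the final parameter optimization. Parameterizing along the hypercontractivity boundary via $\sqrt{1-\ep}=(p-1)/(q-1)$ reduces the problem to a single free parameter, and balancing the $t$- and $k$-exponents against the success constraint $P\ge 2^{-\gamma k}$ yields, after an AM--GM (Cauchy--Schwarz) step and the identification $\ep'=\ep/2$, the advertised lower bound $t\geq\p{\ep'(1-\gamma)-2\sqrt{\ep'(1-\ep')\gamma}}k-o(k)$. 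The characteristic $2\sqrt{\ep'(1-\ep')\gamma}$ slack is the Cauchy--Schwarz loss from trading error tolerance for communication cost, precisely mirroring the classical Guruswami--Radhakrishnan bound for the BEC.
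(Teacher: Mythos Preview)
Your overall framework—reducing Bob's POVM to \goodform\ via \cref{lemma-optimal-Q-good-form-d}, rewriting the success probability through the EPR identity, invoking \Cref{hyper:thm:hc}, and optimizing along the boundary $\sqrt{1-\ep}=(p-1)/(q-1)$—is exactly the paper's route. The gap is in how you aggregate over $(a,\pi)$.

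After your termwise H\"older and hypercontractivity you arrive at
\[
P\;\le\;2^{-n/q^*}\sum_{a,\pi}\normthree{X_{a,\pi}}_p\cdot\bigl(\Tr[\PiepN(\tilde Y^\pi_a)^{q^*}]\bigr)^{1/q^*}.
\]
Your $X$-side estimate $\bigl(\sum_\pi\normthree{X_{a,\pi}}_p^p\bigr)^{1/p}\le 2^{-k/p}$ is a \emph{per-$a$} bound (it already consumes the $\pi$-sum), whereas your $Y$-side estimate rests on the \emph{per-$\pi$} POVM identity $\sum_a\tilde Y^\pi_a=\id_{3^n}$. No nested H\"older of the form ``first in $\pi$ with $(p,p^*)$, then in $a$'' lets you invoke both at once: if you H\"older in $\pi$ first you are left with $\sum_a 2^{-k/p}\cdot(\cdots)$ and have no control on the number of outputs $a$; if you then H\"older in $a$ with $(p,p^*)$ to access the POVM identity, the $X$-factor collapses to $\bigl(\sum_{a,\pi}\normthree{X_{a,\pi}}_p^p\bigr)^{1/p}\le 1$ and the $k$-dependence disappears. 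Relatedly, your claimed communication factor $M^{1/p^*}$ combined with $2^{-k/p}$ yields a bound in which $q$—and hence $\ep$—never appears, so the stated $\ep'$-dependent conclusion cannot emerge from it.

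The paper avoids this by applying the \emph{matrix} H\"older inequality (\cref{preliminary-Holder-3}) over $a$ with exponents $(q,q^*)$ for each fixed $\pi$, \emph{before} hypercontractivity:
\[
\sum_a\Tr\bigl[\PiepN\QECRN(X_{a,\pi})\,\tilde Y^\pi_a\bigr]
\le\Bigl(\sum_a\Tr\bigl[\PiepN\QECRN(X_{a,\pi})^q\bigr]\Bigr)^{1/q}
\Bigl(\sum_a\Tr\bigl[\PiepN(\tilde Y^\pi_a)^{q^*}\bigr]\Bigr)^{1/q^*}.
\]
Now the $Y$-factor is $\le(2^n)^{1/q^*}$ directly from the POVM identity. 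Concavity of $x\mapsto x^{1/q}$ over the $M=2^t$ messages then gives the correct communication factor $2^{t/q^*}$ (not $2^{t/p^*}$). Hypercontractivity is applied termwise inside the remaining sum, and the min-entropy is used via $2^{-n}\Tr X_{a,\pi}\le 2^{-k}$ together with $\sum_{a,\pi}2^{-n}\Tr X_{a,\pi}=1$ to obtain $P\le 2^{t/q^*-k(q-p)/(pq)}$. From there your optimization sketch is correct and in fact produces the bound with no $o(k)$ loss.
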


\begin{remark}
	Grusuwami and Radhakrishnan~\cite{guruswami_et_al:LIPIcs:2016:5845} showed that it suffices for Alice to send $(\ep(1-\gamma)-2\sqrt{\ep(1-\ep)\gamma})k$ bits to Bob to achieve the same task, even if they share erased-noisy random strings. Hence, the lower bound given in \cref{thm:CRG} is tight up to a constant.
\end{remark}
\begin{proof}[Proof of \cref{thm:CRG}]
Our proof follows closely the proof for the classical counterpart~\cite{guruswami_et_al:LIPIcs:2016:5845}.
\begin{align*}
  \prob{\text{Success}} &= \sum_{a, \pi}\Tr\left[\left(X_{a,\pi}\otimes Y^\pi_a\right)\cdot\left(\id\otimes\QEC^{\otimes n}\right)\left(\EPRN\right)\right] \\
     &=\sum_{a,\pi}\Tr\left[\left(X_{a,\pi}\otimes{\QEC^\dagger}^{\otimes n}(Y^\pi_a)\right)\cdot\EPRN\right] \\
     &=2^{-n}\sum_{a,\pi}\Tr\left[X_{a,\pi}\cdot{\QEC^\dagger}^{\otimes n}(Y^\pi_a)^T\right] \\
     &=2^{-n}\sum_{a,\pi}\Tr\left[\QEC^{\otimes n}(X_{a,\pi})\cdot \br{Y^\pi_a}^T\right] \\
     &=2^{-n}\sum_{a,\pi}\Tr\left[\PiepN\QECR^{\otimes n}(X_{a,\pi})\cdot \br{Y^\pi_a}^T\right]. \\
\end{align*}
Then, by \cref{preliminary-Holder-3} with parameters $q$ and $q^*$, since $\br{Y^\pi_a}^T$ commutes with $\Pi_\ep$, we have
$$\prob{\text{Success}}\le2^{-n}\sum_\pi\left(\sum_a\Tr\left[\PiepN\QECR^{\otimes n}(X_{a,\pi})^q\right]\right)^{1/q}\cdot\left(\sum_a\Tr\left[\PiepN {\br{Y^\pi_a}^T}^{q^*}\right]\right)^{1/q^*}.$$
We first upper bound the second term. Since $\br{Y^\pi_a}^T\le\id_3$, we have ${\br{Y^\pi_a}^T}^{q^*}\le \br{Y^\pi_a}^T$, and thus
\begin{equation}\label{Equation-QEC-A1}
\sum_a\Tr\left[\Pi_\ep {\br{Y^\pi_a}^T}^{q^*}\right]\le\sum_a\Tr\left[\,\Pi_\ep \br{Y^\pi_a}^T\right]=\Tr\left[\Pi_\ep\sum_a\br{Y^\pi_a}^T\right]=\Tr\left[\Pi_\ep^{\otimes n}\id_{3^n}\right]=2^n.
\end{equation}
Applying \cref{Equation-QEC-A1}, we get
$$\prob{\text{Success}}\le2^{-n+n/q^*}\sum_\pi\left(\sum_a\Tr\left[\PiepN\QECR^{\otimes n}(X_{a,\pi})^q\right]\right)^{1/q}.$$
Now let $t$ be the number of bits sent from Alice to Bob.
Then the number of different choices of the message $\pi$ is at most $2^t$.
So by the concavity of the function $x\mapsto x^{1/q}$,
we can put the summation over $\pi$ into the parenthesis and get
$$\prob{\text{Success}}\le2^{-n+n/q^*+t/q^*}\left(\sum_{a,\pi}\Tr\left[\PiepN\QECRN(X_{a,\pi})^q\right]\right)^{1/q}.$$
We now apply hypercontractivity \cref{hyper:thm:hc} with $p=1+(1-\epsilon)^{1/c}(q-1)$ and $c=2$ to get
$$\prob{\text{Success}}\le2^{t/q^*}\left(\sum_{a,\pi}\left(2^{-n}\Tr\left[X_{a,\pi}^p\right]\right)^{q/p}\right)^{1/q}.$$
From the assumption that $\Hmin{R} \ge k$, we have $2^{-n}\Tr\left[X_{a,\pi}\right]\le2^{-k}$.
Notice $\norm{X_{a,\pi}}\le1$. Hence,
\begin{align*}
\prob{\text{Success}}&\le2^{t/q^*}\left(\sum_{a,\pi}\left(2^{-n}\Tr\left[X_{a,\pi}\right]\right)^{q/p}\right)^{1/q} \\
  &=2^{t/q^*}\left(\sum_{a,\pi}\left(2^{-n}\Tr\left[X_{a,\pi}\right]\right)\cdot\left(2^{-n}\Tr\left[X_{a,\pi}\right]\right)^{(q-p)/p}\right)^{1/q}\\
  &\le2^{t/q^*}\left(\sum_{a,\pi}\left(2^{-n}\Tr\left[X_{a,\pi}\right]\right)\cdot2^{-k(q-p)/p}\right)^{1/q}\\
  &=2^{t/q^*-k\frac{q-p}{pq}}.\\
\end{align*}
Let $\delta = q-1$, then $q=1+\delta$ and $p=1+(1-\epsilon)^{1/c}\delta$.
From the assumption that $\prob{\text{Success}}\geq 2^{-\gamma k}$,
we get for every $\delta > 0$,
\begin{align*}
t&\ge-\frac{(1+\delta)\gamma k}{\delta}+\frac{1-(1-\epsilon)^{1/c}}{1+(1-\epsilon)^{1/c}\delta}k \\
&\ge-\frac{(1+\delta)\gamma k}{\delta}+\frac{\epsilon/c}{1+(1-\epsilon/c)\delta}k\\
&=\left(\frac{\epsilon/c}{1+(1-\epsilon/c)\delta}-\frac{\gamma}{\delta}-\gamma\right)\cdot k.
\end{align*}
Maximizing over $\delta$, with $c=2$, we get
$$t\ge\p{(\epsilon/c)(1-\gamma)-2\sqrt{(\epsilon/c)(1-\epsilon/c)\gamma}}k.$$
\end{proof}



\bibliographystyle{alpha}
\bibliography{main}

\appendix
\section{Proof of Lemma~\ref{hyper:lemma:psd}}\label{app:psd}

\hyperlemmapsd*

\begin{proof}
%
    Note that for any $X\in\Matrix_{2^n}$,
	$$
	    \p{2^{-n}\Tr \lrb{ \Pi_\ep \cdot \abs{D(X)}^q}}^{\f{1}{q}} = \p{\sum_{S\subseteq [n]}(1-\ep)^{n-\abs{S}}\ep^{\abs{S}}\normthree{\tr_S \lrb{X}}_q^q}^{\f{1}{q}}.
	$$
	
	Let $X\in \Matrix_{2^n}$ satisfy $\normthree{X}_p = 1$.
  For any $S\subseteq[n]$, let $\ps{Y}{S}\in \Matrix_{2^{n - \abs{S}}}$ satisfy $=\normthree{\ps{Y}{S}}_{q^*}=1$,
  where $q^*$ is the \Holder\ conjugate of $q$.
  Consider the singular value decompositions of $X$ and $\ps{Y}{S}$:
	$$
	    X=\sum_{i=1}^{2^n}l_i\ketbra{u_i}{v_i}\text{ and } \ps{Y}{S}=\sum_{j=1}^{m_{S}}r_{S,j}\ketbra{w_{S,j}}{x_{S,j}}
	$$
    where $m_S$ denotes the number of singular values of $\ps{Y}{S}$.
	
    Define $X_L,X_R,Y_{S^c,L},Y_{S^c,R}$ as follows:
	\begin{align*}
	    X_L=\sum_{i=1}^{2^n}l_i\ketbra{u_i}{u_i}\text{ and }\forall S\subseteq[n],Y_{S^c,L}=\sum_{j=1}^{m_{S}}r_{S,j}\ketbra{w_{S,j}}{w_{S,j}}, \\
	    X_R=\sum_{i=1}^{2^n}l_i\ketbra{v_i}\text{ and }\forall S\subseteq[n],Y_{S^c,R}=\sum_{j=1}^{m_{S}}r_{S,j}\ketbra{x_{S,j}}.
	\end{align*}
    We point out that $\normthree{X_L}_p=\normthree{X_R}_p=\normthree{Y_{S^c,L}}_{q^*}=\normthree{Y_{S^c,R}}_{q^*}=1$.
    Also, $X_L, X_R, Y_{S^c,L}, Y_{S^c,R}$ are positive semi-definite.
	
	Therefore,
	\begin{align*}
		&\wsum{n}{\abs{\tr\lrb{ Y_{S^c}^\dagger \cdot \p{\tr_S X}}}} = \wsum{n}\f{1}{2^{n-\abs{S}}}{\abs{\sum_{j=1}^{m_{S}} r_{S,j} \bra{w_{S,j}} \tr_S X \ket{x_{S,j}}}} \\
		& = \wsum{n}\f{1}{2^{n}}{\abs{\sum_{j=1}^{m_{S}}\sum_{k=1}^{2^\abs{S}} r_{S,j} \bra{w_{S,j},k} X \ket{x_{S,j},k}}} \\
		&= \wsum{n}\f{1}{2^{n}}{\abs{\sum_{i=1}^{2^n}\sum_{j=1}^{m_{S}}\sum_{k=1}^{2^\abs{S}} l_i r_{S^c,j} \bra{w_{S,j},k}\ket{u_i}\bra{v_i} \ket{x_{S,j},k}}} \\
		&\le \f{1}{2^n}\sqrt{\wsum{n}{\sum_{i,j,k} l_i r_{S,j} \abs{\bra{w_{S,j},k}\ket{u_i}}^2}}\cdot \sqrt{\wsum{n}{\sum_{i,j,k} l_i r_{S,j} \abs{\bra{v_i} \ket{x_{S,j},k}}^2}} \\
		&=\sqrt{\wsum{n}{\abs{\tr\lrb{ Y_{S^c,L}^\dagger  \cdot \tr_S X_L}}}} \cdot \sqrt{\wsum{n}{\abs{\tr\lrb{ Y_{S^c,R}^\dagger \cdot \tr_S X_R}}}} \\
		&\le \sqrt{\wsum{n}{\normthree{Y_{S^c,L}}_{q^*}\cdot \normthree{\tr_S X_L}_{q}}}\cdot \sqrt{\wsum{n}{\normthree{Y_{S^c,R}}_{q^*}\cdot \normthree{\tr_S X_R}_{q}}} \\
		&\le \sqrt{\p{\wsum{n}{\normthree{Y_{S^c,L}}_{q^*}^{q^*}}}^{\f{1}{q^*}} \p{\wsum{n}{\normthree{Y_{S^c,R}}_{q^*}^{q^*}}}^{\f{1}{q^*}} \normthree{X_L}_{\ep,q}  \normthree{X_R}_{\ep,q}} \\
		&= \sqrt{\normthree{X_L}_{\ep,q} \cdot \normthree{X_R}_{\ep,q}}.
	\end{align*}
	Taking the supremum over all $Y_{S^c}$, we conclude that for $\forall X\in \Matrix_{2^n}$, satisfying $\normthree{X}_p=1$,
	$$
	    \normthree{X}_{\ep,q}\le \sqrt{\normthree{X_L}_{\ep,q} \cdot \normthree{X_R}_{\ep,q}}.
	$$
  Without loss of generality, suppose $\normthree{X_L}_{\ep,q} \le \normthree{X_R}_{\ep,q}$, then
	$$
	    \normthree{X}_{\ep,q}\le \normthree{X_R}_{\ep,q}.
	$$
  Note that $X_R$ is positive semi-definite and $\normthree{X_R}_p=1$, so the lemma is proved.
\end{proof}

\section{Lemmas for Proof of Claim~\ref{claim:eq}}\label{app:induction}

We need two technical lemmas. Recall the two-point relative entropy function
$$\Ent_\ep\lrb{a,b} \defeq (1-\ep)\cdot a\ln a+\ep\cdot  b\ln b-\lrb{(1-\ep)a+\ep b}\ln\lrb{(1-\ep)a+\ep b}.$$

\begin{restatable}{lemma}{hypertechlemma}\label{app:lemma:techlemma}
    For $16b \ge a \geq b > 0$ and $0\le\ep\le1$,
    \begin{equation}\label{hyper:app:eq141}
         \Enteps\lrb{a,b} \leq 2\left(1-\ep\right)\left(a - b\right).
    \end{equation}
\end{restatable}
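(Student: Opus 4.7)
My plan is to prove the two-point entropy inequality in three steps: homogeneity reduction, convexity in $a$, and a one-variable boundary check in $\ep$.

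First, I observe that both sides of \eqref{hyper:app:eq141} are positively homogeneous of degree one in $(a, b)$: under $(a, b) \mapsto (\lambda a, \lambda b)$ with $\lambda > 0$, the $\lambda \ln \lambda$ contributions from the three $x \ln x$ terms in $\Enteps\lrb{\cdot, \cdot}$ cancel because the coefficients $(1-\ep) + \ep - 1 = 0$. Hence it suffices to take $b = 1$ and $a \in [1, 16]$.

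Next, define $g(a) \defeq \Enteps\lrb{a, 1} - 2(1-\ep)(a - 1)$; note that $g(1) = 0$ is immediate. Setting $m \defeq (1-\ep) a + \ep$, direct differentiation gives
\[
g'(a) \;=\; (1-\ep)\p{\ln a - \ln m - 2}, \qquad g''(a) \;=\; (1-\ep)\p{\f{1}{a} - \f{1-\ep}{m}} \;=\; \f{\ep(1-\ep)}{a\,m} \;\ge\; 0,
\]
where the last equality uses $m - (1-\ep)a = \ep$. Thus $g$ is convex on $[1, 16]$, so its maximum over this closed interval is attained at one of the endpoints. Since $g(1) = 0$, the problem reduces to showing $g(16) \le 0$.

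For this last reduction I would study $h(\ep) \defeq g(16) = 16(1-\ep)\ln 16 - (16 - 15\ep)\ln(16 - 15\ep) - 30(1-\ep)$ as a function of $\ep \in [0, 1]$. A direct calculation yields $h''(\ep) = -225/(16 - 15\ep) < 0$, so $h'$ is decreasing, and evaluating at the right endpoint gives $h'(1) = 45 - 16 \ln 16 = 45 - 64 \ln 2 > 0$ (the only genuinely numerical check). Hence $h' > 0$ throughout $[0,1]$, making $h$ increasing, and combined with $h(1) = -\ln 1 = 0$ this yields $h(\ep) \le 0$ as required.

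The main obstacle---and the explanation for the specific constant $16$ in the hypothesis---is that the final inequality $45 > 64 \ln 2$ is very nearly tight. If one replaced $16$ by a general $K$, the analogous check reduces to $\ln K < 3(K-1)/K$, which narrowly fails once $K \gtrsim 16.8$. So the hypothesis $a \le 16 b$ is calibrated to make this convexity-plus-boundary strategy just barely succeed; significantly enlarging the ratio $a/b$ would require a genuinely different argument.
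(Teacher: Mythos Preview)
Your proof is correct and is essentially the same strategy as the paper's---a convexity/concavity reduction to a boundary check that terminates in the numerical inequality $45 > 64\ln 2$ (equivalently, the paper's $45/16 > \ln 16$). The only difference is the order of the reductions: the paper fixes $(a,b)$, shows the difference is concave in $\ep$, reduces to the derivative at $\ep=1$, and then analyzes that expression as a (concave) function of $b$; you instead normalize $b=1$, show convexity in $a$, reduce to $a=16$, and then analyze concavity in $\ep$. Both routes are short and end at the same numerical check, so neither offers a real advantage over the other.
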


\begin{proof}
  Fix $a$ and $b$. Let
  $f(\ep) = \Enteps\lrb{a, b} - 2(1-\ep)(a-b)$.
  We see that $f(1) = 0$, so we only need to prove that $f$ is non-decreasing for $\ep$ in the interval $[0, 1]$.
  Calculating the derivatives of $f$ over $\ep$ we get
  $$\frac{d}{d\ep}f(\ep) = -a\ln a + b\ln b + 3(a-b) + (a-b)\ln\p{(1-\ep)a+\ep b},$$
  and
  $$\frac{d^2}{d\ep^2}f(\ep) = -\frac{(a-b)^2}{(1-\ep)a+\ep b}\le 0.$$
  Thus $f$ is concave. To prove $\frac{d}{d\ep}f(\ep)\ge 0$ for $\ep\in[0,1]$, we only need $\frac{d}{d\ep}f(1)\ge 0$, which is
  $$-a\ln a + 3(a-b) + a\ln b\ge 0$$
  for any $16b\ge a\ge b$.

  Now we only fix on $a$ and define the function $g: [a/16, a]\to\mathbb{R}$ as
  $$g(b) = -a\ln a + 3(a-b) + a\ln b.$$
  Again we see that $g(b)$ is concave because
  $\frac{d^2}{db^2}g(b) = -\frac{1}{b^2}\le 0$.
  To prove $g(b)\ge 0$ in the range $[a/16, a]$, it suffices to check
  \begin{itemize}
    \item $g(a) = -a\ln a+3(a-a)+a\ln a = 0$.
    \item $g(a/16) = \frac{45}{16}a - a\ln 16 = a\cdot \p{\frac{45}{16}-\ln 16}\ge 0$ since $\ln 16\approx2.77< \frac{45}{16} = 2.8125$.
  \end{itemize}
\end{proof}

\begin{lemma}\label{app:lemma:partialtracenormd}
  Let $A\in\Matrix_2\times\Matrix_n$ be positive semi-definite.
  Then for $q\ge 1$,
  \begin{equation}\label{app:eq:partialtracenormd}
    \tr\p{\tr_1\lrb{A}}^q\le\tr A^q\le 4^q\tr\p{\tr_1\lrb{A}}^q.
  \end{equation}
\end{lemma}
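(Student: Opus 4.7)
Write $A$ as a $2\times 2$ block matrix with $n\times n$ blocks
\[
A = \begin{pmatrix} A_{11} & A_{12} \\ A_{12}^\dagger & A_{22} \end{pmatrix},
\]
where $A_{11},A_{22}\ge 0$ since $A\ge 0$. Then $\Tr_1 A = A_{11}+A_{22}$, so the normalized partial trace is $\tr_1 A = \tfrac{1}{2}(A_{11}+A_{22})$, and everything reduces to comparing $\Tr A^q$ with $\Tr(A_{11}+A_{22})^q$ (accounting for the dimension factors $1/(2n)$ and $1/n$ in the two normalized traces).

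For the lower bound, I will use the pinching map $\Phi(A) = \mathrm{diag}(A_{11},A_{22})$. Two standard facts suffice: first, convexity of $X\mapsto \Tr X^q$ on the PSD cone for $q\ge 1$ (via Peierls/Klein), which gives $\Tr(\tfrac{A_{11}+A_{22}}{2})^q \le \tfrac{1}{2}(\Tr A_{11}^q + \Tr A_{22}^q) = \tfrac{1}{2}\Tr\Phi(A)^q$; second, the pinching inequality $\Tr\Phi(A)^q \le \Tr A^q$ for $q\ge 1$. Chaining these and rewriting with normalized traces yields $\tr(\tr_1 A)^q \le \tr A^q$.

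For the upper bound, the key step is the operator inequality $A\le 2\Phi(A)$. This follows by conjugating $A$ by the block-diagonal unitary $\mathrm{diag}(-I_n,I_n)$, which preserves positivity and produces $\begin{pmatrix} A_{11} & -A_{12}\\ -A_{12}^\dagger & A_{22}\end{pmatrix}\ge 0$; averaging this with $A$ itself gives $\Phi(A)\ge \tfrac{1}{2}A$. Since Weyl's monotonicity implies $X\le Y \Rightarrow \Tr X^q\le \Tr Y^q$ for PSD $X,Y$ and $q\ge 1$, we obtain
\[
\Tr A^q \le 2^q \Tr \Phi(A)^q = 2^q\bigl(\Tr A_{11}^q + \Tr A_{22}^q\bigr) \le 2^{q+1}\Tr(A_{11}+A_{22})^q,
\]
where the last inequality uses $A_{ii}\le A_{11}+A_{22}$ together with the same monotonicity. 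Converting back to normalized traces (the factor $\frac{1}{2n}$ on the left and $\frac{1}{n}$ on the right, combined with $A_{11}+A_{22}=2\tr_1 A$) exactly produces the constant $4^q$.

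There is no real obstacle here; the statement is essentially a repackaging of the pinching inequality plus Weyl monotonicity. The only subtlety is bookkeeping of normalization factors, which is what turns the ``natural'' constant $2^{q+1}$ on unnormalized traces into $4^q$ on normalized traces.
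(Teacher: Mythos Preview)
Your proof is correct and, in substance, tracks the paper's argument closely. The paper dispatches the lower bound by citing Rastegin's result and the upper bound by citing Zhang's operator inequality $2\Tr_1 A\otimes\id_2\succeq A$; you instead give both steps from scratch. Your unitary-conjugation trick $A\le 2\Phi(A)$, combined with $\Phi(A)\le (A_{11}+A_{22})\otimes\id_2$, is exactly a direct proof of Zhang's inequality, so the upper bound is the same idea made self-contained. Likewise, your pinching-plus-convexity argument for the lower bound is the standard elementary route to the Rastegin-type inequality. The bookkeeping on normalization factors is done correctly. In short: same approach, but your version is more elementary and avoids the external citations.
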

\begin{proof}
  The inequality $\tr\p{\tr_1\lrb{A}}^q\le\tr A^q$ is Proposition 1 in \cite{rastegin2012relations} in disguise.
  The inequality $\tr A^q\le 4^q\tr\p{\tr_1\lrb{A}}^q$ is a simple corollary of the inequality by Zhang~\cite{ZHANG2019258},
  who as Proposition 3.4 proved that
  $$4\tr_1\lrb{A}\otimes\id_2 = 2\Tr_1\lrb{A}\otimes \id_2\succeq A.$$
  Then
  $$4^q\p{\tr_1\lrb{A}}^q\otimes\id_2\succeq A^q$$
  and our lemma follows.
\end{proof}

\section{QEC hypercontractivity implies quantum depolarizing channel hypercontractivity}\label{appendix-proof-to-depolarizing-channel}
\begin{definition}[quantum depolarizing channel]
  Let $0\le\rho\le1$, the quantum depolarizing channel $\Delta_\rho$ is defined as
  \begin{equation}
  \Delta_\rho(X)\defeq\rho X+(1-\rho)\tr X\cdot \id_2.
  \end{equation}
\end{definition}
\begin{lemma}
  For any $p$ and $q$, suppose \cref{equation-qec-hc-introduction-d} holds for any matrix $X$.
  Then for any $Y$ of \goodform, it holds that
  \begin{equation}\label{appendix-equation-hiqx1}
    \abs{2^{-n}\Tr \lrb{Y^\dagger\QECN(X)}}\le\normthree{X}_p\p{2^{-n}\Tr\lrb{\PiepN \abs{Y}^{q^*}}}^{1/q^*},
  \end{equation}
  \begin{equation}\label{appendix-equation-hiqx2}
    \p{2^{-n}\Tr\abs{{\QECN}^\dagger(Y)}^{p^*}}^{1/p^*}\le \p{2^{-n}\Tr\lrb{\PiepN \abs{Y}^{q^*}}}^{1/q^*},
  \end{equation}
  where $p^*$ and $q^*$ are the \Holder\ conjugates of $p$ and $q$, respectively.
  Moreover, \cref{equation-qec-hc-introduction-d}, \cref{appendix-equation-hiqx1}, \cref{appendix-equation-hiqx2} are all equivalent to each other.
\end{lemma}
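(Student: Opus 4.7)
The plan is to establish the three-way equivalence of \cref{equation-qec-hc-introduction-d} (call it (A)), \cref{appendix-equation-hiqx1} (call it (B)), and \cref{appendix-equation-hiqx2} (call it (C)) via H\"older-type duality on the \goodform\ subspace. The core observation is that $\PiepN$ acts as the positive scalar $(1-\ep)^{n-|S|}(2\ep)^{|S|}$ on each erasure-pattern block indexed by $S\subseteq[n]$, so the functional $Y\mapsto(2^{-n}\Tr[\PiepN\abs{Y}^{q^*}])^{1/q^*}$ on \goodform\ operators is a weighted Schatten norm, and is the H\"older conjugate of $\normthree{\cdot}_{\ep,q}$ under the pairing $\langle A,B\rangle:=2^{-n}\Tr[A^\dagger B]$.

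For (A) $\Rightarrow$ (B) I would write $\QECN(X) = \PiepN\cdot\QECRN(X)$ and use that both $Y$ and $\QECRN(X) = \sum_{S\subseteq[n]}(\tr_S X)\otimes\Err^{\otimes S}$ lie in the \goodform\ $*$-subalgebra and commute with $\PiepN$. Pulling out the scalar action of $\PiepN$ on each block, the inner product splits as
\begin{equation*}
2^{-n}\Tr[Y^\dagger\QECN(X)] = \sum_{S\subseteq[n]}(1-\ep)^{n-\abs{S}}\ep^{\abs{S}}\cdot 2^{-(n-\abs{S})}\Tr[Y_{S^c}^\dagger(\tr_S X)].
\end{equation*}
Applying matrix H\"older inside each block and then sequence H\"older with the Bernoulli weights yields $\abs{2^{-n}\Tr[Y^\dagger\QECN(X)]}\le(2^{-n}\Tr[\PiepN\abs{Y}^{q^*}])^{1/q^*}\cdot\normthree{X}_{\ep,q}$; invoking (A) then finishes (B).

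For (B) $\Rightarrow$ (A) I would plug in a H\"older-equality witness. Let $\QECRN(X) = U\abs{\QECRN(X)}$ be the polar decomposition, and set $Y := U\abs{\QECRN(X)}^{q-1}$. Then $Y^\dagger\QECRN(X) = \abs{\QECRN(X)}^q$ gives $2^{-n}\Tr[Y^\dagger\QECN(X)] = \normthree{X}_{\ep,q}^q$, while $\abs{Y}^{q^*} = \abs{\QECRN(X)}^{(q-1)q^*} = \abs{\QECRN(X)}^q$ gives $(2^{-n}\Tr[\PiepN\abs{Y}^{q^*}])^{1/q^*} = \normthree{X}_{\ep,q}^{q-1}$, so (B) specializes to $\normthree{X}_{\ep,q}\le\normthree{X}_p$. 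The equivalence (B) $\iff$ (C) I would prove through the adjoint identity $\Tr_{3^n}[Y^\dagger\QECN(X)] = \Tr_{2^n}[{\QECN}^\dagger(Y)^\dagger X]$ together with the standard Schatten H\"older duality $\normthree{A}_{p^*} = \sup\{\abs{2^{-n}\Tr[A^\dagger X]}:\normthree{X}_p\le 1\}$ on $\Matrix_{2^n}$, taking or undoing the supremum over $X$.

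The main technical subtlety is checking that the witness $Y = U\abs{\QECRN(X)}^{q-1}$ really is of \goodform. This reduces to the structural fact that the collection $\{\sum_S Y_{S^c}\otimes\Err^{\otimes S}\}$ is a unital $*$-subalgebra of $\Matrix_{3^n}$, because the projections $\id^{\otimes(n-\abs{S})}\otimes\Err^{\otimes S}$ are mutually orthogonal and sum to $\id_{3^n}$. Consequently this subalgebra is closed under continuous functional calculus and polar decomposition, so $U$ and $\abs{\QECRN(X)}^{q-1}$ are individually of \goodform\ and hence so is their product. Everything else is routine H\"older bookkeeping.
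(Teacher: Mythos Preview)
Your proposal is correct and follows essentially the same H\"older-duality argument as the paper. The only cosmetic difference is in (B)$\Rightarrow$(A): the paper invokes the duality formula \eqref{preliminary-Holder-2} for the weighted $q$-norm and then appeals to \cref{lemma-optimal-Q-good-form-d} to restrict the maximizer to \goodform, whereas you construct the H\"older witness $Y=U\abs{\QECRN(X)}^{q-1}$ explicitly and justify it by observing that the \goodform\ operators form a unital $*$-subalgebra closed under functional calculus---these are two presentations of the same equality case in H\"older's inequality.
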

\begin{proof}
  \begin{itemize}
  \item \cref{equation-qec-hc-introduction-d} $\implies$ \cref{appendix-equation-hiqx1}
  \begin{align*}
    \abs{2^{-n}\Tr \lrb{Y^\dagger\QECN(X)}} &= \abs{2^{-n}\Tr\lrb{\PiepN Y^\dagger\QECRN(X)}}  \\
    &\le \p{2^{-n}\Tr\lrb{\PiepN\abs{\QECRN(X)}^q}}^{1/q}\p{2^{-n}\Tr\lrb{\PiepN \abs{Y}^{q^*}}}^{1/q^*}\\
    &\le \normthree{X}_p\p{2^{-n}\Tr\lrb{\PiepN \abs{Y}^{q^*}}}^{1/q^*}.
  \end{align*}
  \item \cref{appendix-equation-hiqx1} $\implies$ \cref{equation-qec-hc-introduction-d}
  \begin{align*}
  &\left(2^{-n}\Tr\left[\PiepN\abs{\QECRN(X)}^q\right]\right)^{1/q}\\
    =& \max_{Y}\set{\abs{2^{-n}\Tr \lrb{Y^\dagger \p{\PiepN}^{1/q}\QECRN(X)}}: Y\in\Matrix_{3^n}, \normthree{Y}_{q^*}\le1} \\
    =& \max_{Y}\set{\abs{2^{-n}\Tr \lrb{Y^\dagger \PiepN\QECRN(X)}}: Y\in\Matrix_{3^n}, \p{2^{-n}\Tr\lrb{\PiepN \abs{Y}^{q^*}}}^{1/q^*}\le 1} \\
    \le&\normthree{X}_p.
  \end{align*}
  where the maximizations are over $Y$ of \goodform\ and
  the first equality follows from \cref{preliminary-Holder-2} and \cref{lemma-optimal-Q-good-form-d}.
  \item \cref{appendix-equation-hiqx2} $\implies$ \cref{appendix-equation-hiqx1}
  \begin{align*}
    \abs{2^{-n}\Tr \lrb{Y^\dagger\QECN(X)}} &= \abs{2^{-n}\Tr \lrb{{\QECN}^\dagger(Y^\dagger)X }} \\
    &\le \normthree{X}_p\p{2^{-n}\Tr\abs{{\QECN}^\dagger(Y)}^{p^*}}^{1/p^*}\\
    &\le \normthree{X}_p\p{2^{-n}\Tr\lrb{\PiepN \abs{Y}^{q^*}}}^{1/q^*}.
  \end{align*}
  \item \cref{appendix-equation-hiqx1} $\implies$ \cref{appendix-equation-hiqx2}
  \begin{align*}
    \p{2^{-n}\Tr\abs{{\QECN}^\dagger(Y)}^{p^*}}^{1/p^*} &= \max_X\set{\abs{2^{-n}\Tr \lrb{X^\dagger{\QECN}^\dagger(Y)}}: \normthree{X}_p\le 1} \\
      &= \max_X\set{\abs{2^{-n}\Tr \lrb{\QECN(X^\dagger) Y}}: \normthree{X}_p\le 1} \\
      &\le \normthree{X}_p\p{2^{-n}\Tr\lrb{\PiepN \abs{Y}^{q^*}}}^{1/q^*} \\
      &\le \p{2^{-n}\Tr\lrb{\PiepN \abs{Y}^{q^*}}}^{1/q^*}.
  \end{align*}
  where the first equality is \cref{preliminary-Holder-2}.
  \end{itemize}
\end{proof}

\begin{lemma}\label{appendix-depolarizing-final}
  Suppose there exist parameters $\ep, p, q, r$ such that \cref{equation-qec-hc-introduction-d} holds for
  \begin{itemize}
    \item $q\gets r^*$ and $p\gets q^*$ and
    \item $q\gets r$ and $p\gets p$,
  \end{itemize}
  where $r^*$ and $q^*$ are the \Holder\ conjugates of $r$ and $q$, respectively.
  Then for any matrix $X$, 
  for $\rho\leq1-\ep$, we have the hypercontractivity inequality
  \begin{equation}\label{appendix-depolarizing-channel-dda}
    \normthree{\Delta_\rho^{\otimes n}(X)}_q\le\normthree{X}_p.
  \end{equation}
\end{lemma}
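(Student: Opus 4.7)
The plan is to reduce the depolarizing-channel hypercontractivity to the QEC hypercontractivity via a clean channel factorization. The crucial algebraic observation to establish first is that
\begin{equation}
\Delta_{1-\ep} \;=\; \QEC^\dagger \circ \QECR,
\end{equation}
which can be verified by direct computation: $\QEC^\dagger(\QECR(X)) = \QEC^\dagger\!\begin{pmatrix}X & \\ & \tr(X)\end{pmatrix} = (1-\ep)X + \ep\,\tr(X)\,\id_2 = \Delta_{1-\ep}(X)$, and this tensorizes to $\Delta_{1-\ep}^{\otimes n} = \QECN^\dagger\circ\QECRN$. For general $\rho\le 1-\ep$, set $\rho'=\rho/(1-\ep)\in[0,1]$ and use the composition law $\Delta_\rho=\Delta_{1-\ep}\circ\Delta_{\rho'}$ to obtain
\begin{equation}
\Delta_\rho^{\otimes n} \;=\; \QECN^\dagger \circ \QECRN \circ \Delta_{\rho'}^{\otimes n}.
\end{equation}

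With this factorization in hand, I would chain three contractivity statements. First, the hypothesis on parameters $(q\to r^*, p\to q^*)$ gives, via the equivalent form \cref{appendix-equation-hiqx2}, the bound $\normthree{\QECN^\dagger(Y)}_q\le(2^{-n}\Tr[\PiepN|Y|^r])^{1/r}$; applying this with $Y=\QECRN(\Delta_{\rho'}^{\otimes n}(X))$ produces
\begin{equation}
\normthree{\Delta_\rho^{\otimes n}(X)}_q \;\le\; \normthree{\Delta_{\rho'}^{\otimes n}(X)}_{\ep, r},
\end{equation}
where I recognise the right-hand side from \cref{def:epqnorm}. Second, the hypothesis for $(q\to r, p\to p)$ is exactly the QEC hypercontractivity $\normthree{Z}_{\ep,r}\le \normthree{Z}_p$ for all $Z$, which applied at $Z=\Delta_{\rho'}^{\otimes n}(X)$ yields $\normthree{\Delta_{\rho'}^{\otimes n}(X)}_{\ep,r}\le\normthree{\Delta_{\rho'}^{\otimes n}(X)}_p$. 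Third, because $\Delta_{\rho'}$ is a dimension-preserving unital CPTP map, the tensor power is likewise unital and CPTP, hence contractive in every normalized Schatten norm (Russo--Dye plus the dimension matching in the normalization $\normthree{\cdot}_p = d^{-1/p}\normsub{\cdot}_p$), so $\normthree{\Delta_{\rho'}^{\otimes n}(X)}_p\le\normthree{X}_p$. Composing these three inequalities delivers \cref{appendix-depolarizing-channel-dda}.

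The main conceptual obstacle is simply spotting the factorization $\Delta_{1-\ep}=\QEC^\dagger\circ\QECR$, which is the one non-routine step; verifying it requires careful bookkeeping between the normalized and unnormalized traces in the definitions of $\QEC$, $\QECR$ and $\QEC^\dagger$, and between the image ($\Matrix_{3^n}$) of $\QECR$ and the domain of $\QECN^\dagger$. Everything else is a tidy application of the two hypothesized hypercontractivity inequalities together with the standard unital-channel contractivity; no new analytic input is needed, and the constraint $\rho\le 1-\ep$ enters precisely to guarantee $\rho'\in[0,1]$ so that $\Delta_{\rho'}$ remains a valid (unital) quantum channel.
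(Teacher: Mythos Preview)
Your argument is correct and follows the same core route as the paper: the factorization $\Delta_{1-\ep}=\QEC^\dagger\circ\QECR$, then the dual form \eqref{appendix-equation-hiqx2} of the first hypothesis to pass from the $q$-norm to the $(\ep,r)$-norm, followed by the second hypothesis to land on $\normthree{\cdot}_p$.

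The only difference is that you are more careful about the case $\rho<1-\ep$. The paper simply writes $\Delta_\rho^{\otimes n}={\QECN}^\dagger\circ\QECRN$ and proceeds; that identity is literally correct only at $\rho=1-\ep$, so strictly speaking the paper's displayed proof establishes the endpoint case. Your extra step---writing $\Delta_\rho=\Delta_{1-\ep}\circ\Delta_{\rho'}$ with $\rho'=\rho/(1-\ep)\in[0,1]$ and then invoking $p$-norm contractivity of the unital, trace-preserving (indeed mixed-unitary) map $\Delta_{\rho'}^{\otimes n}$---cleanly closes that gap. This is a harmless and standard addendum, but it does make your writeup more complete than the paper's on this point.
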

\begin{proof}
  Note that
  $$\Delta_\rho^{\otimes n}(X) = {\QECN}^\dagger\p{\QECRN(X)}.$$
  So we have
  \begin{align*}
    \normthree{\Delta_\rho^{\otimes n}(X)}_q &= \p{2^{-n}\Tr\abs{{\QECN}^\dagger(\QECRN(X))}^{q}}^{1/q}\\
      &\le\p{2^{-n}\Tr\lrb{\PiepN \abs{\QECRN(X)}^{r}}}^{1/r} \\
      &\le \normthree{X}_p.
  \end{align*}
\end{proof}
\begin{cor}
  Suppose \cref{equation-qec-hc-introduction-d} holds for all $1-\ep\le\left(\frac{p-1}{q-1}\right)^c$ and any matrix $X$, then
  \cref{appendix-depolarizing-channel-dda} holds for all $\rho=1-\ep\le\left(\frac{p-1}{q-1}\right)^{c/2}$ and any matrix $X$.
\end{cor}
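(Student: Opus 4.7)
The plan is to invoke \cref{appendix-depolarizing-final} with a carefully chosen intermediate exponent $r$, and then verify that the resulting hypothesis on the erasure parameter $\ep$ collapses to the geometric-mean condition $1-\ep\le((p-1)/(q-1))^{c/2}$. The key observation is that \cref{appendix-depolarizing-final} requires the QEC hypercontractive inequality to hold simultaneously for two pairs of exponents, namely $(p\gets q^*, q\gets r^*)$ and $(p\gets p, q\gets r)$. Using the identity $x^*-1 = 1/(x-1)$ for $x>1$, we have $(q^*-1)/(r^*-1) = (r-1)/(q-1)$, so the two sufficient conditions given by the assumed QEC hypercontractivity (with parameter $c$) become
\begin{equation*}
  1-\ep \le \left(\frac{r-1}{q-1}\right)^c \qquad \text{and} \qquad 1-\ep \le \left(\frac{p-1}{r-1}\right)^c.
\end{equation*}

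The natural choice to balance these two conditions is the geometric mean: set $r-1 = \sqrt{(p-1)(q-1)}$, equivalently $r = 1+\sqrt{(p-1)(q-1)}$. Under this choice both ratios equal $\sqrt{(p-1)/(q-1)}$, so the two requirements merge into the single condition $1-\ep \le ((p-1)/(q-1))^{c/2}$, which is exactly the hypothesis of the corollary. Before applying the lemma, I would briefly verify the admissibility constraints: since $1\le p\le q$, the identity $(r-1)^2 = (p-1)(q-1)$ forces $p-1\le r-1\le q-1$, hence $1\le p\le r\le q$, so both applications of the QEC inequality lie in the valid range $q\ge p\ge 1$ required by \cref{equation-qec-hc-introduction-d}. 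The degenerate endpoints ($p=1$, or $p=q$) can be handled either by continuity or by noting that in these cases the target inequality is either trivial or reduces to the QEC statement directly.

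With the choice $r = 1+\sqrt{(p-1)(q-1)}$ and the standing hypothesis of the corollary, both premises of \cref{appendix-depolarizing-final} are satisfied, and the lemma immediately yields $\normthree{\Delta_\rho^{\otimes n}(X)}_q \le \normthree{X}_p$ for every matrix $X$ whenever $\rho = 1-\ep \le ((p-1)/(q-1))^{c/2}$, as claimed. I do not anticipate any genuine obstacle here: the corollary is essentially a clean algebraic consequence of \cref{appendix-depolarizing-final}, with the only nontrivial step being the identification of the balancing exponent $r$ via the \Holder\ conjugate identity $(x^*-1)(x-1)=1$, after which everything reduces to routine bookkeeping.
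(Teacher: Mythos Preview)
Your proof is correct and takes essentially the same approach as the paper: both invoke \cref{appendix-depolarizing-final} with the balancing intermediate exponent $r$ determined by the geometric mean $r-1=\sqrt{(p-1)(q-1)}$. The paper writes $r=1+(p-1)/(1-\ep)^{1/c}$, which coincides with your choice at the boundary $1-\ep=((p-1)/(q-1))^{c/2}$; your parameterization in terms of $p,q$ alone is slightly cleaner since it transparently covers the full range $1-\ep\le((p-1)/(q-1))^{c/2}$ without appealing to the boundary case.
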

\begin{proof}
  Let $r=1+(p-1)/(1-\ep)^{1/c}$, then we can check that
  $$\p{\frac{r-1}{q-1}}^c = 1-\ep,$$
  and
  $$\p{\frac{p-1}{r-1}}^c = 1-\ep.$$
  Then we can apply \cref{appendix-depolarizing-final} with the parameters $\ep, p, q, r$ to prove \cref{appendix-depolarizing-channel-dda}.
\end{proof}

\section{Proof of Hypercontractivity for \texorpdfstring{$1\le p\le 2\le q$}{1 ≤ p ≤ 2 ≤ q}}\label{appendix-proof-HC-1p2q}
\begin{lemma}[\cite{2003CMaPh.242..531K}]\label{appendix-89fh89aedf}
  Let $M$ be a $2n\times2n$ positive semidefinite matrix.
  It can be written in the block form
  \begin{equation}
    M=\begin{pmatrix}X&Y\\Y^\dagger&Z\end{pmatrix}
  \end{equation}
  Define the $2\times 2$ matrix
  \begin{equation}
    m_p=\begin{pmatrix}\normsub{X}{p}&\normsub{Y}{p}\\\normsub{Y}{p}&\normsub{Z}{p}\end{pmatrix}
  \end{equation}
  Then $m_p$ is positive semidefinite and
  \begin{itemize}
    \item for $1\le p\le 2$
      \begin{equation}
        \normsub{M}{p}\ge\normsub{m_p}{p}
      \end{equation}
    \item for $2\le p\le\infty$
      \begin{equation}
        \normsub{M}{p}\le\normsub{m_p}{p}
      \end{equation}
  \end{itemize}
\end{lemma}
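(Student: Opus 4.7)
The plan is to separate the statement into the positive-semidefiniteness claim for $m_p$ and the two norm comparisons, and to reduce the norm comparisons to endpoint cases via complex interpolation and Schatten duality.

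First I would establish that $m_p \succeq 0$. Since $M$ is PSD, the block form admits the standard contraction factorization $Y = X^{1/2} K Z^{1/2}$ for some $K$ with $\|K\|_\infty \leq 1$. Applying Schatten H\"older in the form $\|ABC\|_p \leq \|A\|_{2p} \|B\|_\infty \|C\|_{2p}$ and using $\|X^{1/2}\|_{2p} = \|X\|_p^{1/2}$ gives $\|Y\|_p \leq (\|X\|_p \|Z\|_p)^{1/2}$. Combined with the non-negative diagonal entries this forces $\det m_p \geq 0$, hence $m_p \succeq 0$. Next, direct Frobenius computation yields equality at $p=2$:
\begin{equation*}
\|M\|_2^2 \;=\; \|X\|_2^2 + 2\|Y\|_2^2 + \|Z\|_2^2 \;=\; \|m_2\|_2^2,
\end{equation*}
which fixes the common center of the two regimes.

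For the regime $p \geq 2$, my plan is to first establish the inequality at the endpoint $p = \infty$. Using $\|M\|_\infty = \sup_{\|u\|^2 + \|v\|^2 = 1} (u,v)^\dagger M (u,v)$, bounding each of the four resulting inner products by the corresponding operator-norm term (for $u^\dagger Y v$ by Cauchy-Schwarz in the operator-norm sense) and taking the supremum over the 2-dimensional vector $(\|u\|, \|v\|)$ shows $\|M\|_\infty \leq \|m_\infty\|_\infty$. With the endpoints $p=2$ (equality) and $p=\infty$ (inequality) in hand, I would use Stein's complex interpolation theorem applied to an analytic family that mediates between the maps $M \mapsto M$ and $M \mapsto m_p$, engineered so that the three-line lemma yields precisely the log-convex bound matching $\|m_p\|_p$. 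As an alternative route, one can prove the inequality for each even integer $p = 2k$ by expanding $\Tr(M^{2k})$ as a sum over cyclic length-$2k$ words in the blocks $X, Y, Y^\dagger, Z$, bounding each word by the product of the Schatten $p$-norms of its letters via iterated H\"older together with Araki-Lieb-Thirring, and matching the result term-by-term with the expansion of $\Tr(m_p^{2k})$.

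Finally, for $p \in [1, 2]$ I would use Schatten duality, writing $\|M\|_p = \sup\{|\Tr(MN^\dagger)| : \|N\|_{p^*} \leq 1\}$ with $p^* \in [2, \infty]$, expanding the trace in block form, applying the H\"older bound $|\Tr(B_1^\dagger B_2)| \leq \|B_1\|_p \|B_2\|_{p^*}$ to each of the four block contributions, and then invoking the already-established $p^* \geq 2$ inequality on an auxiliary $2 \times 2$ matrix built from $N$'s block norms; the duality flips the inequality and yields $\|M\|_p \geq \|m_p\|_p$. The main obstacle is the interpolation step: a naive interpolation would only interpolate scalar norms, whereas here the right-hand side $\|m_p\|_p$ itself depends on $p$ through its entries (via $\|X\|_p$, $\|Y\|_p$, $\|Z\|_p$), so one must carefully construct an analytic family of $2 \times 2$ comparison matrices whose boundary behavior is compatible with the three-line lemma. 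Getting this analytic family right, rather than working at even integers only, is the crux of King's original argument.
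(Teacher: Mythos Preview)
The paper does not prove this lemma at all: it is quoted from King's norm--compression paper and used as a black box in the appendix. There is therefore no in-paper argument to compare your sketch against.

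On the merits of your sketch itself, the easy parts are fine: the factorization $Y=X^{1/2}KZ^{1/2}$ with $\|K\|_\infty\le1$ together with Schatten H\"older does give $\|Y\|_p^2\le\|X\|_p\|Z\|_p$, hence $m_p\succeq0$; the $p=2$ equality and the $p=\infty$ endpoint are also correct. The two substantive steps, however, are not yet proofs. For $p\ge2$ you never actually build the analytic family that would make the three-line lemma apply, and you yourself identify this as ``the crux''; the alternative even-integer expansion is valid at $p=2k$ but you give no mechanism for passing to non-integer $p$, and agreement on the even integers does not by itself force the inequality on the whole ray. For $p\in[1,2]$ your duality step tacitly applies the $p^*\ge2$ norm-compression bound to the block test matrix $N$ assembled from the H\"older witnesses $C_X,C_Y,C_Z$; but the lemma as stated requires the big matrix to be positive semidefinite, and your $N$ is generally not (the off-diagonal witness $C_Y$ is essentially a partial isometry, not a PSD operator), so the appeal to the $p^*\ge2$ case is not justified as written. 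King's original argument avoids both of these obstructions by working directly with operator-concavity/Lieb--Thirring type trace inequalities rather than interpolation or duality, which is why those issues do not arise there.
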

\begin{lemma}\label{appedix-4ch7843hh3}
  Let $X, Y$ be two $2\times 2$ real valued positive semidefinite matrices such that for all $i, j\in\set{1,2}$
  \begin{equation}
    X_{ij} \le Y_{ij}
  \end{equation}
  then for any Schatten $p$-norm $\norm{\cdot}$ we have
  \begin{equation}
    \norm{X} \le \norm{Y}
  \end{equation}
\end{lemma}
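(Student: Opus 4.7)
My strategy is to prove $\|X\|_p \le \|Y\|_p$ by constructing a continuous path of $2\times 2$ positive semi-definite matrices from $X$ to $Y$ along which the Schatten $p$-norm is monotone non-decreasing. A preliminary observation: in the intended application (namely, when $X$ and $Y$ arise as the matrix $m_p$ from Lemma~\ref{appendix-89fh89aedf} for two different block matrices) the off-diagonal entries are themselves Schatten norms, hence non-negative, so I would work under the standing assumption $X_{12}, Y_{12} \ge 0$ that the application supplies.

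I would split the path into two legs. In the first leg, linearly interpolate only the diagonal entries, going from $X$ to $X' := \begin{pmatrix} Y_{11} & X_{12} \\ X_{12} & Y_{22} \end{pmatrix}$ while holding the off-diagonal fixed at $X_{12}$. Positive semi-definiteness is preserved since $X_{12}^2 \le X_{11}X_{22} \le tv$ for every $(t,v) \in [X_{11},Y_{11}] \times [X_{22},Y_{22}]$. In the second leg, linearly interpolate the off-diagonal from $X_{12}$ to $Y_{12}$ with diagonals fixed at $(Y_{11}, Y_{22})$; positive semi-definiteness holds because $u^2 \le Y_{12}^2 \le Y_{11}Y_{22}$ throughout, using that $Y$ is PSD.

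What remains is to verify monotonicity of $\|\cdot\|_p$ along each leg by differentiating the explicit eigenvalue formula $\lambda_\pm(t,u,v) = \tfrac{1}{2}(t+v) \pm \tfrac{1}{2}\sqrt{(t-v)^2 + 4u^2}$ for $\begin{pmatrix} t & u \\ u & v \end{pmatrix}$. Both $\partial \lambda_\pm / \partial t$ and $\partial \lambda_\pm / \partial v$ are non-negative, so $\partial_t(\lambda_+^p + \lambda_-^p) \ge 0$ and similarly for $v$ (using $\lambda_\pm \ge 0$ and $p \ge 1$). For the off-diagonal direction, $\partial \lambda_+/\partial u = 2u/\sqrt{(t-v)^2+4u^2} = -\partial \lambda_-/\partial u$, so
\[
\frac{\partial}{\partial u}\bigl(\lambda_+^p + \lambda_-^p\bigr) \;=\; \frac{2pu}{\sqrt{(t-v)^2 + 4u^2}}\bigl(\lambda_+^{p-1} - \lambda_-^{p-1}\bigr) \;\ge\; 0,
\]
since $u \ge 0$, $p \ge 1$, and $\lambda_+ \ge \lambda_- \ge 0$. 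Concatenating the two legs gives $\|X\|_p \le \|X'\|_p \le \|Y\|_p$. The only genuine delicacy is the sign of the off-diagonals: the step bounding $\partial_u(\lambda_+^p+\lambda_-^p)$ uses $u \ge 0$, and if the off-diagonal were allowed to change sign along the path the argument would need to be rerouted (for instance, by first conjugating by $\operatorname{diag}(1, \pm 1)$ to make both off-diagonals non-negative, which preserves eigenvalues and hence all Schatten norms). In the context where this lemma is invoked, however, this is a non-issue.
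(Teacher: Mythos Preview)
Your proof is correct and shares the same two-leg decomposition as the paper's: both pass through the intermediate matrix $X' = \begin{pmatrix} Y_{11} & X_{12} \\ X_{12} & Y_{22} \end{pmatrix}$. The difference lies in how each leg is verified. For the first leg (raising the diagonals), the paper simply observes that $X' - X$ is a nonnegative diagonal matrix, so $X \preceq X'$ in the Loewner order and the norm inequality is immediate; you instead differentiate the eigenvalue formula. For the second leg (raising the off-diagonal), the paper notes that $X'$ and $Y$ have equal trace while $\det Y \le \det X'$, so the eigenvalue pair of $Y$ majorizes that of $X'$ and Schur-convexity of $\ell_p$ finishes; you again use calculus on the explicit eigenvalues. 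Both arguments need the off-diagonals to be nonnegative (otherwise the lemma is false, e.g.\ $X=\bigl(\begin{smallmatrix}2&-2\\-2&2\end{smallmatrix}\bigr)$, $Y=\bigl(\begin{smallmatrix}2&0\\0&2\end{smallmatrix}\bigr)$), and you are right to flag this assumption explicitly. The paper's route is a bit cleaner and conceptually extensible (Loewner order and majorization work in any dimension), while your route is self-contained and avoids invoking Schur-convexity.
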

\begin{proof}
  Define the intermediate matrix
  \begin{equation}
    Z=\begin{pmatrix}Y_{11} & X_{12}\\X_{12} & Y_{22}\end{pmatrix} = X + \begin{pmatrix}Y_{11}-X_{11}&\\&Y_{22}-X_{22}\end{pmatrix}
  \end{equation}
  it is clear that $X\preceq Z$ and thus $\norm{X}\le\norm{Z}$.
  We only need to prove $\norm{Z}\le\norm{Y}$.

  Let $\lambda_1\ge\lambda_2\ge0$ be the eigenvalues of $Y$
  and $\gamma_1\ge\gamma_2\ge0$ be the eigenvalues of $Z$.
  Since $\Tr Y = \Tr Z$, we have $\lambda_1+\lambda_2 = \gamma_1+\gamma_2$.
  Since $\det(Y)\le\det(Z)$, we have $\lambda_1\cdot\lambda_2 \le \gamma_1\cdot\gamma_2$.
  Thus $\set{\lambda_1, \lambda_2}$ majorises $\set{\gamma_1, \gamma_2}$ and thus $\norm{Y}\ge\norm{Z}$.
\end{proof}
\begin{lemma}\label{Lemma-QEC-HC-Induction-2}
  Let $A\in\Matrix_{2^n}$, then for all $1\le p\le 2\le q$ and $0\le\ep\le 1$ satisfying $1-\ep\le\frac{p-1}{q-1}$,
  \begin{equation}
    \left(\Tr \lrb{\PiepN\abs{\QECRN(X)}^q}\right)^{1/q}\le\normsub{X}{p}\cdot 2^{n\cdot\p{\frac{1}{q}-\frac{1}{p}}}.
  \end{equation}
\end{lemma}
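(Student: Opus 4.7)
The plan is to prove the lemma by induction on $n$, following the framework of King's hypercontractivity proof for unital qubit channels~\cite{king2014hypercontractivity} adapted to the QEC setting with the erasure-noise weight $\PiepN$. First, an analogue of Watrous' argument (essentially \cref{hyper:lemma:psd}) reduces to the case where $X \succeq 0$. For the base case $n=1$, unitary invariance of $\QECR$ on its input (since $\Pi_\ep = (1-\ep)\id_2 \oplus 2\ep$ commutes with $U \oplus 1$) lets us assume $X = \diag(\mu_1, \mu_2)$ with $\mu_1, \mu_2 \ge 0$, and the desired inequality becomes
\[
(1-\ep)(\mu_1^q + \mu_2^q) + 2\ep\p{\tfrac{\mu_1+\mu_2}{2}}^q \le 2^{1-q/p}(\mu_1^p + \mu_2^p)^{q/p},
\]
which is exactly the classical two-point BEC hypercontractivity~\cite{7541363}, valid under $1-\ep \le (p-1)/(q-1)$.

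For the inductive step, split $X$ into a $2\times 2$ block matrix on the first qubit: $X = \begin{pmatrix} X_{00} & X_{01} \\ X_{10} & X_{11} \end{pmatrix}$. A direct computation gives the block-diagonal decomposition $\QECRN(X) = M \oplus \QECR^{\otimes(n-1)}(\tr_1 X)$, where $M$ is the $2\times 2$ block matrix whose $(i,j)$ entry is $A_{ij} := \QECR^{\otimes(n-1)}(X_{ij})$ and $\tr_1 X = (X_{00}+X_{11})/2$. Matching $\PiepN = \Pi_\ep\otimes\Pi_\ep^{\otimes(n-1)}$ against this structure yields
\[
\Tr\lrb{\PiepN|\QECRN(X)|^q} = (1-\ep)\Tr\lrb{(\id_2\otimes\Pi_\ep^{\otimes(n-1)})|M|^q} + 2\ep\cdot\Tr\lrb{\Pi_\ep^{\otimes(n-1)}|\QECR^{\otimes(n-1)}(\tr_1 X)|^q}.
\]
Setting $P := (\Pi_\ep^{1/q})^{\otimes(n-1)}$, which lies in the commutative good-form algebra and therefore commutes with every $A_{ij}$, one checks that the first trace equals $\norm{(\id_2\otimes P)M}_q^q$ and that $\norm{PA_{ij}}_q^q = \Tr\lrb{\Pi_\ep^{\otimes(n-1)}|A_{ij}|^q}$; the induction hypothesis applied to each $X_{ij}$ then gives $\norm{PA_{ij}}_q \le 2^{(n-1)(1/q-1/p)}\norm{X_{ij}}_p$.

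Now apply King's norm-compression inequality (\cref{appendix-89fh89aedf}) in the direction $q\ge 2$ to the PSD block matrix $(\id_2\otimes P)M$, obtaining $\norm{(\id_2\otimes P)M}_q \le \norm{m'_q}_q$, where $m'_q$ is the $2\times 2$ PSD matrix with entries $\norm{PA_{ij}}_q$. Combining with the entrywise induction bound and \cref{appedix-4ch7843hh3} gives $\norm{m'_q}_q \le 2^{(n-1)(1/q-1/p)}\norm{\hat m_p}_q$, where $\hat m_p$ has entries $\norm{X_{ij}}_p$. For the second term, induction plus the triangle inequality $\norm{\tr_1 X}_p \le (\norm{X_{00}}_p + \norm{X_{11}}_p)/2$ gives the analogous bound. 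Altogether, with $a := \norm{X_{00}}_p$ and $b := \norm{X_{11}}_p$,
\[
\Tr\lrb{\PiepN|\QECRN(X)|^q} \le 2^{(n-1)(1-q/p)}\Br{(1-\ep)\norm{\hat m_p}_q^q + 2\ep\p{\tfrac{a+b}{2}}^q}.
\]

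The key observation closing the induction is that, letting $\lambda_\pm$ denote the eigenvalues of $\hat m_p$, we have $\lambda_++\lambda_- = a+b$, so the bracketed expression is precisely the single-qubit hypercontractive quantity for $\diag(\lambda_+,\lambda_-)$. Invoking the base case bounds it by $2^{1-q/p}(\lambda_+^p+\lambda_-^p)^{q/p} = 2^{1-q/p}\norm{\hat m_p}_p^q$, and applying \cref{appendix-89fh89aedf} in the other direction (for $1\le p\le 2$) gives $\norm{\hat m_p}_p \le \norm{X}_p$. Multiplying out yields the desired bound $2^{n(1-q/p)}\norm{X}_p^q$. The main subtlety is to resist prematurely estimating $\norm{\hat m_p}_q \le \norm{\hat m_p}_p$ by monotonicity of Schatten norms at the inductive step, which would squander the hypercontractivity slack; instead one must preserve the two-eigenvalue structure and reapply the base case, which is precisely where the condition $1-\ep \le (p-1)/(q-1)$ is consumed at every level of the recursion.
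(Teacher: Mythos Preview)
Your proposal is correct and follows essentially the same route as the paper's own proof: reduce to PSD via \cref{hyper:lemma:psd}, use the classical BEC two-point inequality for $n=1$, and for the inductive step split $X$ into $2\times 2$ blocks, apply King's norm-compression inequality (\cref{appendix-89fh89aedf}) in the direction $q\ge 2$, bound each block's $(\ep,q)$-norm by the induction hypothesis together with the entrywise monotonicity lemma (\cref{appedix-4ch7843hh3}), recombine using the $n=1$ case, and finish with norm compression in the direction $p\le 2$. The only cosmetic differences are that you isolate the commuting weight $P=(\Pi_\ep^{1/q})^{\otimes(n-1)}$ explicitly and phrase the recombination step via the eigenvalues $\lambda_\pm$ of $\hat m_p$ (using $\lambda_++\lambda_-=a+b$), whereas the paper packages the same step as applying the $n=1$ hypothesis to $\Pi_\ep^{1/q}\QECR(\hat m_p)$ directly; these are equivalent formulations of the same argument.
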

\begin{proof}
  By \cref{hyper:lemma:psd}, we can assume $X$ to be positive semi-definite.
  For $n=1$, the inequality collapses to the classical case and has been proven by Nair and Wang~\cite{7541363}.
  For $n>1$, we can write $X$ as a block matrix
  $$X = \begin{pmatrix}X_{11} & X_{12}\\ X_{21} & X_{22}\end{pmatrix}$$
  then
  $$\left(\PiepN\right)^{1/q}\QECRN(X) = \begin{pmatrix}Z_{11} & Z_{12} & \\Z_{21} & Z_{22} & \\  & & Z_{33}\end{pmatrix}$$
  where we let $Y_{ij} = \left(\Piep^{\otimes n-1}\right)^{1/q}\QECR^{\otimes n-1}(X_{ij})$ for $i,j\in\set{1,2}$ and
  $Z_{ij}=(1-\ep)^{1/q}Y_{ij}$ for $i,j\in\set{1,2}$ and $Z_{33}=(2\ep)^{1/q}\f{Y_{11}+Y_{22}}{2}$.
  Since $q\ge2$, \cref{appendix-89fh89aedf} implies
  \begin{align*}
  \left(\Tr\lrb{\PiepN\QECRN(X)^q}\right)^{1/q} &= \p{\Tr \lrb{\begin{pmatrix}Z_{11} & Z_{12} & \\Z_{21} & Z_{22} & \\  & & Z_{33}\end{pmatrix}^q}}^{1/q} \\
      &=\normsub{\begin{pmatrix}
          \normsub{\begin{pmatrix}
              Z_{11} & Z_{12} \\ Z_{21} & Z_{22}
          \end{pmatrix}}{q} &
          \\
             & \normsub{Z_{33}}{q}
          \end{pmatrix}}{q}\\
      &\le\normsub{\begin{pmatrix}\normsub{\begin{pmatrix}\normsub{Z_{11}}{q} & \normsub{Z_{12}}{q} \\ \normsub{Z_{12}}{q} & \normsub{Z_{22}}{q}\end{pmatrix}}{q} &\\ & \normsub{Z_{33}}{q}\end{pmatrix}  }{q}  \\
  \end{align*}
  By triangle inequality we have
  $$\normsub{Z_{33}}{q} = (2\ep)^{1/q}\normsub{\f{Y_{11}+Y_{22}}{2}}{q}\le\f{(2\ep)^{1/q}}{2}\normsub{Y_{11}}{q} + \f{(2\ep)^{1/q}}{2}\normsub{Y_{22}}{q}$$
  Thus
  \begin{equation}
\left(\Tr\lrb{\PiepN\QECRN(X)^q}\right)^{1/q}\le\normsub{\Piep^{1/q}\QECR\begin{pmatrix}\normsub{Y_{11}}{q}&\normsub{Y_{12}}{q}\\\normsub{Y_{12}}{q}&\normsub{Y_{22}}{q}\end{pmatrix}}{q}
  \end{equation}
  By induction hypothesis we also have
  \begin{equation}
    \normsub{Y_{ij}}{q} = \left(\Tr\Piep^{\otimes n-1}\QECR^{\otimes n-1}(X_{ij})^{q}\right)^{1/q}\le\normsub{X_{ij}}{p}\cdot 2^{(n-1)\cdot\p{\frac{1}{q}-\frac{1}{p}}}
  \end{equation}
  thus by \cref{appedix-4ch7843hh3} and again by induction hypothesis
  \begin{multline}
\left(\Tr\lrb{\PiepN\QECRN(X)^q}\right)^{1/q}\le\normsub{\Piep^{1/q}\QECR\begin{pmatrix}\normsub{X_{11}}{p}&\normsub{X_{12}}{p}\\\normsub{X_{12}}{p}&\normsub{X_{22}}{p}\end{pmatrix}}{q}\cdot 2^{(n-1)\cdot\p{\frac{1}{q}-\frac{1}{p}}}\\
\le\normsub{\begin{pmatrix}\normsub{X_{11}}{p}&\normsub{X_{12}}{p}\\\normsub{X_{12}}{p}&\normsub{X_{22}}{p}\end{pmatrix}}{p}\cdot 2^{n\cdot\p{\frac{1}{q}-\frac{1}{p}}}
  \end{multline}
  Finally we use \cref{appendix-89fh89aedf} again for $1\le p\le 2$ which implies
  $$\normsub{\begin{pmatrix}\normsub{X_{11}}{p} & \normsub{X_{12}}{p}\\\normsub{X_{12}}{p} & \normsub{X_{22}}{p}\end{pmatrix}}{p}\le\normsub{\begin{pmatrix}X_{11} & X_{12}\\X_{12}^\dagger & X_{22}\end{pmatrix}}{p} = \normsub{X}{p}$$
  and the lemma is proved.
\end{proof}

\end{document}